\theoremstyle{plain}
\newtheorem{theorem}{Theorem}
\newtheorem{lemma}[theorem]{Lemma}
\newtheorem{corollary}[theorem]{Corollary}
\newtheorem{example}[theorem]{Example}
\theoremstyle{definition}
\newtheorem{definition}[theorem]{Definition}
\title{Generator polynomial matrices of the Galois hulls of multi-twisted codes
}
\author{
  Ramy F. Taki Eldin \\
  Faculty of Engineering, Ain Shams University, Cairo, Egypt\\
  \texttt{ramy.farouk@eng.asu.edu.eg} \\
   \AND
   Patrick Sol\'e \\
   I2M, (Aix Marseille Univ., CNRS, Centrale Marseille), Marseilles, France \\
   \texttt{sole@enst.fr} \\
}
\begin{document}
\maketitle
\sloppy

\begin{abstract}
In this study, we consider the Euclidean and Galois hulls of multi-twisted (MT) codes over a finite field $\mathbb{F}_{p^e}$ of characteristic $p$. Let $\mathbf{G}$ be a generator polynomial matrix (GPM) of a MT code $\mathcal{C}$. For any $0\le \kappa<e$, the $\kappa$-Galois hull of $\mathcal{C}$, denoted by $h_\kappa\left(\mathcal{C}\right)$, is the intersection of $\mathcal{C}$ with its $\kappa$-Galois dual. The main result in this paper is that a GPM for $h_\kappa\left(\mathcal{C}\right)$ has been obtained from $\mathbf{G}$. We start by associating a linear code $\mathcal{Q}_\mathbf{G}$ with $\mathbf{G}$. We show that $\mathcal{Q}_\mathbf{G}$ is quasi-cyclic. In addition, we prove that the dimension of $h_\kappa\left(\mathcal{C}\right)$ is the difference between the dimension of $\mathcal{C}$ and that of $\mathcal{Q}_\mathbf{G}$. Thus the determinantal divisors are used to derive a formula for the dimension of $h_\kappa\left(\mathcal{C}\right)$. Finally, we deduce a GPM formula for $h_\kappa\left(\mathcal{C}\right)$. In particular, we handle the cases of $\kappa$-Galois self-orthogonal and linear complementary dual MT codes; we establish equivalent conditions that characterize these cases. Equivalent results can be deduced immediately for the classes of cyclic, constacyclic, quasi-cyclic, generalized quasi-cyclic, and quasi-twisted codes, because they are all special cases of MT codes. Some numerical examples, containing optimal and maximum distance separable codes, are used to illustrate the theoretical results.
\end{abstract}

\keywords{Galois hull \and Generator polynomial matrix \and Modules over PID \and Hermite normal form}

\section{Introduction}
Multi-twisted (MT) codes constitute an important class of linear codes, which first appeared in \cite{Aydin2017}. They generalize the class of cyclic codes in a very broad sense. A MT code of index $\ell$ and shift constants $\Lambda=\left(\lambda_1,\ldots,\lambda_\ell\right)$ is a linear code that is kept invariant under some linear transformation, denoted by $\mathcal{T}_\Lambda$. This linear transformation has the effect of constacyclic shifting each block of length $m_j$ by the shift constant $\lambda_j$ for $1\le j\le\ell$. MT codes contain several subclasses of codes that have been studied extensively in the literature. For instance, quasi-twisted (QT) codes, introduced in \cite{Jia2012}, are MT codes with equal block lengths and equal shift constants. While quasi-cyclic (QC) codes are MT codes with equal block lengths and $\lambda_j=1$ for all $1\le j\le\ell$. Generalizing QC codes by just one step results in the class of generalized quasi-cyclic (GQC) codes, which was introduced in \cite{Esmaeili2009,Kulhan2005}. GQC codes are MT codes in which $\lambda_j=1$ for all $1\le j\le\ell$. However, when the index is one, MT codes coincide with constacyclic codes. Various algebraic structures for MT codes have been introduced in the literature (see \cite{Aydin2017,Chauhan2021,Eldin1}). In \cite{Eldin1}, the invariance under $\mathcal{T}_\Lambda$ endows a MT code with the structure of a free module over the principal ideal domain (PID) $\mathbb{F}_q[x]$ (with $x$ acting by $\mathcal{T}_\Lambda$). A generator polynomial matrix (GPM), denoted $\mathbf{G}$, of a MT code $\mathcal{C}$ is a matrix over $\mathbb{F}_q[x]$ whose rows form a basis of $\mathcal{C}$ as an $\mathbb{F}_q[x]$-module. Whereas the reduced GPM of $\mathcal{C}$ is the unique Hermite normal form \cite{Eldin1} of any GPM of $\mathcal{C}$.

The study of duals of MT codes and their subclasses (e.g., GQC and constacyclic codes) has received much attention in the literature. The duals of cyclic, constacyclic, QC, QT, GQC, and MT codes have been studied in \cite{Fan2016,Yang2013,Saleh2016,Gneri2017,Sharma2018}, where these duals have been found to be cyclic, constacyclic, QC, QT, GQC, and MT, respectively. The Euclidean dual $\mathcal{C}^\perp$ of a linear code $\mathcal{C}$ is defined using the Euclidean inner product, while the $\kappa$-Galois dual $\mathcal{C}^{\perp_\kappa}$ (for $\kappa\ge0$) is defined similarly with the replacement of the Euclidean inner product with the $\kappa$-Galois inner product. The concept of the Galois inner product was first introduced in \cite{Fan2016} as a generalization of the Euclidean inner product. In \cite{Eldin2}, a GPM formula for the Euclidean dual of a MT code is illustrated. However, the Galois dual of a MT code is discussed in \cite{Chauhan2021,Eldin2}. For a linear code $\mathcal{C}$, the Euclidean hull of $\mathcal{C}$, denoted $h\left( \mathcal{C}\right)$, is the intersection of $\mathcal{C}$ with $\mathcal{C}^\perp$. While the $\kappa$-Galois hull of $\mathcal{C}$, denoted $h_\kappa\left(\mathcal{C}\right)$, is the intersection of $\mathcal{C}$ with $\mathcal{C}^{\perp_\kappa}$. Some optimal codes were obtained in \cite{Sendrier} by studying linear codes with trivial Euclidean hulls, which are known as linear complementary dual (LCD) codes. Later, Liu et al.\ \cite{Hongwei} studied the Galois hulls of linear codes. They gave a formula to determine the dimension of $h_\kappa\left(\mathcal{C}\right)$ by means of any generator matrix of a linear code $\mathcal{C}$.

This paper mainly focuses on the investigation of the Galois hulls of MT codes. Based on our conviction that GPMs are a compact way of algebraically representing MT codes, we aim to determine the Galois hulls of these codes by means of their GPMs. For any $\kappa\ge 0$, we begin by expressing the Galois inner product of two vectors in terms of their polynomial vector representation. For a MT code $\mathcal{C}$ with a GPM $\mathbf{G}$, we introduce a formula for a GPM for the $\kappa$-Galois dual of $\mathcal{C}$, which we denote by $\mathbf{H}_\kappa$. Then we use $\mathbf{G}$ to construct a polynomial matrix, denoted $\mathfrak{B}_\mathbf{G}$, that associates to $\mathbf{G}$ a linear code which will be denoted by $\mathcal{Q}_\mathbf{G}$. We prove that $\mathcal{Q}_\mathbf{G}$ is QC and has a dimension equal to the difference between the dimension of $\mathcal{C}$ and the dimension of $h_\kappa\left(\mathcal{C}\right)$. Hence we propose a formula for the dimension of $h_\kappa\left(\mathcal{C}\right)$ through the determinantal divisors of $\mathfrak{B}_\mathbf{G}$. It is also shown how the polynomial matrix $\mathfrak{B}_\mathbf{G}$ can be used to construct a GPM for $\mathcal{Q}_\mathbf{G}$, denoted $\mathfrak{G}_\mathbf{G}$. These results contributed to our analysis of the Galois self-orthogonal and LCD codes. Specifically, $\mathcal{C}$ is $\kappa$-Galois self-orthogonal if $h_\kappa\left(\mathcal{C}\right)=\mathcal{C}$, whereas $\mathcal{C}$ is $\kappa$-Galois LCD if $h_\kappa\left(\mathcal{C}\right)=\{\mathbf{0}\}$. We prove that $\mathcal{C}$ is $\kappa$-Galois self-orthogonal if and only if $\mathfrak{B}_\mathbf{G}=\mathbf{0}$. Likewise, we prove necessary and sufficient conditions for a MT code to be $\kappa$-Galois LCD.

Although the above result is a reformulation of \cite[Theorem 2]{Hongwei} for MT codes, the main and most important result in this paper can be found in Theorem \ref{main_theorem}, where we go beyond determining the dimension of the Galois hull of a MT code to provide its generator matrix as well. In the literature, such generator matrix has not been shown in particular for MT codes, nor in general for linear codes. 
Briefly, we show that if $\mathfrak{H}_\mathbf{G}$ is a GPM for the $\kappa$-Galois dual of $\mathcal{Q}_\mathbf{G}$, then $\mathfrak{H}_\mathbf{G} \mathbf{G}$ is a GPM for $h_\kappa\left(\mathcal{C}\right)$. Several numerical examples are devoted to illustrating the theoretical results developed in the paper. Some of these examples present MT codes that achieve the best known parameters of linear codes as recorded in \cite{Grassl}. In fact, some codes are near-optimal, others are optimal, and others are maximum distance separable (MDS).

The material is arranged as follows. Section \ref{Sec2} presents some preliminaries for constructing the algebraic structures of MT codes.
Section \ref{Sec3} is devoted to a discussion of the properties of the Euclidean and Galois duals of MT codes. The theoretical results and their numerical examples are presented in Section \ref{main_results_section}. Section \ref{conclusion} concludes the article.

\section{Representation of MT codes}
\label{Sec2}
Details of most of the material presented in this section can be found in \cite{Eldin1,Eldin2}. We will adopt the following notation throughout the paper: $\mathbb{F}_q$ denotes a finite field of order $q$, $\mathbb{F}_q^n$ is the vector space of all $n$-tuples of elements of $\mathbb{F}_q$, $\mathbb{F}_q[x]$ is the polynomial ring over $\mathbb{F}_q$, and $\langle g(x)\rangle$ is the ideal of $\mathbb{F}_q[x]$ generated by $g(x)$. Let $\mathcal{C}$ be a linear code over $\mathbb{F}_q$ of length $n$, or equivalently, a subspace of $\mathbb{F}_q^n$. We call $\mathcal{C}$ cyclic if it is invariant under the cyclic shift which is given by the linear transformation 
\begin{equation*}
\mathcal{T}: \left( a_0, a_1, \ldots, a_{n-2}, a_{n-1}\right) \mapsto \left( a_{n-1}, a_0, a_1, \ldots, a_{n-2}\right)
\end{equation*}
on $\mathbb{F}_q^n$. The polynomial representation of $\left( a_0, a_1, \ldots, a_{n-2}, a_{n-1}\right) \in \mathbb{F}_q^n$ is the element of the quotient ring $\mathbb{F}_q[x]/\langle x^n-1\rangle$ whose representative is $\sum_{i=0}^{n-1} a_i x^i$. This map induces an $\mathbb{F}_q$-vector space isomorphism between $\mathbb{F}_q^n$ and $\mathbb{F}_q[x]/\langle x^n-1\rangle$. Because of its invariance under $\mathcal{T}$, a cyclic code has the structure of an ideal in $\mathbb{F}_q[x]/\langle x^n-1\rangle$. However, it is known that there is a one-to-one correspondence between ideals of $\mathbb{F}_q[x]/\langle x^n-1\rangle$ and ideals of $\mathbb{F}_q[x]$ that contain $\langle x^n-1\rangle$. A cyclic code over $\mathbb{F}_q$ of length $n$ can then be viewed as an ideal in $\mathbb{F}_q[x]$ containing $\langle x^n-1\rangle$. By a cyclic code $\mathcal{C}$ we may sometimes mean an invariant subspace of $\mathbb{F}_q^n$ under $\mathcal{T}$, and other times we may mean an ideal in $\mathbb{F}_q[x]$ containing $\langle x^n-1\rangle$; this will not cause any confusion as it will be clear from the context what is meant. In the latter representation there exists a polynomial $g(x)\in\mathbb{F}_q[x]$ such that $\mathcal{C}=\langle g(x)\rangle$ because $\mathbb{F}_q[x]$ is a PID. In addition, $g(x)$ divides $x^n-1$ because $\mathcal{C}\supseteq \langle x^n-1\rangle$. That is, $a(x)g(x)=x^n-1$ for some $a(x)\in\mathbb{F}_q[x]$. One can show that $\mathcal{C}$ is a vector space over $\mathbb{F}_q$ of dimension 
$$\mathrm{dim}\left( \mathcal{C}\right)=\deg\left(a(x)\right)=n-\deg\left(g(x)\right).$$

By the previous paragraph, it will be evident how QC codes generalize cyclic codes. A linear code $\mathcal{C}$ over $\mathbb{F}_q$ of length $m\ell$ is QC if it is invariant under the cyclic shift with $\ell$ coordinates. In other words, $\mathcal{C}$ is invariant under the linear transformation
\begin{equation}
\label{cyclic_shift_QC}
\begin{split}
&\mathcal{F}: \left(a_{0,1},a_{0,2},\ldots,a_{0,\ell},a_{1,1},a_{1,2},\ldots,a_{1,\ell},\ldots,a_{m-1,1},a_{m-1,2},\ldots,a_{m-1,\ell}\right)\\ 
& \mapsto \left(a_{m-1,1},a_{m-1,2},\ldots,a_{m-1,\ell}, a_{0,1},a_{0,2},\ldots,a_{0,\ell},\ldots,a_{m-2,1},a_{m-2,2},\ldots,a_{m-2,\ell}\right) 
\end{split}
\end{equation}
on $\mathbb{F}_q^{m\ell}$. We call $\ell$ and $m$ the index and co-index of $\mathcal{C}$ respectively. Observe that a cyclic code is QC with $\ell=1$. We have an $\mathbb{F}_q$-vector space isomorphism between $\mathbb{F}_q^{m\ell}$ and $\oplus_{j=1}^{\ell} \mathbb{F}_q[x]/\langle x^m-1\rangle$ given by
\begin{equation}
\label{VS_Isomorphism}
\begin{split}
\left(a_{0,1},a_{0,2},\ldots,a_{0,\ell},a_{1,1},a_{1,2},\ldots,a_{1,\ell},\ldots,a_{m-1,1},a_{m-1,2},\ldots,a_{m-1,\ell}\right) \\  \mapsto \left(a_1\left(x\right)+\langle x^m-1\rangle,a_2\left(x\right)+\langle x^m-1\rangle,\ldots,a_\ell\left(x\right)+\langle x^m-1\rangle\right)
\end{split}\end{equation}
where $a_j\left(x\right)=\sum_{i=0}^{m-1} a_{i,j}x^i$ for $1\le j\le \ell$. The polynomial representation of $\mathbf{a}\in\mathbb{F}_q^{m\ell}$ is the polynomial vector
\begin{equation*}
\varphi\left(\mathbf{a}\right)=\left( \sum_{i=0}^{m-1} a_{i,1}x^i , \sum_{i=0}^{m-1} a_{i,2}x^i, \ldots , \sum_{i=0}^{m-1} a_{i,\ell}x^i\right) \in \oplus_{j=1}^{\ell}\mathbb{F}_q[x].
\end{equation*}
One can regard $\mathbb{F}_q^{m\ell}$ as an $\mathbb{F}_q[x]$-module via the scalar multiplication $x\cdot \mathbf{a}=\mathcal{F}\left(\mathbf{a}\right)$. Then the vector space isomorphism given by \eqref{VS_Isomorphism} is actually an $\mathbb{F}_q[x]$-module isomorphism. A QC code $\mathcal{C}$ is an $\mathcal{F}$-invariant subspace of $\mathbb{F}_q^{m\ell}$, and hence $\mathcal{C}$ may be regarded as an $\mathbb{F}_q[x]$-submodule of $\mathbb{F}_q^{m\ell}$, or, equivalently, as an $\mathbb{F}_q[x]$-submodule of $\oplus_{j=1}^{\ell} \mathbb{F}_q[x]/\langle x^m-1\rangle$. It is known that there is a one-to-one correspondence between submodules of $\oplus_{j=1}^{\ell} \mathbb{F}_q[x]/\langle x^m-1\rangle$ and submodules of $\oplus_{j=1}^{\ell}\mathbb{F}_q[x]$ that contain $\oplus_{j=1}^{\ell}\langle x^m-1\rangle$. Henceforth, by a QC code over $\mathbb{F}_q$ of index $\ell$ and co-index $m$ we mean either an $\mathcal{F}$-invariant subspace of $\mathbb{F}_q^{m\ell}$ or an $\mathbb{F}_q[x]$-submodule of $\oplus_{j=1}^{\ell}\mathbb{F}_q[x]$ that contains $\oplus_{j=1}^{\ell}\langle x^m-1\rangle$. We use these structures interchangeably, and the intended algebraic structure can be understood from the context without confusion. If $\mathcal{C}$ is a submodule of $\oplus_{j=1}^{\ell}\mathbb{F}_q[x]$ that contains $\oplus_{j=1}^{\ell}\langle x^m-1\rangle$, then $\mathcal{C}$ is free of rank $\ell$. This is because $\mathbb{F}_q[x]$ is a PID and both $\oplus_{j=1}^{\ell}\mathbb{F}_q[x]$ and $\oplus_{j=1}^{\ell}\langle x^m-1\rangle$ have a rank $\ell$. Therefore, any basis of $\mathcal{C}$ can be used to construct the rows of an $\ell\times\ell$ polynomial matrix over $\mathbb{F}_q[x]$. We call any such polynomial matrix a GPM of $\mathcal{C}$ and denote it by $\mathbf{G}$. For any GPM $\mathbf{G}$, there is an $\ell\times\ell$ polynomial matrix $\mathbf{A}$ such that
\begin{equation}
\label{identicalQC}
\mathbf{A}\mathbf{G}=\left( x^m-1\right)\mathbf{I}_\ell
\end{equation}
where $\mathbf{I}_\ell$ is the $\ell\times\ell$ identity matrix. The existence of $\mathbf{A}$ is immediate from the fact that $\mathcal{C}\supseteq \oplus_{j=1}^{\ell}\langle x^m-1\rangle$. The following formula relates the dimension of $\mathcal{C}$ as an $\mathbb{F}_q$-vector space to the polynomial matrix $\mathbf{A}$.
\begin{equation}
\label{Dim_QC}
\mathrm{dim}\left(\mathcal{C}\right)=\deg\left( \mathrm{det}\mathbf{A}\right)=m\ell -\deg\left( \mathrm{det}\mathbf{G}\right)
\end{equation} 
where $\deg$ and $\mathrm{det}$ stand respectively for degree and determinant.
 
\begin{theorem}
\label{GetGPM}
Let $B$ be an $r\times m\ell$ matrix over $\mathbb{F}_q$. Suppose that $\mathcal{C}$ is the smallest QC of index $\ell$ and co-index $m$ that contains the rows of $B$ as codewords. Let $\mathbf{M}$ be the polynomial matrix such that in block matrix notation
\begin{equation}
\mathbf{M}=\begin{pmatrix}
\mathfrak{B}_{r \times \ell}\\
(x^m-1)\mathbf{I}_\ell
\end{pmatrix}
\end{equation}
where $\mathrm{Row}_i\left( \mathfrak{B} \right)=\varphi\left(\mathrm{Row}_i\left( B\right) \right)$ for $1\le i\le r$. Here, $\mathrm{Row}_i$ denotes the $i$-th row. Then $\mathbf{M}$ generates $\mathcal{C}$ as a submodule of $\oplus_{j=1}^{\ell}\mathbb{F}_q[x]$.
\end{theorem}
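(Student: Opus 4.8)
The plan is to show two inclusions between the module generated by the rows of $\mathbf{M}$, call it $\mathcal{M} \subseteq \oplus_{j=1}^{\ell}\mathbb{F}_q[x]$, and the smallest QC code $\mathcal{C}$ of index $\ell$ and co-index $m$ containing the rows of $B$, where the latter is viewed (via the correspondence recalled in Section \ref{Sec2}) as an $\mathbb{F}_q[x]$-submodule of $\oplus_{j=1}^{\ell}\mathbb{F}_q[x]$ containing $\oplus_{j=1}^{\ell}\langle x^m-1\rangle$.

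First I would check $\mathcal{M} \subseteq \mathcal{C}$. The last $\ell$ rows of $\mathbf{M}$ are the rows of $(x^m-1)\mathbf{I}_\ell$, which lie in $\oplus_{j=1}^{\ell}\langle x^m-1\rangle \subseteq \mathcal{C}$. Each of the first $r$ rows of $\mathbf{M}$ equals $\varphi(\mathrm{Row}_i(B))$; since $\mathrm{Row}_i(B)$ is a codeword of $\mathcal{C}$ (in the subspace-of-$\mathbb{F}_q^{m\ell}$ picture) and the polynomial representation $\varphi$ is the $\mathbb{F}_q[x]$-module isomorphism of \eqref{VS_Isomorphism} carrying this picture to the quotient-module picture, the row $\varphi(\mathrm{Row}_i(B))$ represents an element of $\mathcal{C}$ in $\oplus_{j=1}^{\ell}\mathbb{F}_q[x]/\langle x^m-1\rangle$; choosing the representatives with all component degrees $<m$ (which is exactly what $\varphi$ produces) gives an element of $\mathcal{C}$ as a submodule of $\oplus_{j=1}^{\ell}\mathbb{F}_q[x]$. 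Since $\mathcal{C}$ is closed under $\mathbb{F}_q[x]$-linear combinations, $\mathcal{M} \subseteq \mathcal{C}$.

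For the reverse inclusion $\mathcal{C} \subseteq \mathcal{M}$, I would argue that $\mathcal{M}$ is itself a QC code of index $\ell$ and co-index $m$ containing the rows of $B$, and then invoke minimality of $\mathcal{C}$. That $\mathcal{M}$ contains the rows of $B$ is clear since $\varphi(\mathrm{Row}_i(B)) \in \mathcal{M}$. That $\mathcal{M}$ is a QC code of the stated index and co-index amounts to checking that $\mathcal{M}$ is an $\mathbb{F}_q[x]$-submodule of $\oplus_{j=1}^{\ell}\mathbb{F}_q[x]$ with $\oplus_{j=1}^{\ell}\langle x^m-1\rangle \subseteq \mathcal{M}$: the module property is automatic (it is generated by rows over $\mathbb{F}_q[x]$), and the containment of $\oplus_{j=1}^{\ell}\langle x^m-1\rangle$ holds because the rows of $(x^m-1)\mathbf{I}_\ell$ are among the generators. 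Hence $\mathcal{M}$ corresponds to a genuine QC code containing all rows of $B$, so by the minimality defining $\mathcal{C}$ we get $\mathcal{C} \subseteq \mathcal{M}$, and therefore $\mathcal{M} = \mathcal{C}$.

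The only real subtlety — and the step I would be most careful about — is the bookkeeping in the first inclusion: making precise that the rows of $\mathfrak{B}$, which are the degree-$<m$ polynomial vectors $\varphi(\mathrm{Row}_i(B))$, do map to the correct elements of $\mathcal{C}$ under the identification of $\mathcal{C}$ as a submodule of $\oplus_{j=1}^{\ell}\mathbb{F}_q[x]$ containing $\oplus_{j=1}^{\ell}\langle x^m-1\rangle$. This is purely the one-to-one correspondence between submodules of $\oplus_{j=1}^{\ell}\mathbb{F}_q[x]/\langle x^m-1\rangle$ and submodules of $\oplus_{j=1}^{\ell}\mathbb{F}_q[x]$ containing $\oplus_{j=1}^{\ell}\langle x^m-1\rangle$ recalled earlier, together with the fact that adding the rows of $(x^m-1)\mathbf{I}_\ell$ to the generating set is exactly what realizes the preimage of $\mathcal{C}$ under the quotient map. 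Everything else is routine module arithmetic.
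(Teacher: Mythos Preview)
Your proposal is correct and follows essentially the same argument as the paper: both prove the double inclusion by showing that the module $\mathcal{M}$ generated by the rows of $\mathbf{M}$ is a QC code containing the rows of $B$ (giving $\mathcal{C}\subseteq\mathcal{M}$ by minimality) and that every row of $\mathbf{M}$ already lies in $\mathcal{C}$ (giving $\mathcal{M}\subseteq\mathcal{C}$). The paper's proof is just a terser version of yours, with the two inclusions written in the opposite order.
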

\begin{proof}
Let $\mathcal{C}'$ be the $\mathbb{F}_q[x]$-submodule of $\oplus_{j=1}^{\ell}\mathbb{F}_q[x]$ generated by the rows of $\mathbf{M}$. Then $\oplus_{j=1}^{\ell}\langle x^m-1 \rangle \subseteq \mathcal{C}'$, that is to say, $\mathcal{C}'$ is QC of index $\ell$ and co-index $m$. But the rows of $B$ are codewords in $\mathcal{C}'$, so $\mathcal{C}\subseteq\mathcal{C}'$. Conversely, we know that $\oplus_{j=1}^{\ell}\langle x^m-1 \rangle \subseteq \mathcal{C}$. Consequently, the rows of $\mathbf{M}$ are codewords in $\mathcal{C}$. Hence $\mathcal{C}'\subseteq \mathcal{C}$.
\end{proof}
Using the same notation as in Theorem \ref{GetGPM}, the dimension of $\mathcal{C}$ can be determined by \eqref{Dim_QC} once a GPM of $\mathcal{C}$ is obtained. In fact, the reduced GPM of $\mathcal{C}$ can be determined by reducing $\mathbf{M}$ to its Hermite normal form. This is the content of Corollary \ref{GetGPM_Corr1}. However, Corollary \ref{GetGPM_Corr2} provides a formula for $\mathrm{dim}\left( \mathcal{C}\right)$ by evaluating the determinantal divisors of the polynomial matrix $\mathfrak{B}$. The $i$-th determinantal divisor of $\mathfrak{B}$, denoted $\mathfrak{d}_i\left(\mathfrak{B}\right)$, is the greatest common divisor (gcd) of all $i\times i$ determinantal minors of $\mathfrak{B}$. 
\begin{corollary}
\label{GetGPM_Corr1}
Given the same notation as Theorem \ref{GetGPM}, for any GPM $\mathbf{G}$ of $\mathcal{C}$, there exists an invertible polynomial matrix $\mathbf{U}$ such that
\begin{equation}
\label{eqinGetGPM_Corr1}
\mathbf{U}\mathbf{M}=\begin{pmatrix}\mathbf{G}_{\ell\times\ell}\\ \mathbf{0}_{r\times\ell}\end{pmatrix}
\end{equation}
where $\mathbf{0}_{r\times\ell}$ is the zero matrix of size $r\times\ell$. In particular, $\mathbf{G}$ is the reduced GPM of $\mathcal{C}$ if and only if $\mathbf{U}\mathbf{M}$ is the Hermite normal form of $\mathbf{M}$.
\end{corollary}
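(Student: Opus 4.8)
The plan is to reduce $\mathbf{M}$ by invertible row operations over the PID $\mathbb{F}_q[x]$ to an echelon form whose nonzero part is itself a GPM of $\mathcal{C}$, and then to pass from that particular GPM to the given $\mathbf{G}$ by a change-of-basis matrix. First I would record that $\mathbf{M}$ has full column rank $\ell$: by Theorem \ref{GetGPM} its rows generate $\mathcal{C}$, and $\mathcal{C}$ is free of rank $\ell$, so the row rank of $\mathbf{M}$ equals $\ell$, which is also its number of columns. Since $\mathbb{F}_q[x]$ is a PID, $\mathbf{M}$ admits a Hermite normal form: there is an invertible $\mathbf{U}_0$ with $\mathbf{U}_0\mathbf{M}=\left(\begin{smallmatrix}\mathbf{T}\\ \mathbf{0}_{r\times\ell}\end{smallmatrix}\right)$, where the nonzero rows (as many as the rank, i.e.\ $\ell$) are collected on top into a triangular $\ell\times\ell$ block $\mathbf{T}$ with nonzero diagonal, so $\det\mathbf{T}\neq 0$.

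Next I would check that $\mathbf{T}$ is a GPM of $\mathcal{C}$. Because $\mathbf{U}_0$ is invertible, the rows of $\mathbf{U}_0\mathbf{M}$ generate the same $\mathbb{F}_q[x]$-submodule as the rows of $\mathbf{M}$, namely $\mathcal{C}$; discarding the zero rows, the rows of $\mathbf{T}$ already generate $\mathcal{C}$. They are also $\mathbb{F}_q[x]$-linearly independent: any relation $\mathbf{c}\,\mathbf{T}=\mathbf{0}$ with $\mathbf{c}\in\mathbb{F}_q[x]^\ell$, multiplied on the right by the adjugate of $\mathbf{T}$, gives $\mathbf{c}\det\mathbf{T}=\mathbf{0}$, whence $\mathbf{c}=\mathbf{0}$ as $\det\mathbf{T}\neq 0$ in the integral domain $\mathbb{F}_q[x]$. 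Thus the rows of $\mathbf{T}$ form a basis of $\mathcal{C}$. Now, for any GPM $\mathbf{G}$ of $\mathcal{C}$, the rows of $\mathbf{G}$ and the rows of $\mathbf{T}$ are two bases of the free module $\mathcal{C}$, hence are related by an invertible change-of-basis matrix: $\mathbf{G}=\mathbf{V}\mathbf{T}$ for some invertible $\mathbf{V}\in\mathbb{F}_q[x]^{\ell\times\ell}$. Taking $\mathbf{U}=\left(\begin{smallmatrix}\mathbf{V}&\mathbf{0}\\ \mathbf{0}&\mathbf{I}_r\end{smallmatrix}\right)\mathbf{U}_0$ (with zero blocks of the appropriate sizes), which is invertible, yields $\mathbf{U}\mathbf{M}=\left(\begin{smallmatrix}\mathbf{V}\mathbf{T}\\ \mathbf{0}\end{smallmatrix}\right)=\left(\begin{smallmatrix}\mathbf{G}\\ \mathbf{0}\end{smallmatrix}\right)$, which is \eqref{eqinGetGPM_Corr1}.

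For the ``in particular'' part I would use that adjoining zero rows at the bottom of a polynomial matrix neither creates nor destroys the Hermite normal form property, so $\mathbf{U}\mathbf{M}=\left(\begin{smallmatrix}\mathbf{G}\\ \mathbf{0}\end{smallmatrix}\right)$ is in Hermite normal form if and only if the $\ell\times\ell$ block $\mathbf{G}$ is. If $\mathbf{U}\mathbf{M}$ is the Hermite normal form of $\mathbf{M}$, then $\mathbf{G}$ is a GPM of $\mathcal{C}$ already in Hermite normal form, so it coincides with the Hermite normal form of every GPM of $\mathcal{C}$, i.e.\ $\mathbf{G}$ is the reduced GPM of $\mathcal{C}$. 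Conversely, if $\mathbf{G}$ is the reduced GPM, then $\mathbf{G}$, and therefore $\mathbf{U}\mathbf{M}$, is in Hermite normal form; since $\mathbf{U}$ is invertible and the Hermite normal form of $\mathbf{M}$ is the unique matrix in Hermite normal form left-equivalent to $\mathbf{M}$, $\mathbf{U}\mathbf{M}$ is that form. The only points needing care are conventional — that full column rank $\ell$ forces exactly $\ell$ nonzero rows in the Hermite normal form of $\mathbf{M}$, so $\mathbf{T}$ is genuinely square, and that the zero rows sit consistently with the convention of \cite{Eldin1} — while the crux of the argument is the identification of $\mathbf{T}$ as an honest GPM, for which one needs both that its rows span $\mathcal{C}$ and that they are independent.
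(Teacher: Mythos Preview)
Your proof is correct and follows essentially the same route as the paper: reduce $\mathbf{M}$ to Hermite normal form, identify the $\ell\times\ell$ top block as a GPM of $\mathcal{C}$, and then compose with a block-diagonal change of basis to reach the given $\mathbf{G}$. The paper's proof is more compressed (it simply asserts that the Hermite normal form of $\mathbf{M}$ equals $\left(\begin{smallmatrix}\mathbf{G}'\\ \mathbf{0}\end{smallmatrix}\right)$ with $\mathbf{G}'$ the reduced GPM, and does not spell out the ``in particular'' clause), whereas you supply the rank argument, the independence of the rows of $\mathbf{T}$, and the uniqueness-of-HNF reasoning explicitly; but the underlying argument is the same.
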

\begin{proof}
Suppose that $\mathbf{M}'$ is the Hermite normal form of $\mathbf{M}$ and suppose that $\mathbf{G}'$ is the reduced GPM of $\mathcal{C}$. There always exist invertible polynomial matrices $\mathbf{U}_1$ and $\mathbf{U}_2$ such that
\begin{equation*}
\mathbf{U}_1\mathbf{M}=\mathbf{M}'=\begin{pmatrix}\mathbf{G}'_{\ell\times\ell}\\ \mathbf{0}_{r\times\ell}\end{pmatrix}=\begin{pmatrix}\mathbf{U}_2\mathbf{G}_{\ell\times\ell}\\ \mathbf{0}_{r\times\ell}\end{pmatrix}=\begin{pmatrix}\mathbf{U}_2 & \mathbf{0}_{\ell\times\ell}\\ \mathbf{0}_{r\times\ell} & \mathbf{I}_{r}\end{pmatrix}\begin{pmatrix}\mathbf{G}_{\ell\times\ell}\\ \mathbf{0}_{r\times\ell}\end{pmatrix}.
\end{equation*}
The result then follows for
\begin{equation*}
\mathbf{U}=\begin{pmatrix}\mathbf{U}^{-1}_2 & \mathbf{0}_{\ell\times\ell}\\ \mathbf{0}_{r\times\ell} & \mathbf{I}_{r}\end{pmatrix}\mathbf{U}_1.
\end{equation*}
\end{proof}
\begin{corollary}
\label{GetGPM_Corr2}
With the notation of Theorem \ref{GetGPM}, 
\begin{equation*}
\mathrm{dim}\left( \mathcal{C}\right)=m\ell-\deg\left(\gcd_{0\le i\le \min\{r,\ell\}}\left\{\left(x^m-1\right)^{\ell-i}\mathfrak{d}_i\left(\mathfrak{B}\right)\right\}\right)
\end{equation*}
where $\mathfrak{d}_0\left(\mathfrak{B}\right)=1$.
\end{corollary}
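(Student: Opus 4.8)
The plan is to reduce the stated formula to a computation of the $\ell$-th determinantal divisor of the polynomial matrix $\mathbf{M}$ of Theorem \ref{GetGPM}. By \eqref{Dim_QC} we have $\mathrm{dim}(\mathcal{C}) = m\ell - \deg(\det\mathbf{G})$ for any GPM $\mathbf{G}$ of $\mathcal{C}$, and since any two GPMs of $\mathcal{C}$ are related by left multiplication by an invertible polynomial matrix, their determinants differ by a nonzero scalar, so $\deg(\det\mathbf{G})$ is well defined and it suffices to identify $\det\mathbf{G}$ up to a unit of $\mathbb{F}_q[x]$. By Corollary \ref{GetGPM_Corr1} there is an invertible polynomial matrix $\mathbf{U}$ with $\mathbf{U}\mathbf{M}=\left(\begin{smallmatrix}\mathbf{G}\\ \mathbf{0}_{r\times\ell}\end{smallmatrix}\right)$. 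Using the Cauchy--Binet formula, every $i\times i$ minor of a product of polynomial matrices is an $\mathbb{F}_q[x]$-linear combination of $i\times i$ minors of each factor, so the $i$-th determinantal divisor is invariant (up to a unit) under multiplication by invertible matrices; hence $\mathfrak{d}_\ell(\mathbf{M})$ and $\mathfrak{d}_\ell\!\left(\left(\begin{smallmatrix}\mathbf{G}\\ \mathbf{0}\end{smallmatrix}\right)\right)$ are associates. The latter matrix has a single possibly-nonzero $\ell\times\ell$ minor, namely $\det\mathbf{G}$ (any $\ell$-subset of its rows meeting the zero block gives a vanishing minor), so $\det\mathbf{G}$ and $\mathfrak{d}_\ell(\mathbf{M})$ are associates.

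Next I would compute $\mathfrak{d}_\ell(\mathbf{M})$ directly from the block form $\mathbf{M}=\left(\begin{smallmatrix}\mathfrak{B}_{r\times\ell}\\ (x^m-1)\mathbf{I}_\ell\end{smallmatrix}\right)$. Every $\ell\times\ell$ submatrix of $\mathbf{M}$ is determined by a choice of $i$ rows from $\mathfrak{B}$ together with $\ell-i$ rows from $(x^m-1)\mathbf{I}_\ell$, for some $0\le i\le\min\{r,\ell\}$. Expanding its determinant by the generalized Laplace rule along the $\ell-i$ rows taken from $(x^m-1)\mathbf{I}_\ell$ --- each such row being $(x^m-1)$ times a standard basis vector --- only one term survives, equal to $\pm(x^m-1)^{\ell-i}$ times the $i\times i$ minor of $\mathfrak{B}$ lying in the chosen $i$ rows and in the $i$ columns complementary to the supports of the chosen identity rows. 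As the row- and column-choices vary over all admissible configurations, these complementary minors run over all $i\times i$ minors of $\mathfrak{B}$; the case $i=0$ contributes the single minor $(x^m-1)^\ell$, consistent with the convention $\mathfrak{d}_0(\mathfrak{B})=1$. Grouping the $\ell\times\ell$ minors of $\mathbf{M}$ by the value of $i$ and using $\gcd(c\,a_1,\dots,c\,a_k)=c\cdot\gcd(a_1,\dots,a_k)$ in $\mathbb{F}_q[x]$, the gcd of the $i$-th group is $(x^m-1)^{\ell-i}\mathfrak{d}_i(\mathfrak{B})$, whence $\mathfrak{d}_\ell(\mathbf{M})=\gcd_{0\le i\le\min\{r,\ell\}}\{(x^m-1)^{\ell-i}\mathfrak{d}_i(\mathfrak{B})\}$ up to a unit. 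Combining this with the previous paragraph and \eqref{Dim_QC} yields the claim.

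The hard part will be the careful bookkeeping in the Laplace step: one must check that the cofactor complementary to the selected $(x^m-1)\mathbf{I}_\ell$-rows is exactly the corresponding minor of $\mathfrak{B}$ (with no row or column of $\mathfrak{B}$ inadvertently repeated, dropped, or mis-indexed), and that as the selection varies one genuinely recovers every $i\times i$ minor of $\mathfrak{B}$ within the ``$i$-th group''. The remaining ingredients --- the Cauchy--Binet invariance of determinantal divisors under multiplication by invertible matrices, and the identification of $\det\mathbf{G}$ as the unique nonzero $\ell\times\ell$ minor of $\left(\begin{smallmatrix}\mathbf{G}\\ \mathbf{0}\end{smallmatrix}\right)$ --- are routine.
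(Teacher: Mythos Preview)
Your proposal is correct and follows essentially the same route as the paper's own proof: reduce $\mathrm{dim}(\mathcal{C})$ to $\deg\bigl(\mathfrak{d}_\ell(\mathbf{M})\bigr)$ via \eqref{Dim_QC} and Corollary~\ref{GetGPM_Corr1}, then compute $\mathfrak{d}_\ell(\mathbf{M})$ from the block form of $\mathbf{M}$. The paper states the resulting gcd identity for $\mathfrak{d}_\ell(\mathbf{M})$ without justification, whereas you spell out the Laplace-expansion bookkeeping; otherwise the arguments coincide.
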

\begin{proof}
Equation \eqref{Dim_QC} combined with Equation \eqref{eqinGetGPM_Corr1} shows that
\begin{equation*}
\mathrm{dim}\left( \mathcal{C}\right) 
=m\ell - \deg\left( \mathfrak{d}_\ell\left(\mathbf{G}\right)\right)=m\ell - \deg\left( \mathfrak{d}_\ell \begin{pmatrix}\mathbf{G}\\ \mathbf{0}_{r\times\ell}\end{pmatrix}\right)=m\ell - \deg\left( \mathfrak{d}_\ell\left(\mathbf{M}\right)\right).
\end{equation*}
But
\begin{equation*}
\begin{split}
\mathfrak{d}_\ell\left(\mathbf{M}\right)=\gcd\left\{\left(x^m-1\right)^\ell,\left(x^m-1\right)^{\ell-1}\mathfrak{d}_1\left(\mathfrak{B}\right), \left(x^m-1\right)^{\ell-2}\mathfrak{d}_2\left(\mathfrak{B}\right), \ldots, \left(x^m-1\right)^{\ell-w+1}\mathfrak{d}_{w-1}\left(\mathfrak{B}\right),\left(x^m-1\right)^{\ell-w}\mathfrak{d}_{w}\left(\mathfrak{B}\right) \right\}.
\end{split}
\end{equation*}
where $w=\min\{r,\ell\}$.
\end{proof}

QC codes find their broadest generalization in the class of MT codes. In \cite{Aydin2017}, MT codes have been introduced as an umbrella class for linear codes invariant under general constacyclic shifts. Precisely, these codes are defined as follows.
\begin{definition}
\label{def_MT}
Let $m_1,m_2,\ldots,m_\ell$ be positive integers and set $n=m_1+m_2+\cdots+m_\ell$. Let $\lambda_1,\lambda_2,\ldots,\lambda_\ell$ be non-zero elements of $\mathbb{F}_q$, set $\Lambda=\left(\lambda_1,\lambda_2,\ldots,\lambda_\ell \right)$, and call $\Lambda$ the shift constants. A $\Lambda$-MT code of index $\ell$ and block lengths $\left(m_1,m_2,\ldots,m_\ell\right)$ is a linear code over $\mathbb{F}_q$ of length $n$ invariant under the linear transformation
\begin{equation*}
\begin{split}
&\mathcal{T}_\Lambda :\left(a_{0,1},a_{1,1},\ldots,a_{m_1-1,1},a_{0,2},a_{1,2},\ldots,a_{m_2-1,2},\ldots,a_{0,\ell},a_{1,\ell},\ldots,a_{m_\ell-1,\ell}\right)\\
&\mapsto \left(\lambda_1 a_{m_1-1,1},a_{0,1},\ldots,a_{m_1-2,1},\lambda_2 a_{m_2-1,2},a_{0,2},\ldots,a_{m_2-2,2},\ldots, \lambda_\ell a_{m_\ell-1,\ell}, a_{0,\ell},\ldots,a_{m_\ell-2,\ell}\right)
\end{split}\end{equation*}
on $\mathbb{F}_q^n$.
\end{definition}
We now act in a very similar way to QC codes to equip MT codes with their polynomial representation. There is a vector space isomorphism between $\mathbb{F}_q^n$ and $\oplus_{j=1}^{\ell} \mathbb{F}_q[x]/\langle x^{m_j}-\lambda_j\rangle$ given by
\begin{equation}
\label{VS_Isomorphism_MT}
\begin{split}
\mathbf{a}=\left(a_{0,1},a_{1,1},\ldots,a_{m_1-1,1},a_{0,2},a_{1,2},\ldots,a_{m_2-1,2},\ldots,a_{0,\ell},a_{1,\ell},\ldots,a_{m_\ell-1,\ell}\right)\\
\mapsto\left(a_1\left(x\right)+\langle x^{m_1}-\lambda_1\rangle,a_2\left(x\right)+\langle x^{m_2}-\lambda_2\rangle,\ldots,a_\ell\left(x\right)+\langle x^{m_\ell}-\lambda_\ell\rangle\right)
\end{split}
\end{equation}
where $a_j\left(x\right)=\sum_{i=0}^{m_j-1}a_{i,j}x^i$ for $1\le j\le \ell$. The polynomial representation of $\mathbf{a}\in\mathbb{F}_q^n$ is the polynomial vector
\begin{equation}
\label{Poly_Vect}
\phi\left(\mathbf{a}\right)=\left( \sum_{i=0}^{m_1-1} a_{i,1}x^i , \sum_{i=0}^{m_2-1} a_{i,2}x^i, \ldots , \sum_{i=0}^{m_\ell-1} a_{i,\ell}x^i\right) \in \oplus_{j=1}^{\ell}\mathbb{F}_q[x].
\end{equation}
Briefly, $\mathbb{F}_q^n$ is an $\mathbb{F}_q[x]$-module with the action of $x$ regarded as $\mathcal{T}_\Lambda$. Then the map given by \eqref{VS_Isomorphism_MT} is an $\mathbb{F}_q[x]$-module isomorphism. Consequently, a MT code is an $\mathbb{F}_q[x]$-submodule of $\mathbb{F}_q^n$ (or $\oplus_{j=1}^{\ell} \mathbb{F}_q[x]/\langle x^{m_j}-\lambda_j\rangle$). But submodules of $\oplus_{j=1}^{\ell} \mathbb{F}_q[x]/\langle x^{m_j}-\lambda_j\rangle$ correspond to submodules of $\oplus_{j=1}^{\ell}\mathbb{F}_q[x]$ containing $\oplus_{j=1}^{\ell}\langle x^{m_j}-\lambda_j \rangle$. That is why we will often mean by a MT code $\mathcal{C}$ an $\mathbb{F}_q[x]$-submodule of $\oplus_{j=1}^{\ell}\mathbb{F}_q[x]$ that contains $\oplus_{j=1}^{\ell}\langle x^{m_j}-\lambda_j \rangle$. In this situation, $\mathcal{C}$ is free of rank $\ell$. Any basis of $\mathcal{C}$ can be used as rows to construct an $\ell\times\ell$ GPM of $\mathcal{C}$, denoted $\mathbf{G}$. Each MT code is assigned a unique GPM which is in the Hermite normal form, and is called the reduced GPM. For any GPM $\mathbf{G}$, since $\mathcal{C}\supseteq \oplus_{j=1}^{\ell}\langle x^{m_j}-\lambda_j \rangle$, there is a polynomial matrix $\mathbf{A}$ such that
\begin{equation}
\label{identicalMT}
\mathbf{A}\mathbf{G}=\mathrm{diag}\left[x^{m_j}-\lambda_j\right].
\end{equation}
By $\mathrm{diag}\left[x^{m_j}-\lambda_j\right]$ we mean the diagonal matrix with $x^{m_j}-\lambda_j$ in the $j$-th diagonal position. In other words,
\begin{equation*}
\begin{split}
\mathrm{diag}\left[x^{m_j}-\lambda_j\right]&=\mathrm{diag}\left[x^{m_1}-\lambda_1,\ldots,x^{m_\ell}-\lambda_\ell \right]=\begin{pmatrix}
x^{m_1}-\lambda_1 &0 &0&\cdots&0\\
0&x^{m_2}-\lambda_2  &0&\cdots&0\\
0&0&x^{m_3}-\lambda_3 &\cdots&0\\
\vdots&\vdots&\vdots&\ddots&\vdots\\
0&0&0&\cdots&x^{m_\ell}-\lambda_\ell
\end{pmatrix}.
\end{split}
\end{equation*}
Equation \eqref{identicalMT} is called the identical equation of $\mathbf{G}$. It is shown in \cite{Eldin1} that $\mathbf{A}$ determines the dimension of $\mathcal{C}$ by means of the equation 
\begin{equation}
\label{dimensionMT}
\mathrm{dim}\left(\mathcal{C}\right)=\deg\left(\mathrm{det}\mathbf{A}\right)=\sum_{j=1}^{\ell}m_j \ -\deg\left(\mathrm{det}\mathbf{G}\right).
\end{equation}

To summarize, a MT code over $\mathbb{F}_q$ of index $\ell$, block lengths $\left(m_1,m_2,\ldots,m_\ell\right)$, and shift constants $\Lambda=\left(\lambda_1,\lambda_2,\ldots,\lambda_\ell \right)$ can be viewed as a $\mathcal{T}_\Lambda$-invariant subspace of $\mathbb{F}_q^n$ or an $\mathbb{F}_q[x]$-submodule of $\oplus_{j=1}^{\ell}\mathbb{F}_q[x]$ that contains $\oplus_{j=1}^{\ell}\langle x^{m_j}-\lambda_j \rangle$. The intended algebraic structure can be understood from the context. We conclude this section by describing two other subclasses of MT codes that also generalize QC codes. The first is the class of QT codes. A $\lambda$-QT code is a MT code with equal block lengths and equal shift constants, i.e., $m_1=m_2=\cdots=m_\ell=m$ and $\lambda_1=\lambda_2=\cdots=\lambda_\ell=\lambda$. Particularly noteworthy is that a $\lambda$-constacyclic code is a QT code of index $\ell=1$. The second is the class of GQC codes. A GQC code is a MT code with $\lambda_1=\lambda_2=\cdots=\lambda_\ell=1$. Our final observation regarding these subclasses is that the algebraic structures described for MT codes apply to any of these codes.

\section{Galois duals and hulls}
\label{Sec3}
Hereinafter, we let $\mathbb{F}_q$ be a finite field of characteristic $p$ and dimension $e$ over its prime subfield, i.e., $q=p^e$. Denote by $\sigma$ the Frobenius automorphism of $\mathbb{F}_q$ given by $\sigma\left(\alpha \right)=\alpha^p$ for $\alpha\in\mathbb{F}_q$. The order of $\sigma$ is $e$, while the inverse of $\sigma^\kappa$ is $\sigma^{e-\kappa}$ for any $0\le \kappa<e$. We extend the definition of $\sigma$ in a natural way to maps on vectors, matrices, codes, polynomials, polynomial vectors, and polynomial matrices as follows.
\begin{definition}\begin{enumerate}
\item Let $\mathbf{a}\in\mathbb{F}_q^n$. Then for each $1\le i\le n$, $\mathrm{Ent}_{i}\left(\sigma \left(\mathbf{a}\right)\right)=\sigma\left(\mathrm{Ent}_{i}\left(\mathbf{a}\right)\right)$, where $\mathrm{Ent}_{i}\left(\mathbf{a}\right)$ is the $i$-th entry of $\mathbf{a}$. 
\item Let $S$ be a $k\times n$ matrix over $\mathbb{F}_q$. Then for $1\le i\le k$ and $1 \le j \le n$, $\mathrm{Ent}_{i,j}\left(\sigma \left(S \right)\right)=\sigma\left(\mathrm{Ent}_{i,j}\left(S\right)\right)$, where $\mathrm{Ent}_{i,j}\left(S\right)$ is the $(i, j)$-th entry of $S$.
\item Let $\mathcal{C}$ be a linear code over $\mathbb{F}_q$. Then $\sigma\mathcal{C}=\left\{\sigma\left(\mathbf{c}\right) : \mathbf{c}\in\mathcal{C}\right\}$.
\item Let $a(x)=\sum_{i=0}^{n-1}a_i x^i \in\mathbb{F}_q[x]$. Then $\sigma\left( a(x)\right)=\sum_{i=0}^{n-1}\sigma\left(a_i\right) x^i$.
\item Let $\mathbf{a}(x)\in \oplus_{j=1}^{\ell}\mathbb{F}_q[x]$. Then for $1\le j\le \ell$, $\mathrm{Ent}_{j}\left(\sigma \left(\mathbf{a}(x)\right)\right)=\sigma\left(\mathrm{Ent}_{j}\left(\mathbf{a}(x)\right)\right)$.
\item Let $\mathbf{S}$ be a polynomial matrix of order $r\times \ell$ over $\mathbb{F}_q[x]$. Then for $1\le i\le r$ and $1 \le j \le \ell$, $\mathrm{Ent}_{i,j}\left(\sigma \left(\mathbf{S} \right)\right)=\sigma\left(\mathrm{Ent}_{i,j}\left(\mathbf{S}\right)\right)$.
\end{enumerate}
\end{definition}

The Euclidean inner product on $\mathbb{F}_q^n$ is defined by
\begin{equation*}
\langle \mathbf{a},\mathbf{b}\rangle=\sum_{i=1}^{n} \mathrm{Ent}_{i}\left(\mathbf{a}\right)\mathrm{Ent}_{i}\left(\mathbf{b}\right) \quad \text{for all } \mathbf{a},\mathbf{b}\in\mathbb{F}_q^n.
\end{equation*}
It follows at once that $\langle \cdot,\cdot \rangle$ is a bilinear symmetric form. Indeed, suppose that $\mathbf{a},\mathbf{b},\mathbf{c}\in\mathbb{F}_q^n$ and $\alpha,\beta\in\mathbb{F}_q$. Then
\begin{enumerate}
\item $\langle \mathbf{a},\mathbf{b}\rangle=\langle \mathbf{b},\mathbf{a}\rangle$,
\item $\langle \alpha\mathbf{a}+\beta\mathbf{b},\mathbf{c}\rangle=\alpha \langle \mathbf{a},\mathbf{c}\rangle+\beta \langle \mathbf{b},\mathbf{c}\rangle$,\text{ and}
\item $\langle \mathbf{a},\beta\mathbf{b}+\alpha\mathbf{c}\rangle=\beta \langle \mathbf{a},\mathbf{b}\rangle+\alpha \langle \mathbf{a},\mathbf{c}\rangle$.
\end{enumerate}
For any $0\le\kappa<e$, the $\kappa$-Galois inner product was introduced in \cite{Fan2016} as a generalization of the Euclidean inner product; it is denoted $\langle \cdot,\cdot \rangle_\kappa$ and defined as follows.
\begin{definition}
\label{Def_Galois_inner}
Let $\kappa$ be an integer such that $0\le \kappa<e$. The $\kappa$-Galois inner product of $\mathbf{a},\mathbf{b}\in\mathbb{F}_q^n$ is defined to be 
\begin{equation*}
\langle \mathbf{a},\mathbf{b}\rangle_\kappa=\langle \mathbf{a},\sigma^\kappa\left(\mathbf{b}\right)\rangle.
\end{equation*}
\end{definition}
For any $\mathbf{a},\mathbf{b},\mathbf{c}\in\mathbb{F}_q^n$ and $\alpha,\beta\in\mathbb{F}_q$, one can easily check the following facts:
\begin{enumerate}
\item $\langle \mathbf{a},\mathbf{b}\rangle_\kappa=\sigma^\kappa\left(\langle \mathbf{b},\mathbf{a}\rangle_{e-\kappa}\right)$,
\item $\langle \alpha\mathbf{a}+\beta\mathbf{b},\mathbf{c}\rangle_\kappa=\alpha \langle \mathbf{a},\mathbf{c}\rangle_\kappa+\beta \langle \mathbf{b},\mathbf{c}\rangle_\kappa$, and
\item $\langle \mathbf{a},\beta\mathbf{b}+\alpha\mathbf{c}\rangle_\kappa=\sigma^\kappa\left(\beta\right) \langle \mathbf{a},\mathbf{b}\rangle_\kappa+\sigma^\kappa\left(\alpha\right) \langle \mathbf{a},\mathbf{c}\rangle_\kappa$.
\end{enumerate}

Recall that the Euclidean dual of a linear code $\mathcal{C}$ over $\mathbb{F}_q$ of length $n$ is given by
\begin{equation*}
\mathcal{C}^\perp=\left\{ \mathbf{a}\in\mathbb{F}_q^n \text{ such that } \langle \mathbf{c},\mathbf{a}\rangle =0 \ \forall \ \mathbf{c}\in\mathcal{C}\right\}.
\end{equation*}
The $\kappa$-Galois dual of $\mathcal{C}$, denoted $\mathcal{C}^{\perp_\kappa}$, is defined analogously where the Euclidean inner product is replaced by the $\kappa$-Galois inner product. That is,
\begin{equation*}
\mathcal{C}^{\perp_\kappa}=\left\{ \mathbf{a}\in\mathbb{F}_q^n \text{ such that } \langle \mathbf{c},\mathbf{a}\rangle_\kappa =0 \ \forall \ \mathbf{c}\in\mathcal{C}\right\}.
\end{equation*}
Note that if $\mathcal{C}$ is linear of dimension $k$, then both $\mathcal{C}^\perp$ and $\mathcal{C}^{\perp_\kappa}$ are linear of dimension $n-k$. More precisely, $\mathcal{C}^{\perp_\kappa}=\sigma^{e-\kappa}\left(\mathcal{C}^\perp\right)$ and $\mathcal{C}^\perp=\sigma^{\kappa}\left(\mathcal{C}^{\perp_\kappa}\right)$. This is because $\langle\mathbf{a},\mathbf{b}\rangle=\langle\mathbf{a},\sigma^{e-\kappa}\left(\mathbf{b}\right)\rangle_\kappa$ by Definition \ref{Def_Galois_inner}. Furthermore, it is known that
$$\left(\mathcal{C}^{\perp}\right)^\perp=\mathcal{C} \quad \text{and} \quad \sigma\left(\mathcal{C}^\perp\right)=\left(\sigma \mathcal{C}\right)^\perp.$$
Then it follows that
$$\left(\mathcal{C}^{\perp_{\kappa_1}}\right)^{\perp_{\kappa_2}}=\sigma^{e-\kappa_2}\left(\mathcal{C}^{\perp_{\kappa_1}}\right)^\perp=\sigma^{e-\kappa_2}\left(\sigma^{e-\kappa_1}\mathcal{C}^\perp\right)^\perp=\sigma^{2e-\kappa_1-\kappa_2}\left(\mathcal{C}\right).$$

Linear codes can be classified according to their intersection with their duals. A linear code either trivially intersects its dual, is contained in its dual, or non-trivially intersects its dual. To clarify these cases precisely, it is convenient to make the following definition.
\begin{definition}
\label{def_LCD_SO}
Let $q=p^e$ and $0\le\kappa<e$. Let $\mathcal{C}$ be a linear code over $\mathbb{F}_q$.
\begin{enumerate}
\item $\mathcal{C}$ is Euclidean (respectively, $\kappa$-Galois) LCD if $\mathcal{C}\cap \mathcal{C}^\perp=\{\mathbf{0}\}$ (respectively, $\mathcal{C}\cap \mathcal{C}^{\perp_\kappa}=\{\mathbf{0}\}$).
\item $\mathcal{C}$ is Euclidean (respectively, $\kappa$-Galois) self-orthogonal if $\mathcal{C}\subseteq \mathcal{C}^\perp$ (respectively, $\mathcal{C}\subseteq \mathcal{C}^{\perp_\kappa}$).
\item The Euclidean (respectively, $\kappa$-Galois) hull of $\mathcal{C}$ is $h\left(\mathcal{C}\right)=\mathcal{C}\cap \mathcal{C}^\perp$ (respectively, $h_\kappa\left(\mathcal{C}\right)=\mathcal{C}\cap \mathcal{C}^{\perp_\kappa}$).
\end{enumerate}
\end{definition}
It is worth pointing out explicitly that $h_\kappa\left(\mathcal{C}\right)=\left\{\mathbf{0}\right\}$ if and only if $\mathcal{C}$ is $\kappa$-Galois LCD, whereas $h_\kappa\left(\mathcal{C}\right)=\mathcal{C}$ if and only if $\mathcal{C}$ is $\kappa$-Galois self-orthogonal. Nevertheless, it might be useful to investigate the case in which $\mathbf{0}\subsetneq h_\kappa\left(\mathcal{C}\right) \subsetneq \mathcal{C}$. For a linear code $\mathcal{C}$ over $\mathbb{F}_q$, the dimension of $h_\kappa\left(\mathcal{C}\right)$ can be determined by the following result which is proven in \cite{Hongwei}.

\begin{theorem}
\label{hongwei}
Let $q=p^e$ and $0\le\kappa<e$. Let $S$ be a generator matrix of a linear code $\mathcal{C}$ over $\mathbb{F}_q$. Then
$$\mathrm{Rank}\left(S \left( \sigma^\kappa S \right)^t \right) = \mathrm{dim}\left(\mathcal{C}\right)-\mathrm{dim}\left(h_\kappa\left(\mathcal{C}\right)\right)$$
where $^t$ denotes the transpose. 
\end{theorem}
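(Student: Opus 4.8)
The plan is to translate membership in the $\kappa$-Galois hull into a homogeneous linear system governed by the square matrix $M := S\left(\sigma^\kappa S\right)^t$, and then to read off $\mathrm{dim}\left(h_\kappa\left(\mathcal{C}\right)\right)$ by rank--nullity. Write $k=\mathrm{dim}\left(\mathcal{C}\right)$, so that $S$ is a $k\times n$ matrix whose rows form a basis of $\mathcal{C}$; then $\mathbf{x}\mapsto \mathbf{x}S$ is an $\mathbb{F}_q$-linear isomorphism from $\mathbb{F}_q^k$ onto $\mathcal{C}$, and $M$ is $k\times k$.

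First I would characterize when a codeword of $\mathcal{C}$ lies in $\mathcal{C}^{\perp_\kappa}$. For $\mathbf{x}\in\mathbb{F}_q^k$, the word $\mathbf{x}S$ belongs to $\mathcal{C}^{\perp_\kappa}$ exactly when $\langle \mathrm{Row}_i\left(S\right),\mathbf{x}S\rangle_\kappa=0$ for every $i$, since the rows of $S$ span $\mathcal{C}$; by Definition \ref{Def_Galois_inner} this says $S\,\sigma^\kappa\left(\mathbf{x}S\right)^t=\mathbf{0}$. Using $\sigma^\kappa\left(\mathbf{x}S\right)=\sigma^\kappa\left(\mathbf{x}\right)\sigma^\kappa\left(S\right)$, this becomes $S\left(\sigma^\kappa S\right)^t\sigma^\kappa\left(\mathbf{x}\right)^t=M\,\sigma^\kappa\left(\mathbf{x}\right)^t=\mathbf{0}$. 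Since every element of $\mathcal{C}$ is uniquely of the form $\mathbf{x}S$, it follows that $h_\kappa\left(\mathcal{C}\right)=\left\{\mathbf{x}S:\mathbf{x}\in\mathbb{F}_q^k,\ M\,\sigma^\kappa\left(\mathbf{x}\right)^t=\mathbf{0}\right\}$, and hence $\mathrm{dim}\left(h_\kappa\left(\mathcal{C}\right)\right)=\mathrm{dim}\left(\left\{\mathbf{x}\in\mathbb{F}_q^k:M\,\sigma^\kappa\left(\mathbf{x}\right)^t=\mathbf{0}\right\}\right)$.

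The one point needing care is that the twist by $\sigma^\kappa$ is semilinear rather than linear. I would check that $V:=\left\{\mathbf{x}\in\mathbb{F}_q^k:M\,\sigma^\kappa\left(\mathbf{x}\right)^t=\mathbf{0}\right\}$ is nonetheless an $\mathbb{F}_q$-subspace of $\mathbb{F}_q^k$ (closure under addition is immediate, and for a scalar $\alpha$ one has $M\,\sigma^\kappa\left(\alpha\mathbf{x}\right)^t=\sigma^\kappa\left(\alpha\right)M\,\sigma^\kappa\left(\mathbf{x}\right)^t$), and that $V=\sigma^{e-\kappa}\left(N\right)$, where $N=\left\{\mathbf{y}\in\mathbb{F}_q^k:M\mathbf{y}^t=\mathbf{0}\right\}$ is the right null space of $M$. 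Because $\sigma^{e-\kappa}$ is a bijection of $\mathbb{F}_q^k$ that carries $\mathbb{F}_q$-subspaces to $\mathbb{F}_q$-subspaces of the same dimension, the rank--nullity theorem gives $\mathrm{dim}\left(V\right)=\mathrm{dim}\left(N\right)=k-\mathrm{Rank}\left(M\right)$.

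Combining the two computations yields $\mathrm{dim}\left(h_\kappa\left(\mathcal{C}\right)\right)=k-\mathrm{Rank}\left(S\left(\sigma^\kappa S\right)^t\right)=\mathrm{dim}\left(\mathcal{C}\right)-\mathrm{Rank}\left(S\left(\sigma^\kappa S\right)^t\right)$, which is exactly the asserted identity after rearrangement. I do not expect a genuine obstacle here; the only thing to be vigilant about is the bookkeeping around the Frobenius twist — distinguishing the left and right null spaces of $M$ and invoking semilinearity correctly — so that rank--nullity is applied to the correct matrix.
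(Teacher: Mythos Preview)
Your argument is correct. Note that the paper does not supply its own proof of this theorem; it simply quotes the result from \cite{Hongwei}. Your proof is the standard rank--nullity computation one expects there: parametrize $\mathcal{C}$ by $\mathbf{x}\mapsto\mathbf{x}S$, rewrite the condition $\mathbf{x}S\in\mathcal{C}^{\perp_\kappa}$ as $M\,\sigma^\kappa(\mathbf{x})^t=\mathbf{0}$ with $M=S(\sigma^\kappa S)^t$, and undo the Frobenius twist to identify the solution set with the right null space of $M$. The only delicate point is exactly the one you flagged --- the semilinearity of $\sigma^\kappa$ --- and your observation that $\sigma^{e-\kappa}$ is a dimension-preserving bijection of subspaces handles it cleanly.
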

In the special cases of $\kappa$-Galois LCD and self-orthogonal codes, Theorem \ref{hongwei} can be stated as follows: A linear code $\mathcal{C}$ with a generator matrix $S$ is $\kappa$-Galois LCD (respectively, self-orthogonal) if and only if the rank of $S \left( \sigma^\kappa S \right)^t$ is equal to $\mathrm{dim}\left(\mathcal{C}\right)$ (respectively, $0$). The first aim of Section \ref{main_results_section} below is to provide a formula for the Galois hull dimension of a MT code in terms of its GPM. Furthermore, a GPM for the Galois hull is constructed. To this end, we shall now examine the duals of MT codes.

\begin{theorem}
\label{dualMT}
Let $0\le\kappa<e$ and let $\Lambda=\left(\lambda_1,\lambda_2,\ldots,\lambda_\ell\right)$, where $0\ne\lambda_j\in\mathbb{F}_q$ for $1\le j\le \ell$. If $\mathcal{C}$ is a $\Lambda$-MT code over $\mathbb{F}_q$ of index $\ell$ and block lengths $\left(m_1,m_2,\ldots,m_\ell\right)$, then $\mathcal{C}^{\perp_\kappa}$ is $\Delta_\kappa$-MT with block lengths $\left(m_1,m_2,\ldots,m_\ell\right)$, where
$$\Delta_\kappa=\sigma^{e-\kappa}\left(\frac{1}{\lambda_1},\frac{1}{\lambda_2},\ldots,\frac{1}{\lambda_\ell}\right).$$
\end{theorem}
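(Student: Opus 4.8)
The plan is to reduce the Galois statement to the Euclidean one and then track how the constacyclic shift $\mathcal{T}_\Lambda$ behaves under transposition and under the Frobenius map. Recall from the remarks preceding the theorem that $\mathcal{C}^{\perp_\kappa}=\sigma^{e-\kappa}\left(\mathcal{C}^\perp\right)$. So it is enough to (i) determine the MT type of the Euclidean dual $\mathcal{C}^\perp$, and (ii) describe what applying $\sigma^{e-\kappa}$ entrywise does to a MT code.

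Step (ii) is formal: since $\sigma$ acts coordinatewise and commutes with permutations of coordinates, one reads off from Definition \ref{def_MT} that $\sigma\circ\mathcal{T}_\Lambda=\mathcal{T}_{\sigma(\Lambda)}\circ\sigma$, where $\sigma(\Lambda)=\left(\sigma(\lambda_1),\ldots,\sigma(\lambda_\ell)\right)$, and iterating gives $\sigma^{e-\kappa}\circ\mathcal{T}_\Lambda=\mathcal{T}_{\sigma^{e-\kappa}(\Lambda)}\circ\sigma^{e-\kappa}$. Consequently, if a linear code $\mathcal{D}$ of length $n$ is $\mathcal{T}_{\Lambda'}$-invariant, then $\sigma^{e-\kappa}(\mathcal{D})$ is $\mathcal{T}_{\sigma^{e-\kappa}(\Lambda')}$-invariant, with the same block lengths. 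I would also note that because $\mathcal{T}_\Lambda$ is a bijection of $\mathbb{F}_q^n$ and $\mathcal{C}$ is finite dimensional, $\mathcal{T}_\Lambda$-invariance forces $\mathcal{T}_\Lambda(\mathcal{C})=\mathcal{C}$, hence also $\mathcal{T}_\Lambda^{-1}(\mathcal{C})=\mathcal{C}$.

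The content is in step (i): computing the adjoint of $\mathcal{T}_\Lambda$ with respect to the Euclidean form. Working block by block, for the single $\lambda$-constacyclic shift $\tau_\lambda(a_0,\ldots,a_{m-1})=(\lambda a_{m-1},a_0,\ldots,a_{m-2})$ on $\mathbb{F}_q^m$, matching coefficients in $\langle\tau_\lambda(\mathbf{a}),\mathbf{b}\rangle$ shows that $\langle\tau_\lambda(\mathbf{a}),\mathbf{b}\rangle=\langle\mathbf{a},\tau_{1/\lambda}^{-1}(\mathbf{b})\rangle$ for all $\mathbf{a},\mathbf{b}$; assembling the blocks, as matrices over $\mathbb{F}_q$ one gets $\mathcal{T}_\Lambda^{t}=\mathcal{T}_{\Lambda'}^{-1}$ with $\Lambda'=\left(1/\lambda_1,\ldots,1/\lambda_\ell\right)$, and transposing this identity also gives $\mathcal{T}_{\Lambda'}^{t}=\mathcal{T}_\Lambda^{-1}$. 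Now for $\mathbf{b}\in\mathcal{C}^\perp$ and arbitrary $\mathbf{c}\in\mathcal{C}$ we have $\langle\mathbf{c},\mathcal{T}_{\Lambda'}(\mathbf{b})\rangle=\langle\mathcal{T}_{\Lambda'}^{t}(\mathbf{c}),\mathbf{b}\rangle=\langle\mathcal{T}_\Lambda^{-1}(\mathbf{c}),\mathbf{b}\rangle=0$, since $\mathcal{T}_\Lambda^{-1}(\mathbf{c})\in\mathcal{C}$. Hence $\mathcal{T}_{\Lambda'}(\mathcal{C}^\perp)\subseteq\mathcal{C}^\perp$, so $\mathcal{C}^\perp$ is $\Lambda'$-MT with block lengths $(m_1,\ldots,m_\ell)$.

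Combining the two steps, applying (ii) to $\mathcal{D}=\mathcal{C}^\perp$ shows that $\mathcal{C}^{\perp_\kappa}=\sigma^{e-\kappa}(\mathcal{C}^\perp)$ is $\mathcal{T}_{\sigma^{e-\kappa}(\Lambda')}$-invariant with block lengths $(m_1,\ldots,m_\ell)$, and $\sigma^{e-\kappa}(\Lambda')=\sigma^{e-\kappa}\left(1/\lambda_1,\ldots,1/\lambda_\ell\right)=\Delta_\kappa$, whose entries are nonzero since $\sigma$ is an automorphism of $\mathbb{F}_q$. That is precisely the assertion. The only step needing genuine care is the block-wise adjoint identity $\tau_\lambda^{t}=\tau_{1/\lambda}^{-1}$ (equivalently $\mathcal{T}_\Lambda^{t}=\mathcal{T}_{\Lambda'}^{-1}$); the rest is bookkeeping. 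Alternatively, step (i) can be quoted from the GPM description of the Euclidean dual in \cite{Eldin2}, but the direct adjoint computation keeps the proof self-contained.
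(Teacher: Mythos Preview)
Your proof is correct. Both your argument and the paper's rest on the same single computation, namely that the Euclidean transpose of the block shift satisfies $\mathcal{T}_\Lambda^{t}=\mathcal{T}_{\Lambda'}^{-1}$ with $\Lambda'=(1/\lambda_1,\ldots,1/\lambda_\ell)$. The only difference is organizational: the paper works directly with the $\kappa$-Galois form, writing in one line $\langle\mathcal{T}_\Lambda^{-1}(\mathbf{c}),\mathbf{a}\rangle_\kappa=\langle\mathbf{c},\mathcal{T}_{\Delta_\kappa}(\mathbf{a})\rangle_\kappa$, whereas you factor the same identity as ``Euclidean adjoint'' followed by ``commute $\sigma^{e-\kappa}$ past $\mathcal{T}_{\Lambda'}$'' via the relation $\mathcal{C}^{\perp_\kappa}=\sigma^{e-\kappa}(\mathcal{C}^\perp)$. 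Your decomposition is slightly longer but makes the two independent ingredients (transpose of the shift, Frobenius-equivariance of the shift) visible; the paper's version is more compressed but hides that the Galois adjoint computation is really the Euclidean one plus a Frobenius twist.
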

\begin{proof}
Suppose $\mathbf{a}\in\mathcal{C}^{\perp_\kappa}$. Then $0=\langle \mathcal{T}_\Lambda^{-1}\left(\mathbf{c}\right),\mathbf{a}\rangle_\kappa=\langle \mathbf{c},\mathcal{T}_{\Delta_\kappa}\left(\mathbf{a}\right)\rangle_\kappa$ for all $\mathbf{c}\in\mathcal{C}$. This implies that $\mathcal{T}_{\Delta_\kappa}\left(\mathbf{a}\right)\in\mathcal{C}^{\perp_\kappa}$. Therefore, $\mathcal{C}^{\perp_\kappa}$ is $\mathcal{T}_{\Delta_\kappa}$-invariant. 
\end{proof}
Since the dual of a MT code was found by Theorem \ref{dualMT} to be MT as well, one might expect a formula for a GPM of this dual. We use $\mathbf{H}_\kappa$ (respectively, $\mathbf{H}$) to denote a GPM for $\mathcal{C}^{\perp_\kappa}$ (respectively, $\mathcal{C}^\perp$). It is evident that $\mathbf{H}_\kappa=\sigma^{e-\kappa}\left(\mathbf{H}\right)$ because $\mathcal{C}^{\perp_\kappa}=\sigma^{e-\kappa}\left(\mathcal{C}^\perp\right)$. Then it suffices to introduce a GPM formula for $\mathcal{C}^\perp$. This is exactly what the following theorem does, the proof of which can be found in \cite{Eldin2}.

\begin{theorem}
\label{GPMforDual}
Let $0\le\kappa<e$ and let $\Lambda=\left(\lambda_1,\lambda_2,\ldots,\lambda_\ell\right)$, where $0\ne\lambda_j\in\mathbb{F}_q$ for $1\le j\le \ell$. Let $\mathbf{G}$ be the reduced GPM of a $\Lambda$-MT code $\mathcal{C}$ over $\mathbb{F}_q$ of index $\ell$ and block lengths $\left(m_1,m_2,\ldots,m_\ell\right)$. Suppose $\mathbf{A}$ is the polynomial matrix that satisfies the identical equation \eqref{identicalMT} of $\mathbf{G}$. Construct a polynomial matrix $\mathbf{H}$ such that 
\begin{equation*}
\mathrm{Col}_{i}\left(\mathbf{H}\right)\equiv\mathrm{Row}_{i}\left(\mathrm{diag}\left[x^{m_j}\right] \mathbf{A}\left(\frac{1}{x}\right) \mathrm{diag}\left[x^{-d_j}\right]\right) \pmod{x^{m_i}-\frac{1}{\lambda_i}}
\end{equation*}
for $1\le i\le \ell$. Here, $\mathrm{Col}_{i}$ denotes the $i$-th column, $\mathrm{Row}_{i}$ denotes the $i$-th row, $d_j=\deg\left(\mathrm{Ent}_{j,j}\left(\mathbf{G}\right)\right)$ for $1\le j\le \ell$, and $\mathbf{A}\left(\frac{1}{x}\right)$ is the matrix obtained from $\mathbf{A}$ by replacing $x$ by $\frac{1}{x}$. Then $\mathbf{H}$ is a GPM for $\mathcal{C}^\perp$, while $\mathbf{H}_\kappa=\sigma^{e-\kappa}\left(\mathbf{H}\right)$ is a GPM for $\mathcal{C}^{\perp_\kappa}$.
\end{theorem}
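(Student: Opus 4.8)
The plan is to reduce the statement to the Euclidean case and then match dimensions on an explicitly triangular matrix. The whole assertion follows from the case $\kappa=0$: the text already records $\mathcal{C}^{\perp_\kappa}=\sigma^{e-\kappa}\left(\mathcal{C}^\perp\right)$ and $\mathbf{H}_\kappa=\sigma^{e-\kappa}\left(\mathbf{H}\right)$, and $\sigma^{e-\kappa}$, acting entrywise on $\oplus_{j=1}^{\ell}\mathbb{F}_q[x]$, intertwines $\mathcal{T}_{\Delta_0}$ with $\mathcal{T}_{\Delta_\kappa}$ (where $\Delta_0=(1/\lambda_1,\ldots,1/\lambda_\ell)$), so it carries an $\mathbb{F}_q[x]$-basis of the $\Delta_0$-MT code $\mathcal{C}^\perp$ to an $\mathbb{F}_q[x]$-basis of the $\Delta_\kappa$-MT code $\sigma^{e-\kappa}\left(\mathcal{C}^\perp\right)=\mathcal{C}^{\perp_\kappa}$; thus it sends a GPM of $\mathcal{C}^\perp$ to a GPM of $\mathcal{C}^{\perp_\kappa}$. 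Hence it suffices to prove that $\mathbf{H}$ is a GPM for $\mathcal{C}^\perp$, and everything below is written for the Euclidean inner product.

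First I would put $\langle\cdot,\cdot\rangle$ into polynomial-vector form: writing $\phi\left(\mathbf{c}\right)=\left(c_1,\ldots,c_\ell\right)$ and $\phi\left(\mathbf{a}\right)=\left(a_1,\ldots,a_\ell\right)$ as in \eqref{Poly_Vect}, a direct expansion shows that $\langle\mathbf{c},\mathbf{a}\rangle$ is, block by block, a prescribed coefficient of the product of $c_j(x)$ with the reciprocal $x^{m_j-1}a_j(1/x)$, while the identity $\langle\mathcal{T}_\Lambda^{-1}\left(\mathbf{c}\right),\mathbf{a}\rangle=\langle\mathbf{c},\mathcal{T}_{\Delta_0}\left(\mathbf{a}\right)\rangle$ (the $\kappa=0$ instance of the displayed identity in the proof of Theorem \ref{dualMT}) trades a constacyclic shift on the left argument for one on the right. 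Feeding this into the equivalence ``$\mathbf{a}\in\mathcal{C}^\perp$ iff $\langle\mathbf{c},\mathbf{a}\rangle=0$ for every $\mathbf{c}$ in the $\mathbb{F}_q[x]$-span of the rows of $\mathbf{G}$'', I expect to obtain a congruence criterion of the shape: $\mathbf{a}\in\mathcal{C}^\perp$ if and only if a suitable twisted reciprocal $\mathbf{a}^{\sharp}$ of $\phi\left(\mathbf{a}\right)$ --- assembled from the reciprocals $x^{m_j-1}a_j(1/x)$, corrected by monomial factors and by the replacement of $\lambda_j$ with $1/\lambda_j$ --- is annihilated by $\mathbf{G}$ in $\oplus_{j=1}^{\ell}\mathbb{F}_q[x]/\langle x^{m_j}-\lambda_j\rangle$, one-sidedly and modulo the block moduli. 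Getting this criterion exactly right is, I expect, the main obstacle: the monomial corrections must be chosen so that they reproduce the factors $\mathrm{diag}\left[x^{m_j}\right]$, $\mathrm{diag}\left[x^{-d_j}\right]$ and the reductions $\bmod\ x^{m_i}-1/\lambda_i$ appearing in the definition of $\mathbf{H}$, and the different block lengths $m_j$ must be reconciled inside a single congruence.

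With the criterion in hand, I would test it on the rows of $\mathbf{H}$. By construction the $i$-th column of $\mathbf{H}$ is the reduction modulo $x^{m_i}-1/\lambda_i$ of the $i$-th row of $\mathrm{diag}\left[x^{m_j}\right]\mathbf{A}(1/x)\mathrm{diag}\left[x^{-d_j}\right]$, so applying the twisted reciprocal $\sharp$ to the rows of $\mathbf{H}$ undoes the substitution $x\mapsto 1/x$ together with the two diagonal scalings and returns the entries of $\mathbf{A}$, up to invertible diagonal factors (nonzero constants and powers of $x$ modulo the block moduli) that are irrelevant for membership. Thus the criterion, applied to $\mathbf{H}$, reduces --- up to those harmless factors --- to the statement that the rows (equivalently, by symmetry of $\mathrm{diag}\left[x^{m_j}-\lambda_j\right]$, the columns) of $\mathbf{A}$ are annihilated by $\mathbf{G}$ modulo the $x^{m_j}-\lambda_j$, i.e.\ to the identical equation $\mathbf{A}\mathbf{G}=\mathrm{diag}\left[x^{m_j}-\lambda_j\right]$ of \eqref{identicalMT}, whose right-hand side vanishes modulo the block moduli. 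Hence every row of $\mathbf{H}$ lies in $\mathcal{C}^\perp$, so the code $\mathcal{D}$ generated by the rows of $\mathbf{H}$ together with $\oplus_{j=1}^{\ell}\langle x^{m_j}-1/\lambda_j\rangle$ satisfies $\mathcal{D}\subseteq\mathcal{C}^\perp$.

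It remains to prove $\mathcal{D}=\mathcal{C}^\perp$, which I would do by a rank count. Since $\mathbf{G}$ is the \emph{reduced} GPM it is triangular with $\mathrm{Ent}_{j,j}\left(\mathbf{G}\right)$ monic of degree $d_j$ and $\sum_j d_j=\deg\det\mathbf{G}$; the identical equation then forces $\mathbf{A}$ to be triangular with $\mathrm{Ent}_{j,j}\left(\mathbf{A}\right)=\left(x^{m_j}-\lambda_j\right)/\mathrm{Ent}_{j,j}\left(\mathbf{G}\right)$ of degree $m_j-d_j$, and carrying this through the formula for $\mathbf{H}$ shows $\mathbf{H}$ is triangular of the opposite orientation (because of the row/column exchange in its definition) with $j$-th diagonal entry the reciprocal polynomial of $\mathrm{Ent}_{j,j}\left(\mathbf{A}\right)$; moreover that reciprocal divides $x^{m_j}-1/\lambda_j$, which both confirms that the reduction modulo $x^{m_j}-1/\lambda_j$ is harmless and absorbs the degenerate case $d_j=0$. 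A determinantal-divisor count applied to the stacked matrix (as in Corollary \ref{GetGPM_Corr2}) then gives $\mathrm{dim}\left(\mathcal{D}\right)=\deg\det\mathbf{G}=n-\mathrm{dim}\left(\mathcal{C}\right)$ by \eqref{dimensionMT}; on the other hand $\mathcal{C}^\perp$ is $\Delta_0$-MT by Theorem \ref{dualMT} with $\mathrm{dim}\left(\mathcal{C}^\perp\right)=n-\mathrm{dim}\left(\mathcal{C}\right)$, so $\mathrm{dim}\left(\mathcal{D}\right)=\mathrm{dim}\left(\mathcal{C}^\perp\right)$. Together with $\mathcal{D}\subseteq\mathcal{C}^\perp$ this yields $\mathcal{D}=\mathcal{C}^\perp$; reducing the stacked matrix to Hermite normal form (Corollary \ref{GetGPM_Corr1}) then exhibits $\mathbf{H}$ as a GPM for $\mathcal{C}^\perp$, and $\mathbf{H}_\kappa=\sigma^{e-\kappa}\left(\mathbf{H}\right)$ as a GPM for $\mathcal{C}^{\perp_\kappa}$ by the reduction in the first paragraph. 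The only laborious ingredient is the exponent-and-modulus bookkeeping behind the congruence criterion; the rest is a short triangularity-plus-dimension argument.
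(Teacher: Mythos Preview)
The paper does not prove this theorem: immediately before the statement it says ``the proof of which can be found in \cite{Eldin2}'', so there is no in-paper argument to compare against. What follows is therefore an assessment of your proposal on its own terms.

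Your overall architecture is sound and matches what one expects the cited proof to do: the reduction from $\kappa$ to $\kappa=0$ via $\mathcal{C}^{\perp_\kappa}=\sigma^{e-\kappa}(\mathcal{C}^\perp)$ is exactly right, and the endgame---observing that the reduced $\mathbf{G}$ is upper triangular with diagonal degrees $d_j$, hence $\mathbf{A}$ is upper triangular with diagonal entries $(x^{m_j}-\lambda_j)/\mathrm{Ent}_{j,j}(\mathbf{G})$ of degree $m_j-d_j$, hence $\mathbf{H}$ is lower triangular with $\deg\det\mathbf{H}=n-\deg\det\mathbf{G}$, so $\dim\mathcal{D}=\deg\det\mathbf{G}=\dim\mathcal{C}^\perp$---is correct and is the standard way to upgrade ``rows of $\mathbf{H}$ lie in $\mathcal{C}^\perp$'' to ``$\mathbf{H}$ is a GPM''.

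The one place I would ask you to tighten is the middle step. You reduce the verification that $\mathrm{Row}_j(\mathbf{H})\in\mathcal{C}^\perp$ to the claim that ``the rows (equivalently, by symmetry of $\mathrm{diag}[x^{m_j}-\lambda_j]$, the columns) of $\mathbf{A}$ are annihilated by $\mathbf{G}$'', and then invoke $\mathbf{A}\mathbf{G}=\mathrm{diag}[x^{m_j}-\lambda_j]$. But $\mathbf{A}\mathbf{G}=D$ with $D$ diagonal does \emph{not} imply $\mathbf{G}\mathbf{A}=D$ when the $m_j$ are distinct (indeed $\mathbf{G}\mathbf{A}=\mathbf{G}D\mathbf{G}^{-1}$ over $\mathbb{F}_q(x)$), so the passage from rows to columns is not a symmetry in the naive sense. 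What does hold, and is what you actually need, is that the $i$-th \emph{row} of $\mathbf{G}\mathbf{A}$ lies in $(x^{m_i}-\lambda_i)\,\mathbb{F}_q[x]^\ell$: this follows because the rows of $\mathbf{A}$ express the generators of $\oplus_j\langle x^{m_j}-\lambda_j\rangle$ in the $\mathbf{G}$-basis, so each row of $\mathbf{G}\mathbf{A}$ is an $\mathbb{F}_q[x]$-combination of codewords of $\oplus_j\langle x^{m_j}-\lambda_j\rangle$ whose $i$-th projection already vanishes mod $x^{m_i}-\lambda_i$; equivalently, $\mathbf{G}\mathbf{A}=\mathbf{G}D\mathbf{G}^{-1}$ has polynomial entries and its characteristic roots are the $x^{m_j}-\lambda_j$, so each row is divisible on the left by the corresponding factor once one unwinds the module picture. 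Once you state this row-wise divisibility precisely, your congruence criterion goes through and the rest of the argument is complete. The bookkeeping you flag (different $m_j$, the diagonal corrections $\mathrm{diag}[x^{m_j}]$ and $\mathrm{diag}[x^{-d_j}]$) is genuine but routine; the only conceptual gap is the ``symmetry'' shortcut.
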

We shall now consider a sufficient condition under which the Galois hull of a MT code is MT as well.
\begin{corollary}
\label{coro_MT_hull}
Let $\Lambda=\left(\lambda_1,\lambda_2,\ldots,\lambda_\ell\right)$ and let $\mathcal{C}$ be a $\Lambda$-MT code over $\mathbb{F}_q$ of index $\ell$ and block lengths $\left(m_1,m_2,\ldots,m_\ell\right)$. If $\lambda_j^{-p^{e-\kappa}}=\lambda_j$ for $1\le j\le \ell$, then $h_\kappa\left(\mathcal{C}\right)$ is $\Lambda$-MT code over $\mathbb{F}_q$ of index $\ell$ and block lengths $\left(m_1,m_2,\ldots,m_\ell\right)$.
\end{corollary}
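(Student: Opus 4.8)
The plan is to reduce the statement to two easy observations: first, that the hypothesis $\lambda_j^{-p^{e-\kappa}}=\lambda_j$ forces the shift constants of $\mathcal{C}^{\perp_\kappa}$ to coincide with $\Lambda$, so that $\mathcal{C}^{\perp_\kappa}$ is itself a $\Lambda$-MT code with the same block lengths as $\mathcal{C}$; and second, that the intersection of two $\mathcal{T}_\Lambda$-invariant subspaces of the same ambient space $\mathbb{F}_q^n$ is again $\mathcal{T}_\Lambda$-invariant, hence a $\Lambda$-MT code by Definition~\ref{def_MT}.

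For the first step I would apply Theorem~\ref{dualMT} directly: it gives that $\mathcal{C}^{\perp_\kappa}$ is $\Delta_\kappa$-MT with block lengths $\left(m_1,m_2,\ldots,m_\ell\right)$, where the $j$-th entry of $\Delta_\kappa$ is $\sigma^{e-\kappa}\!\left(1/\lambda_j\right)=\left(1/\lambda_j\right)^{p^{e-\kappa}}=\lambda_j^{-p^{e-\kappa}}$. The hypothesis says precisely that this equals $\lambda_j$ for every $1\le j\le\ell$, so $\Delta_\kappa=\Lambda$. Consequently $\mathcal{C}^{\perp_\kappa}$ is a $\Lambda$-MT code of index $\ell$ and block lengths $\left(m_1,m_2,\ldots,m_\ell\right)$; in particular, when viewed inside $\mathbb{F}_q^n$ with $n=\sum_{j=1}^{\ell}m_j$, it is a $\mathcal{T}_\Lambda$-invariant subspace, with the same transformation $\mathcal{T}_\Lambda$ that governs $\mathcal{C}$.

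For the second step, recall from Definition~\ref{def_LCD_SO} that $h_\kappa\left(\mathcal{C}\right)=\mathcal{C}\cap\mathcal{C}^{\perp_\kappa}$. Both $\mathcal{C}$ and $\mathcal{C}^{\perp_\kappa}$ are now subspaces of the same $\mathbb{F}_q^n$ that are stable under $\mathcal{T}_\Lambda$, so if $\mathbf{v}\in\mathcal{C}\cap\mathcal{C}^{\perp_\kappa}$ then $\mathcal{T}_\Lambda\left(\mathbf{v}\right)$ lies in each of $\mathcal{C}$ and $\mathcal{C}^{\perp_\kappa}$, hence in their intersection; thus $h_\kappa\left(\mathcal{C}\right)$ is a $\mathcal{T}_\Lambda$-invariant subspace of $\mathbb{F}_q^n$, i.e.\ a $\Lambda$-MT code of index $\ell$ and block lengths $\left(m_1,m_2,\ldots,m_\ell\right)$ by Definition~\ref{def_MT}. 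The only point requiring a little care is the bookkeeping of the Frobenius exponent in identifying $\sigma^{e-\kappa}\!\left(1/\lambda_j\right)$ with $\lambda_j^{-p^{e-\kappa}}$ and in checking that Theorem~\ref{dualMT} preserves the block-length tuple so that $\mathcal{C}$ and $\mathcal{C}^{\perp_\kappa}$ genuinely share the same ambient space and the same $\mathcal{T}_\Lambda$; beyond this there is no substantive obstacle.
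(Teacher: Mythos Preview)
Your proposal is correct and follows essentially the same approach as the paper: use the hypothesis to identify $\Delta_\kappa$ with $\Lambda$ via Theorem~\ref{dualMT}, and then observe that the intersection of two $\mathcal{T}_\Lambda$-invariant subspaces is $\mathcal{T}_\Lambda$-invariant. The paper's proof is simply a terser version of your argument.
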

\begin{proof}
If $\lambda_j^{-p^{e-\kappa}}=\lambda_j$, then $\Delta_\kappa=\Lambda$. Then $\mathcal{C}^{\perp_\kappa}$, and hence $\mathcal{C}\cap\mathcal{C}^{\perp_\kappa}$, is $\Lambda$-MT by Theorem \ref{dualMT}.
\end{proof}

\begin{example}
\label{Ex_SO_GQC}
Let $p=2$, $e=4$, $q=16$, $n=9$, $\ell=3$, $m_1=3$, $m_2=2$, $m_3=4$, and $\lambda_1=\lambda_2=\lambda_3=1$. Consider the GQC code $\mathcal{C}$ over $\mathbb{F}_q$ of index $\ell$, block lengths $\left(m_1,m_2,m_3\right)$, and reduced GPM
\begin{equation}
\label{Ex_GQC}
\mathbf{G}=\begin{pmatrix}
x^2 + x + 1 &  1 &  \theta^5 x + \theta^5 \\
0 &  x + 1  & \theta^5 x^2 + \theta^5 \\
0 &  0 &  x^3 + x^2 + x + 1  
\end{pmatrix}\end{equation}
where $\theta\in\mathbb{F}_q$ such that $\theta^4+\theta+1=0$. The matrix that satisfies the identical equation \eqref{identicalMT} of $\mathbf{G}$ is
\begin{equation*}
\mathbf{A}=\begin{pmatrix}
x + 1 & 1 & 0 \\
0 & x + 1 & \theta^{5} \\
0 & 0 & x + 1
\end{pmatrix}.
\end{equation*}
The construction of a GPM $\mathbf{H}$ for the Euclidean dual $\mathcal{C}^\perp$ (via an application of Theorem \ref{GPMforDual}) gives  
\begin{equation*}
\mathrm{Col}_{i}\left(\mathbf{H}\right)\equiv\mathrm{Row}_{i}\left(\mathrm{diag}\left[x^{3},x^{2},x^{4}\right] \mathbf{A}\left(\frac{1}{x}\right) \mathrm{diag}\left[x^{-2},x^{-1},x^{-3}\right]\right) \pmod{x^{m_i}-1}
\end{equation*}
for $i=1,2,3$. That is,
\begin{equation*}
\mathbf{H}=\begin{pmatrix}
 x+1 & 0 & 0 \\
x^2 &  x+1 & 0 \\
0 & \theta^{5}x &  x+1
\end{pmatrix}.
\end{equation*}
If we let $\kappa=3$, then, by Theorem \ref{dualMT}, the $\kappa$-Galois dual $\mathcal{C}^{\perp_3}$ of $\mathcal{C}$ is GQC as well. By using Theorem \ref{GPMforDual} again, a GPM for $\mathcal{C}^{\perp_3}$ is
\begin{equation*}
\mathbf{H}_3=\sigma\left(\mathbf{H}\right)=\begin{pmatrix}
 x+1 & 0 & 0 \\
x^2 &  x+1 & 0 \\
0 & \theta^{10}x &  x+1
\end{pmatrix}.
\end{equation*}
We claim that $\mathcal{C}$ is $\kappa$-Galois self-orthogonal. In fact, $\mathcal{C}\subseteq\mathcal{C}^{\perp_3}$ if and only if $\mathbf{G}=\mathbf{M}\mathbf{H}_3$ for some polynomial matrix $\mathbf{M}$. Our claim is true because $\mathbf{G}=\mathbf{M}\mathbf{H}_3$ for
\begin{equation*}
\mathbf{M}=\begin{pmatrix}
1 & 1 & \theta^{5} \\
x^{2} & x + 1 & \theta^{5} x + \theta^{5} \\
\theta^{10} x^{3} & \theta^{10} x^{2} + \theta^{10} x & x^{2} + 1
\end{pmatrix}.
\end{equation*}
\end{example}

\section{A GPM of $h_\kappa\left(\mathcal{C}\right)$}
\label{main_results_section}
Our first goal in this section is to take advantage of Theorem \ref{hongwei} and provide a formula for the dimension of $h_\kappa\left(\mathcal{C}\right)$ from a given GPM $\mathbf{G}$ of a MT code $\mathcal{C}$. Then the second goal is to get a deeper result than simply applying Theorem \ref{hongwei} to the specific case of MT codes, but deducing a GPM for $h_\kappa\left(\mathcal{C}\right)$. In view of Corollary \ref{coro_MT_hull}, $h_\kappa\left(\mathcal{C}\right)$ for a $\Lambda$-MT code $\mathcal{C}$ is $\Lambda$-MT if $\lambda_j^{-p^{-\kappa}}=\lambda_j$ for $1\le j\le \ell$. Throughout this section, we assume that these conditions remain satisfied and we write this as $\Lambda^{-p^{-\kappa}}=\Lambda$. We further let $\mathcal{T}_\Lambda$ denote the linear transformation in Definition \ref{def_MT} and let $N$ denote the order of $\mathcal{T}_\Lambda$, i.e., $\mathcal{T}_\Lambda^N$ is the identity map. It is evident that if $\Lambda=\left(\lambda_1,\lambda_2,\ldots,\lambda_\ell\right)$ and the block lengths are $\left(m_1,m_2,\ldots,m_\ell\right)$, then $N=\mathrm{lcm}\left\{t_1 m_1,t_2 m_2,\ldots,t_\ell m_\ell\right\}$, where $t_j$ is the multiplicative order of $\lambda_j$ for $1\le j\le \ell$ and $\mathrm{lcm}$ stands for the least common multiple. We start with a few preliminaries before establishing the main findings.

\begin{lemma}
\label{lem_shift}
Assume that $\Lambda^{-p^{-\kappa}}=\Lambda$. Let $q=p^e$ and $0\le\kappa<e$. Then for any $\mathbf{a},\mathbf{b}\in\mathbb{F}_q^n$,
\begin{equation*}
\langle \mathcal{T}_\Lambda\left(\mathbf{a}\right),\mathbf{b}\rangle_\kappa=\langle \mathbf{a},\mathcal{T}_\Lambda^{N-1}\left(\mathbf{b}\right)\rangle_\kappa.
\end{equation*}
\end{lemma}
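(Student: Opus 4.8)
The plan is to deduce the lemma from a single fact: under the hypothesis $\Lambda^{-p^{-\kappa}}=\Lambda$, the transformation $\mathcal{T}_\Lambda$ is an isometry of $\mathbb{F}_q^n$ for the $\kappa$-Galois inner product, i.e.\ $\langle \mathcal{T}_\Lambda(\mathbf{a}),\mathcal{T}_\Lambda(\mathbf{b})\rangle_\kappa=\langle \mathbf{a},\mathbf{b}\rangle_\kappa$ for all $\mathbf{a},\mathbf{b}\in\mathbb{F}_q^n$. Once this is available the lemma is immediate: since $\mathcal{T}_\Lambda$ has order $N$ we have $\mathcal{T}_\Lambda^{-1}=\mathcal{T}_\Lambda^{N-1}$, so replacing $\mathbf{b}$ by $\mathcal{T}_\Lambda\bigl(\mathcal{T}_\Lambda^{N-1}(\mathbf{b})\bigr)$ in the isometry identity gives
\[
\langle \mathcal{T}_\Lambda(\mathbf{a}),\mathbf{b}\rangle_\kappa=\bigl\langle \mathcal{T}_\Lambda(\mathbf{a}),\,\mathcal{T}_\Lambda\bigl(\mathcal{T}_\Lambda^{N-1}(\mathbf{b})\bigr)\bigr\rangle_\kappa=\langle \mathbf{a},\mathcal{T}_\Lambda^{N-1}(\mathbf{b})\rangle_\kappa .
\]

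To establish the isometry claim I would reuse the computation underlying the proof of Theorem \ref{dualMT}. That argument rests on the adjunction identity $\langle \mathcal{T}_\Lambda^{-1}(\mathbf{c}),\mathbf{a}\rangle_\kappa=\langle \mathbf{c},\mathcal{T}_{\Delta_\kappa}(\mathbf{a})\rangle_\kappa$, valid for all $\mathbf{c},\mathbf{a}\in\mathbb{F}_q^n$, where $\Delta_\kappa=\sigma^{e-\kappa}\bigl(\tfrac{1}{\lambda_1},\ldots,\tfrac{1}{\lambda_\ell}\bigr)$; but $\Delta_\kappa$ is exactly the tuple denoted $\Lambda^{-p^{-\kappa}}$, so the running hypothesis says precisely that $\Delta_\kappa=\Lambda$. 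Substituting $\Delta_\kappa=\Lambda$ and then replacing $\mathbf{c}$ by $\mathcal{T}_\Lambda(\mathbf{c})$ turns that identity into $\langle \mathbf{c},\mathbf{a}\rangle_\kappa=\langle \mathcal{T}_\Lambda(\mathbf{c}),\mathcal{T}_\Lambda(\mathbf{a})\rangle_\kappa$, which is the required isometry. If a self-contained check is preferred instead, one can verify it directly: the inner product splits as a sum over the $\ell$ blocks and $\mathcal{T}_\Lambda$ acts blockwise, and on a block of length $m$ with constant $\lambda$ the only discrepancy between $\langle \mathcal{T}_\Lambda(\mathbf{a}),\sigma^\kappa\mathcal{T}_\Lambda(\mathbf{b})\rangle$ and $\langle \mathbf{a},\sigma^\kappa(\mathbf{b})\rangle$ is the factor $\lambda^{1+p^{\kappa}}$ appearing on the wrapped-around term, and $\lambda^{1+p^{\kappa}}=1$ is equivalent to $\lambda^{-p^{e-\kappa}}=\lambda$ after raising to the power $p^{e-\kappa}$ and using $\lambda^{p^{e}}=\lambda$.

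I do not expect any genuine obstacle here; the content is bookkeeping. The points to watch are that $\mathcal{T}_\Lambda^{-1}$ sends the last coordinate of a block to the first position, multiplied by $\lambda^{-1}$ (not $\lambda$); that the $\kappa$-Galois form is not symmetric, so the two sides of the claimed identity must be kept in the stated order; and that the reduction $\mathcal{T}_\Lambda^{-1}=\mathcal{T}_\Lambda^{N-1}$ is exactly where finiteness of the order $N$ is used. Everything else is a direct substitution once the isometry property has been recorded.
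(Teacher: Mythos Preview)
Your proof is correct and, at the level of the actual computation, essentially the same as the paper's. The paper expands $\langle \mathcal{T}_\Lambda(\mathbf{a}),\mathbf{b}\rangle_\kappa$ block by block, pulls the factor $\lambda_j$ inside $\sigma^\kappa$ as $\lambda_j^{p^{e-\kappa}}$, replaces it by $\lambda_j^{-1}$ via the hypothesis, and recognizes the result as $\langle \mathbf{a},\mathcal{T}_\Lambda^{-1}(\mathbf{b})\rangle_\kappa=\langle \mathbf{a},\mathcal{T}_\Lambda^{N-1}(\mathbf{b})\rangle_\kappa$. Your self-contained check of the isometry $\langle \mathcal{T}_\Lambda(\mathbf{a}),\mathcal{T}_\Lambda(\mathbf{b})\rangle_\kappa=\langle\mathbf{a},\mathbf{b}\rangle_\kappa$ is the very same bookkeeping with $\mathbf{b}$ replaced by $\mathcal{T}_\Lambda(\mathbf{b})$, and the passage from isometry to the stated adjunction is then a one-line substitution. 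The only genuinely different ingredient you offer is the shortcut through the adjunction identity used in the proof of Theorem~\ref{dualMT}, specialized to $\Delta_\kappa=\Lambda$; that is a pleasant way to recycle earlier material, but since the paper invokes that identity there without proof, a self-contained argument still comes down to the same coordinate verification.
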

\begin{proof}
Subdivide each of the vectors $\mathbf{a}$ and $\mathbf{b}$ into $\ell$ blocks of lengths $m_j$ as in Definition  \ref{def_MT}. Then
\begin{equation*}
\begin{split}
\langle \mathcal{T}_\Lambda\left(\mathbf{a}\right),\mathbf{b}\rangle_\kappa&=\sum_{j=1}^{\ell}\left( \lambda_j a_{m_j-1,j}\sigma^\kappa\left(b_{0,j}\right) + \sum_{u=1}^{m_j-1} a_{u-1,j}\sigma^\kappa\left(b_{u,j}\right)\right)\\
&=\sum_{j=1}^{\ell}\left(  a_{m_j-1,j}\sigma^\kappa\left(\lambda_j^{p^{-\kappa}} b_{0,j}\right) + \sum_{u=1}^{m_j-1} a_{u-1,j}\sigma^\kappa\left(b_{u,j}\right)\right)\\
&=\sum_{j=1}^{\ell}\left(  a_{m_j-1,j}\sigma^\kappa\left(\lambda_j^{-1} b_{0,j}\right) + \sum_{u=1}^{m_j-1} a_{u-1,j}\sigma^\kappa\left(b_{u,j}\right)\right) \\
&=\langle \mathbf{a},\mathcal{T}_\Lambda^{N-1}\left(\mathbf{b}\right)\rangle_\kappa
\end{split}
\end{equation*}
where we used that $\lambda_j^{p^{-\kappa}}=\lambda_j^{-1}$ for $1\le j \le \ell$.
\end{proof}

From \eqref{Poly_Vect}, any vector $\mathbf{a}\in\mathbb{F}_q^n$ is represented by the polynomial vector $\phi\left(\mathbf{a}\right)$. By $\phi\left(\mathbf{a}\right)\left(\frac{1}{x} \right)$, we shall mean to replace $x$ by $\frac{1}{x}$ in $\phi\left(\mathbf{a}\right)$. The following important result expresses the $\kappa$-Galois inner product in terms of the polynomial vector representation.
\begin{lemma}
\label{lem1}
Assume that $\Lambda^{-p^{-\kappa}}=\Lambda$. Then for any $\mathbf{a},\mathbf{b}\in\mathbb{F}_q^n$,
\begin{equation*}
\phi\left( \mathbf{a}\right)\mathrm{diag}\left[ \frac{x^N-1}{x^{m_j}-\lambda_j}\right]  \sigma^\kappa\left(\phi\left(\mathbf{b}\right)\left(\frac{1}{x}\right) \mathrm{diag}\left[ x^{m_j}\right] \right)^t \equiv \sum_{\iota=0}^{N-1}\langle\mathbf{a},\mathcal{T}^\iota_\Lambda\left(\mathbf{b}\right) \rangle_\kappa \ x^\iota \pmod{\left(x^N-1\right)}.
\end{equation*} 
\end{lemma}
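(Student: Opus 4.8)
The plan is to use the fact that both sides of the asserted congruence depend on $\mathbf{a}$ and $\mathbf{b}$ in the same (bi-additive, semilinear) way, and therefore to reduce the verification to pairs of standard basis vectors, after which everything becomes an explicit bookkeeping of exponents. Fix the basis $\{\mathbf{e}_{i,j}\}$ of $\mathbb{F}_q^n$ where $\mathbf{e}_{i,j}$ has a $1$ in position $i$ of the $j$-th block ($0\le i\le m_j-1$). The left-hand side is $\mathbb{F}_q$-linear in $\mathbf{a}$ (since $\phi$ is $\mathbb{F}_q$-linear and matrix multiplication is linear), and additive in $\mathbf{b}$ with $\mathbf{b}\mapsto\alpha\mathbf{b}$ multiplying it by $\sigma^\kappa(\alpha)$ (because $\sigma^\kappa$ applied to a polynomial vector scales each coefficient by $\sigma^\kappa$). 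By the bilinearity properties of $\langle\cdot,\cdot\rangle_\kappa$ listed after Definition~\ref{Def_Galois_inner} and the $\mathbb{F}_q$-linearity of $\mathcal{T}_\Lambda$, the right-hand side has exactly the same behaviour. Hence it suffices to prove the congruence for $\mathbf{a}=\mathbf{e}_{u_0,j_0}$ and $\mathbf{b}=\mathbf{e}_{i_0,j_1}$. If $j_0\ne j_1$ both sides vanish (the polynomial-vector supports lie in different coordinates on the left, and $\mathcal{T}^\iota_\Lambda$ preserves blocks on the right), so we are reduced to the one-block situation. Write $m=m_{j_0}$, $\lambda=\lambda_{j_0}$, $t=\mathrm{ord}(\lambda)$, and recall $t m\mid N$, so $K:=N/m$ is a multiple of $t$ and $\lambda^{K}=1$.

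Next I would record the explicit polynomial division
\begin{equation*}
\frac{x^{N}-1}{x^{m}-\lambda}=\sum_{k=0}^{K-1}\lambda^{-(k+1)}x^{mk},
\end{equation*}
verified by multiplying through by $x^{m}-\lambda$ and telescoping, the identity $\lambda^{K}=1$ being precisely what forces the constant term to equal $-1$. Since $\phi(\mathbf{e}_{u_0})=x^{u_0}$ and $\sigma^\kappa\!\big(\phi(\mathbf{e}_{i_0})(\tfrac1x)x^{m}\big)=x^{m-i_0}$, the left-hand side becomes
\begin{equation*}
x^{u_0}\Big(\sum_{k=0}^{K-1}\lambda^{-(k+1)}x^{mk}\Big)x^{m-i_0}=\sum_{k=0}^{K-1}\lambda^{-(k+1)}x^{\,u_0-i_0+m(k+1)}.
\end{equation*}
The exponents $u_0-i_0+m(k+1)$ for $0\le k\le K-1$ are pairwise distinct modulo $N=mK$ and run through all residues $\equiv u_0-i_0\pmod m$, so modulo $x^{N}-1$ the coefficient of $x^{\iota}$ is $0$ unless $\iota\equiv u_0-i_0\pmod m$, in which case there is a unique $k(\iota)$ with $m(k(\iota)+1)\equiv \iota-u_0+i_0\pmod{mK}$; dividing by $m$ and using $\lambda^{K}=1$ shows this coefficient equals $\lambda^{-M}$, where $M:=(\iota-u_0+i_0)/m\in\mathbb{Z}$.

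For the right-hand side I would use that iterating the constacyclic shift of Definition~\ref{def_MT} gives $(\mathcal{T}^{\iota}_{\lambda}\mathbf{e}_{i_0})_{u_0}=\lambda^{(i_0+\iota-u_0)/m}$ when $u_0\equiv i_0+\iota\pmod m$ and $0$ otherwise, whence $\langle\mathbf{e}_{u_0},\mathcal{T}^{\iota}_{\lambda}(\mathbf{e}_{i_0})\rangle_\kappa=\sigma^\kappa\!\big((\mathcal{T}^{\iota}_{\lambda}\mathbf{e}_{i_0})_{u_0}\big)=\lambda^{\,p^{\kappa}M}$ for $\iota\equiv u_0-i_0\pmod m$ and $0$ otherwise, with $M$ as above. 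Thus both sides are supported on the same exponents, and on those exponents the comparison reduces to $\lambda^{-M}=\lambda^{p^{\kappa}M}$ for every integer $M$, i.e.\ $\lambda^{p^{\kappa}+1}=1$; since the standing hypothesis $\Lambda^{-p^{-\kappa}}=\Lambda$ reads $\lambda^{p^{-\kappa}+1}=1$, raising to the power $p^{\kappa}$ gives exactly $\lambda^{p^{\kappa}+1}=1$. I expect this last matching of the two powers of $\lambda$ to be the only delicate point: the exponent $-M$ on the left arises from the division factor $\tfrac{x^{N}-1}{x^{m}-\lambda}$, whereas $p^{\kappa}M$ on the right arises from applying $\sigma^{\kappa}$ to the constacyclic ``wrap-around'' of $\mathbf{b}$, and it is precisely the compatibility condition relating $\Lambda$ and $\kappa$ that reconciles them; the remaining work (the polynomial division, the formula for $\mathcal{T}^{\iota}_{\lambda}$ on a basis vector, and reducing exponents of $x$ modulo $N$ and of $\lambda$ modulo $t$) is routine.
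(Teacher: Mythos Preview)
Your argument is correct, and it takes a genuinely different route from the paper's own proof.

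The paper proceeds by brute-force expansion: it writes $\phi(\mathbf{a})=(a_1(x),\ldots,a_\ell(x))$ and $\phi(\mathbf{b})=(b_1(x),\ldots,b_\ell(x))$ with all coefficients $a_{u,j},b_{u,j}$ in play, multiplies out the product as a nested sum over $j,u,v,s,h$, splits the sum, reduces modulo $x^N-1$, and then identifies each coefficient of $x^{N-\iota}$ with the explicit formula for $\langle\mathbf{a},\mathcal{T}_\Lambda^{N-\iota}(\mathbf{b})\rangle_\kappa$. The hypothesis on $\Lambda$ is invoked midway, as the identity $\lambda_j=\sigma^\kappa(\lambda_j^{-1})$, to pull the $\lambda_j^{s-1}$ factors inside $\sigma^\kappa$.

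Your approach instead exploits that both sides are $\mathbb{F}_q$-linear in $\mathbf{a}$ and $\sigma^\kappa$-semilinear in $\mathbf{b}$, reducing at once to a single pair of basis vectors in a single block. The computation then collapses to matching two monomial-supported polynomials, and the hypothesis appears only at the final step as $\lambda^{p^\kappa+1}=1$. This is conceptually cleaner and isolates exactly where the compatibility condition between $\Lambda$ and $\kappa$ is needed. The paper's direct expansion, by contrast, avoids the preliminary linearity reduction and yields along the way the explicit coefficient formula for $\langle\mathbf{a},\mathcal{T}_\Lambda^{N-\iota}(\mathbf{b})\rangle_\kappa$ in terms of the block decomposition $\iota=q_jm_j-r_j$, which is not visible in your argument. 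One small remark: your phrase ``raising to the power $p^\kappa$'' tacitly uses $\lambda^{p^e}=\lambda$ to pass from $\lambda^{p^{e-\kappa}+1}=1$ to $\lambda^{p^\kappa+1}=1$; it would be worth making that step explicit.
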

\begin{proof}
From \eqref{Poly_Vect} we see that $\phi\left( \mathbf{a}\right)=\left(a_1(x),a_2(x),\ldots,a_\ell(x) \right)$, where $a_j(x)=\sum_{u=0}^{m_j-1} a_{u,j}x^u$ for $1\le j\le \ell$. However, for any $1\le j\le \ell$, we note that 
\begin{equation*}
\begin{split}
\frac{x^N-1}{x^{m_j}-\lambda_j}&=\frac{\left(x^{m_j}\right)^{N/m_j}-\left(\lambda_j \right)^{N/m_j}}{x^{m_j}-\lambda_j}=x^{N-m_j}+\lambda_j x^{N-2m_j}+\lambda_j^2 x^{N-3m_j}+\cdots+ \lambda_j^{N/m_j-2} x^{m_j}+ \lambda_j^{N/m_j-1}\\
&=\sum_{s=1}^{N/m_j}\lambda_j^{s-1} x^{N-s m_j}.
\end{split}
\end{equation*}
Again, if $\phi\left( \mathbf{b}\right)=\left(b_1(x),b_2(x),\ldots,b_\ell(x) \right)$ and $b_j(x)=\sum_{u=0}^{m_j-1} b_{u,j}x^u$, then the $j$-th entry of $\sigma^\kappa\left(\phi\left(\mathbf{b}\right)\left(\frac{1}{x}\right) \mathrm{diag}\left[ x^{m_j}\right]\right)$ is
\begin{equation*}
\sum_{u=0}^{m_j-1} \sigma^\kappa\left(b_{u,j}\right)x^{m_j-u}=\sum_{v=1}^{m_j} \sigma^\kappa\left(b_{m_j-v,j}\right)x^{v}.
\end{equation*}
Using the assumption on $\Lambda$, it follows that $\lambda_j= \sigma^\kappa\left(\lambda_j^{p^{-\kappa}} \right)= \sigma^\kappa\left(\lambda_j^{-1} \right)$ for $1\le j\le \ell$. Therefore,
\begin{equation*}
\begin{split}
&\phi\left( \mathbf{a}\right)\mathrm{diag}\left[ \frac{x^N-1}{x^{m_j}-\lambda_j}\right]  \sigma^\kappa\left(\phi\left(\mathbf{b}\right)\left(\frac{1}{x}\right) \mathrm{diag}\left[ x^{m_j}\right] \right)^t \\
&=\sum_{j=1}^{\ell} \sum_{u=0}^{m_j-1} a_{u,j}x^u  \sum_{s=1}^{N/m_j}\lambda_j^{s-1} x^{N-s m_j} \sum_{v=1}^{m_j} \sigma^\kappa\left(b_{m_j-v,j}\right)x^{v}\\
&=\sum_{j=1}^{\ell} \sum_{h=1}^{2m_j-1} \mathop{\sum_{u=0}^{m_j-1} \sum_{v=1}^{m_j}}_{u+v=h} \sum_{s=1}^{N/m_j} \lambda_j^{s-1} a_{u,j}\sigma^\kappa\left(b_{m_j-v,j}\right) x^{N-s m_j+h}\\
&=\sum_{j=1}^{\ell} \sum_{h=1}^{2m_j-1} \mathop{\sum_{u=0}^{m_j-1} \sum_{v=1}^{m_j}}_{u+v=h} \sum_{s=1}^{N/m_j}  a_{u,j}\sigma^\kappa\left(\lambda_j^{-(s-1)} b_{m_j-v,j}\right) x^{N-s m_j+h}\\
&= \sum_{j=1}^{\ell} \sum_{h=1}^{2m_j-1} \mathop{\sum_{u=0}^{m_j-1} \sum_{v=1}^{m_j}}_{u+v=h} \sum_{s=2}^{N/m_j}  a_{u,j}\sigma^\kappa\left(\lambda_j^{-(s-1)} b_{m_j-v,j}\right) x^{N-s m_j+h} +\sum_{j=1}^{\ell} \sum_{h=1}^{2m_j-1} \mathop{\sum_{u=0}^{m_j-1} \sum_{v=1}^{m_j}}_{u+v=h}  a_{u,j}\sigma^\kappa\left( b_{m_j-v,j}\right) x^{N- m_j+h}.
\end{split}\end{equation*}
For each $1\le \iota \le N$ and $1\le j\le \ell$, write $\iota=q_j m_j - r_j$ for some integers $q_j$ and $0\le r_j \le m_j-1$. Then, the reduction of the above equation modulo $x^N-1$ yields
\begin{equation*}
\begin{split}
&\phi\left( \mathbf{a}\right)\mathrm{diag}\left[ \frac{x^N-1}{x^{m_j}-\lambda_j}\right]  \sigma^\kappa\left(\phi\left(\mathbf{b}\right)\left(\frac{1}{x}\right) \mathrm{diag}\left[ x^{m_j}\right] \right)^t \\
&\equiv \sum_{j=1}^{\ell} \sum_{h=1}^{2m_j-1} \mathop{\sum_{u=0}^{m_j-1} \sum_{v=1}^{m_j}}_{u+v=h} \sum_{s=2}^{N/m_j}  a_{u,j}\sigma^\kappa\left(\lambda_j^{-(s-1)} b_{m_j-v,j}\right) x^{N-s m_j+h}\\
&\qquad+\sum_{j=1}^{\ell} \sum_{h=1}^{m_j-1} \mathop{\sum_{u=0}^{m_j-1} \sum_{v=1}^{m_j}}_{u+v=h}  a_{u,j}\sigma^\kappa\left( b_{m_j-v,j}\right) x^{N- m_j+h}\\
&\qquad+\sum_{j=1}^{\ell} \sum_{h=m_j}^{2m_j-1} \mathop{\sum_{u=0}^{m_j-1} \sum_{v=1}^{m_j}}_{u+v=h}  a_{u,j}\sigma^\kappa\left( b_{m_j-v,j}\right) x^{h- m_j} \\
&\equiv \sum_{j=1}^{\ell} \sum_{\iota=1}^{N} \left(  \sum_{u=0}^{r_j-1} a_{u,j}\sigma^\kappa\left(\lambda_j^{-(q_j-1)} b_{m_j-r_j+u,j}\right)   + \sum_{u=r_j}^{m_j-1} a_{u,j}\sigma^\kappa\left(\lambda_j^{-q_j} b_{u-r_j,j}\right) \right) x^{N-\iota}\\
&=\sum_{\iota=1}^{N} \langle \mathbf{a},\mathcal{T}_\Lambda^{N-\iota}\left(\mathbf{b}\right)\rangle_\kappa \ x^{N-\iota}=\sum_{\iota=0}^{N-1}\langle\mathbf{a},\mathcal{T}^\iota_\Lambda\left(\mathbf{b}\right) \rangle_\kappa \ x^\iota.
\end{split}\end{equation*} 
\end{proof}

An example of the usefulness of Lemma \ref{lem1} is to characterize a GPM for the $\kappa$-Galois dual of a MT code as follows.
\begin{corollary}
\label{corollary1}
Let $\mathcal{C}$ be a $\Lambda$-MT code with a GPM $\mathbf{G}$. If $\mathbf{H}_\kappa$ is a GPM of $\mathcal{C}^{\perp_\kappa}$, then
\begin{equation*}
\mathbf{G}\ \mathrm{diag}\left[ \frac{x^N-1}{x^{m_j}-\lambda_j}\right]  \sigma^\kappa\left(\mathbf{H}_\kappa \left(\frac{1}{x}\right) \mathrm{diag}\left[ x^{m_j}\right] \right)^t \equiv \mathbf{0} \pmod{\left(x^N-1\right)}
\end{equation*} 
where by $\mathbf{H}_\kappa \left(\frac{1}{x}\right)$ we mean to replace $x$ by $\frac{1}{x}$ in $\mathbf{H}_\kappa$.
\end{corollary}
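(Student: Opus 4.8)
The plan is to reduce the stated matrix identity to a coordinatewise application of Lemma~\ref{lem1}. Write $\mathbf{g}_i=\mathrm{Row}_i(\mathbf{G})$ and $\mathbf{h}_k=\mathrm{Row}_k(\mathbf{H}_\kappa)$ for $1\le i,k\le\ell$. Reading every expression below in the ring $\mathbb{F}_q[x]/\langle x^N-1\rangle$, in which $x$ is a unit, the $(i,k)$-entry of the matrix on the left-hand side of the corollary equals
\begin{equation*}
\mathbf{g}_i\,\mathrm{diag}\!\left[\tfrac{x^N-1}{x^{m_j}-\lambda_j}\right]\sigma^\kappa\!\left(\mathbf{h}_k\!\left(\tfrac1x\right)\mathrm{diag}\!\left[x^{m_j}\right]\right)^t,
\end{equation*}
so it suffices to prove that each of these $\ell^2$ scalars vanishes modulo $x^N-1$.

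The first step is to pass from the rows of the two GPMs to honest codewords. Since $\mathcal{C}$ (respectively $\mathcal{C}^{\perp_\kappa}$) is a submodule of $\oplus_{j=1}^{\ell}\mathbb{F}_q[x]$ containing $\oplus_{j=1}^{\ell}\langle x^{m_j}-\lambda_j\rangle$, reducing the $j$-th coordinate of $\mathbf{g}_i$ (resp. $\mathbf{h}_k$) modulo $x^{m_j}-\lambda_j$ yields the polynomial vector $\phi(\mathbf{c}_i)$ of a codeword $\mathbf{c}_i\in\mathcal{C}$ (resp. $\phi(\mathbf{d}_k)$ of a codeword $\mathbf{d}_k\in\mathcal{C}^{\perp_\kappa}$). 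I would then check that this replacement does not affect the scalar above modulo $x^N-1$: replacing the $j$-th coordinate of $\mathbf{g}_i$ by a term $(x^{m_j}-\lambda_j)q(x)$ contributes, after multiplication by $\tfrac{x^N-1}{x^{m_j}-\lambda_j}$, a multiple of $x^N-1$; and replacing the $j$-th coordinate of $\mathbf{h}_k$ by a term $(x^{m_j}-\lambda_j)r(x)$ contributes, after the substitution $x\mapsto\tfrac1x$, multiplication by $x^{m_j}$, and application of $\sigma^\kappa$, the factor $\sigma^\kappa\!\left(1-\lambda_j x^{m_j}\right)=1-\lambda_j^{-1}x^{m_j}=-\lambda_j^{-1}(x^{m_j}-\lambda_j)$, where I have used $\sigma^\kappa(\lambda_j)=\lambda_j^{-1}$, i.e. the standing hypothesis $\Lambda^{-p^{-\kappa}}=\Lambda$; hence the net contribution is again a (Laurent-polynomial) multiple of $\tfrac{x^N-1}{x^{m_j}-\lambda_j}\cdot(x^{m_j}-\lambda_j)=x^N-1\equiv0$. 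Thus the scalar above is congruent to
\begin{equation*}
\phi(\mathbf{c}_i)\,\mathrm{diag}\!\left[\tfrac{x^N-1}{x^{m_j}-\lambda_j}\right]\sigma^\kappa\!\left(\phi(\mathbf{d}_k)\!\left(\tfrac1x\right)\mathrm{diag}\!\left[x^{m_j}\right]\right)^t\pmod{x^N-1}.
\end{equation*}

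Next I would apply Lemma~\ref{lem1} with $\mathbf{a}=\mathbf{c}_i$ and $\mathbf{b}=\mathbf{d}_k$, which rewrites the last display as $\sum_{\iota=0}^{N-1}\langle\mathbf{c}_i,\mathcal{T}_\Lambda^\iota(\mathbf{d}_k)\rangle_\kappa\,x^\iota$ modulo $x^N-1$; it then remains only to see that every coefficient is zero. Under the hypothesis $\Lambda^{-p^{-\kappa}}=\Lambda$ we have $\Delta_\kappa=\Lambda$, so by Theorem~\ref{dualMT} the code $\mathcal{C}^{\perp_\kappa}$ is itself $\Lambda$-MT and hence $\mathcal{T}_\Lambda$-invariant; therefore $\mathcal{T}_\Lambda^\iota(\mathbf{d}_k)\in\mathcal{C}^{\perp_\kappa}$ for every $\iota$, and $\langle\mathbf{c}_i,\mathcal{T}_\Lambda^\iota(\mathbf{d}_k)\rangle_\kappa=0$ directly from the definition of the $\kappa$-Galois dual. (One could equally well deduce the vanishing from Lemma~\ref{lem_shift}, since it gives $\langle\mathbf{c}_i,\mathcal{T}_\Lambda^\iota(\mathbf{d}_k)\rangle_\kappa=\langle\mathcal{T}_\Lambda^{-\iota}(\mathbf{c}_i),\mathbf{d}_k\rangle_\kappa$ with $\mathcal{T}_\Lambda^{-\iota}(\mathbf{c}_i)\in\mathcal{C}$.) Hence every entry of the product is $\equiv0\pmod{x^N-1}$, i.e. the matrix itself is $\equiv\mathbf{0}\pmod{x^N-1}$, which is the assertion of the corollary.

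I expect the only genuinely delicate point to be the second step — showing that the left-hand side depends only on the codewords $\mathbf{c}_i$ and $\mathbf{d}_k$, not on the particular polynomial entries of $\mathbf{G}$ and $\mathbf{H}_\kappa$ — since there one must track precisely how the substitution $x\mapsto\tfrac1x$ interacts with the factors $x^{m_j}-\lambda_j$, and this is exactly where the hypothesis $\Lambda^{-p^{-\kappa}}=\Lambda$ is used. Everything else is a direct invocation of Lemma~\ref{lem1} and of the definition of $\mathcal{C}^{\perp_\kappa}$.
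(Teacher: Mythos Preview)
Your proof is correct and follows essentially the same route as the paper: reduce each row of $\mathbf{G}$ and $\mathbf{H}_\kappa$ to the polynomial vector of an actual codeword, invoke Lemma~\ref{lem1} entrywise, and then use that $\mathcal{C}^{\perp_\kappa}$ is $\Lambda$-MT (from $\Lambda^{-p^{-\kappa}}=\Lambda$) to kill every coefficient $\langle\mathbf{c}_i,\mathcal{T}_\Lambda^\iota(\mathbf{d}_k)\rangle_\kappa$. If anything you are more explicit than the paper in justifying the ``replacement'' step---the paper simply asserts the congruence after passing to $\phi(\mathbf{a})$ and $\phi(\mathbf{b})$, whereas you carefully track how the factor $x^{m_j}-\lambda_j$ survives the substitution $x\mapsto\frac1x$ and the application of $\sigma^\kappa$.
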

\begin{proof}
For any pair $(i,j)$ with $1\le i,j\le \ell$, let $\mathbf{a}$ and $\mathbf{b}$ be the codewords of $\mathcal{C}$ and $\mathcal{C}^{\perp_\kappa}$ that correspond to the $i$-th and $j$-th rows of $\mathbf{G}$ and $\mathbf{H}_\kappa $, respectively. Then there exist polynomial vectors $\mathbf{u}(x)$ and $\mathbf{v}(x)$ for which \begin{equation*}
\mathrm{Row}_i\left(\mathbf{G}\right)=\phi\left(\mathbf{a}\right)+\mathbf{u}(x) \ \mathrm{diag}\left[x^{m_j}-\lambda_j\right]
\end{equation*}
and
\begin{equation*}
\mathrm{Row}_j\left(\mathbf{H}_\kappa\right)=\phi\left(\mathbf{b}\right)+\mathbf{v}(x) \ \mathrm{diag}\left[x^{m_j}-\lambda_j\right].
\end{equation*}
Therefore we have 
\begin{equation*}
\begin{split}
&\mathrm{Ent}_{i,j}\left(\mathbf{G}\ \mathrm{diag}\left[ \frac{x^N-1}{x^{m_j}-\lambda_j}\right] \sigma^\kappa\left(\mathbf{H}_\kappa \left(\frac{1}{x}\right) \mathrm{diag}\left[ x^{m_j}\right] \right)^t\right)\\
& = \mathrm{Row}_{i}\left(\mathbf{G}\right) \mathrm{diag}\left[ \frac{x^N-1}{x^{m_j}-\lambda_j}\right]  \sigma^\kappa\left(\mathrm{Row}_{j}\mathbf{H}_\kappa \left(\frac{1}{x}\right) \mathrm{diag}\left[ x^{m_j}\right] \right)^t \\
& \equiv \phi\left(\mathbf{a}\right) \mathrm{diag}\left[ \frac{x^N-1}{x^{m_j}-\lambda_j}\right]  \sigma^\kappa\left(\phi\left(\mathbf{b}\right) \left(\frac{1}{x}\right) \mathrm{diag}\left[ x^{m_j}\right] \right)^t \pmod{\left(x^N-1\right)}\\
& \equiv \sum_{\iota=0}^{N-1}\langle\mathbf{a},\mathcal{T}^\iota_\Lambda\left(\mathbf{b}\right) \rangle_\kappa \ x^\iota \pmod{\left(x^N-1\right)}. \qquad \text{(By Lemma \ref{lem1})}
\end{split}
\end{equation*} 
Under the assumption that $\Lambda^{-p^{-\kappa}}=\Lambda$, $\mathcal{C}^{\perp_\kappa}$ is $\Lambda$-MT. Thus, $\mathcal{T}_\Lambda^{\iota}\left(\mathbf{b}\right) \in \mathcal{C}^{\perp_\kappa}$ and $\langle \mathbf{a},\mathcal{T}_\Lambda^{\iota}\left(\mathbf{b}\right)\rangle_\kappa =0$ for every $0\le \iota\le N-1$.
\end{proof}

To each GPM $\mathbf{G}$ of a MT code we will associate a QC code of length $N$ and index $\ell$. In order to precisely define this QC code, we first need to define a polynomial matrix associated with $\mathbf{G}$. 
\begin{definition}
\label{Def_of_B}
Let $\mathbf{G}$ be a GPM of a $\Lambda$-MT code, where $\Lambda^{-p^{-\kappa}}=\Lambda$. We define 
\begin{equation*}
\mathfrak{B}_\mathbf{G}\equiv\mathbf{G}\ \mathrm{diag}\left[ \frac{x^N-1}{x^{m_j}-\lambda_j}\right] \sigma^\kappa\left(\mathbf{G}\left(\frac{1}{x}\right)\mathrm{diag}\left[ x^{m_j}\right] \right)^t \pmod{\left(x^N-1\right)}
\end{equation*}
such that entries of $\mathfrak{B}_\mathbf{G}$ are of degree at most $N-1$. Again, by $\mathbf{G} \left(\frac{1}{x}\right)$ we mean to replace $x$ by $\frac{1}{x}$ in $\mathbf{G}$.
\end{definition}
We will now use Lemma \ref{lem1} to show that entries of $\mathfrak{B}_\mathbf{G}$ can be expressed in terms of $\kappa$-Galois inner products.
\begin{lemma}
\label{lem2}
Let $\mathbf{G}$ be a GPM of a $\Lambda$-MT code $\mathcal{C}$, where $\Lambda^{-p^{-\kappa}}=\Lambda$. For every $1\le i\le \ell$, we choose $\mathbf{a}_i\in\mathcal{C}$ so that 
\begin{equation*}
\mathrm{Row}_i\left(\mathbf{G}\right)=\phi\left(\mathbf{a}_i\right)+\mathbf{u}_i(x) \mathrm{diag}\left[x^{m_j}-\lambda_j\right]
\end{equation*}
for some polynomial vector $\mathbf{u}_i(x)$. For every pair $(u,v)$ with $1\le u,v \le \ell$,
\begin{equation*}
\mathrm{Ent}_{u,v}\left(\mathfrak{B}_\mathbf{G}\right)=\sum_{\iota=0}^{N-1}\langle \mathbf{a}_u ,\mathcal{T}^\iota_\Lambda\left(\mathbf{a}_v \right) \rangle_\kappa x^\iota.
\end{equation*}
\end{lemma}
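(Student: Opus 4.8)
The plan is to follow the template of the proof of Corollary~\ref{corollary1}, now applied with $\mathbf{G}$ playing both roles. First I would write, for a fixed pair $(u,v)$,
\[
\mathrm{Ent}_{u,v}\!\left(\mathfrak{B}_\mathbf{G}\right)\equiv \mathrm{Row}_u\!\left(\mathbf{G}\right)\,\mathrm{diag}\!\left[\tfrac{x^N-1}{x^{m_j}-\lambda_j}\right]\,\sigma^\kappa\!\left(\mathrm{Row}_v\!\left(\mathbf{G}\right)\!\left(\tfrac{1}{x}\right)\mathrm{diag}\!\left[x^{m_j}\right]\right)^t\!\!\pmod{x^N-1},
\]
then substitute $\mathrm{Row}_i(\mathbf{G})=\phi(\mathbf{a}_i)+\mathbf{u}_i(x)\,\mathrm{diag}[x^{m_j}-\lambda_j]$ for $i=u,v$ and expand. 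Since the maps $\mathbf{f}\mapsto \mathbf{f}\,\mathrm{diag}[\tfrac{x^N-1}{x^{m_j}-\lambda_j}]$ and $\mathbf{g}\mapsto \sigma^\kappa(\mathbf{g}(\tfrac{1}{x})\mathrm{diag}[x^{m_j}])^t$ are additive, this yields four terms, and the goal is to show that all but the ``pure'' term $\phi(\mathbf{a}_u)\,\mathrm{diag}[\tfrac{x^N-1}{x^{m_j}-\lambda_j}]\,\sigma^\kappa(\phi(\mathbf{a}_v)(\tfrac{1}{x})\mathrm{diag}[x^{m_j}])^t$ vanish modulo $x^N-1$, and then to invoke Lemma~\ref{lem1}.

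Two observations eliminate the correction terms. A term containing $\mathbf{u}_u(x)\,\mathrm{diag}[x^{m_j}-\lambda_j]$ on the left is killed immediately, because $\mathrm{diag}[x^{m_j}-\lambda_j]\,\mathrm{diag}[\tfrac{x^N-1}{x^{m_j}-\lambda_j}]=(x^N-1)\mathbf{I}_\ell$. For a term containing $\mathbf{u}_v(x)\,\mathrm{diag}[x^{m_j}-\lambda_j]$ on the right, its $j$-th entry, after the substitution $x\mapsto 1/x$, multiplication by $x^{m_j}$, and application of $\sigma^\kappa$, becomes $\sigma^\kappa\!\big(\mathrm{Ent}_j(\mathbf{u}_v(1/x))\big)\,(1-\sigma^\kappa(\lambda_j)x^{m_j})$; the standing hypothesis $\Lambda^{-p^{-\kappa}}=\Lambda$ gives $\sigma^\kappa(\lambda_j)=\lambda_j^{-1}$, so $1-\sigma^\kappa(\lambda_j)x^{m_j}=-\sigma^\kappa(\lambda_j)(x^{m_j}-\lambda_j)$ is divisible by $x^{m_j}-\lambda_j$, and multiplying by the $j$-th diagonal factor $\tfrac{x^N-1}{x^{m_j}-\lambda_j}$ again produces a multiple of $x^N-1$. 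Hence only the pure term survives, and Lemma~\ref{lem1} (applied with $\mathbf{a}=\mathbf{a}_u$, $\mathbf{b}=\mathbf{a}_v$) identifies it with $\sum_{\iota=0}^{N-1}\langle \mathbf{a}_u,\mathcal{T}_\Lambda^\iota(\mathbf{a}_v)\rangle_\kappa\,x^\iota$. Finally, both $\mathrm{Ent}_{u,v}(\mathfrak{B}_\mathbf{G})$ (by Definition~\ref{Def_of_B}) and the right-hand sum have degree at most $N-1$, so the congruence modulo $x^N-1$ upgrades to an equality.

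The routine part is the expansion into four terms and the left-hand cancellation; the one place demanding care — and the main obstacle — is verifying that the right-hand correction term vanishes, since this requires tracking how the reversal $x\mapsto 1/x$, the scaling $\mathrm{diag}[x^{m_j}]$, and the Frobenius twist $\sigma^\kappa$ interact, and it is exactly here that the assumption $\Lambda^{-p^{-\kappa}}=\Lambda$ is used, precisely to make $1-\sigma^\kappa(\lambda_j)x^{m_j}$ a multiple of $x^{m_j}-\lambda_j$. Everything else is a direct appeal to Lemma~\ref{lem1} together with the degree bound built into Definition~\ref{Def_of_B}.
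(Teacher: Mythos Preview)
Your proof is correct and follows essentially the same approach as the paper's: express $\mathrm{Ent}_{u,v}(\mathfrak{B}_\mathbf{G})$ via Definition~\ref{Def_of_B}, replace $\mathrm{Row}_i(\mathbf{G})$ by $\phi(\mathbf{a}_i)$ modulo $x^N-1$, apply Lemma~\ref{lem1}, and upgrade the congruence to equality using the degree bound. The paper's version simply asserts the reduction step (pointing to Corollary~\ref{corollary1}), whereas you spell out explicitly why both correction terms vanish and where the hypothesis $\Lambda^{-p^{-\kappa}}=\Lambda$ enters, which is a welcome clarification but not a different argument.
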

\begin{proof}
The proof is essentially similar to the proof of Corollary \ref{corollary1}. It follows from Definition \ref{Def_of_B} that
\begin{equation*}
\begin{split}
\mathrm{Ent}_{u,v}\left(\mathfrak{B}_\mathbf{G}\right)&\equiv \mathrm{Row}_u\left(\mathbf{G}\right)\ \mathrm{diag}\left[ \frac{x^N-1}{x^{m_j}-\lambda_j}\right]  \sigma^\kappa\left(\mathrm{Row}_v \mathbf{G}\left(\frac{1}{x}\right) \mathrm{diag}\left[ x^{m_j}\right] \right)^t\\
&  \equiv \phi\left(\mathbf{a}_u \right) \mathrm{diag}\left[ \frac{x^N-1}{x^{m_j}-\lambda_j}\right]   \sigma^\kappa\left(\phi\left(\mathbf{a}_v\right)\left(\frac{1}{x}\right) \mathrm{diag}\left[ x^{m_j}\right] \right)^t \\
&  \equiv   \sum_{\iota=0}^{N-1}\langle\mathbf{a}_u,\mathcal{T}^\iota_\Lambda\left(\mathbf{a}_v\right) \rangle_\kappa x^\iota \pmod{\left(x^N-1\right)}.
\end{split}
\end{equation*}
Equality follows because each side is a polynomial of degree at most $N-1$.
\end{proof}
For a MT code with a GPM $\mathbf{G}$, we define a generator matrix $S_\mathbf{G}$ whose rows are those of $\mathbf{G}$ and all their shifts. This generator matrix will be crucial in our proof of the Galois hull dimension formula.
\begin{definition}
\label{def_matrix_S}
Let $\mathbf{G}$ be a GPM of a $\Lambda$-MT code $\mathcal{C}$ of length $n$, where $\Lambda^{-p^{-\kappa}}=\Lambda$. For every $1\le i\le \ell$, we choose $\mathbf{a}_i\in\mathcal{C}\subseteq \mathbb{F}_q^n$ so that 
\begin{equation*}
\mathrm{Row}_i\left(\mathbf{G}\right)=\phi\left(\mathbf{a}_i\right)+\mathbf{u}_i(x) \mathrm{diag}\left[x^{m_j}-\lambda_j\right]
\end{equation*}
for some polynomial vector $\mathbf{u}_i(x)$. Let $S_\mathbf{G}$ be the $N\ell \times n$ matrix over $\mathbb{F}_q$ defined by
\begin{equation*}
\mathrm{Row}_{\iota \ell +i}\left(S_\mathbf{G}\right)=\mathcal{T}_\Lambda^\iota \left(\mathbf{a}_i\right)
\end{equation*}
for $1\le i\le \ell$ and $0\le \iota\le N-1$. Furthermore, let $B_\mathbf{G}=S_\mathbf{G}\left( \sigma^\kappa S_\mathbf{G}\right)^t$, i.e.,
\begin{equation*}
\mathrm{Ent}_{u,v}\left(B_\mathbf{G}\right)=\langle \mathrm{Row}_u\left(S_\mathbf{G}\right),\mathrm{Row}_v\left(S_\mathbf{G}\right)\rangle_\kappa
\end{equation*}
for $1\le u,v \le N\ell$. We let $\mathcal{Q}_\mathbf{G}$ refer to the linear code over $\mathbb{F}_q$ of length $N\ell$ generated by $B_\mathbf{G}$. (In other words, $\mathcal{Q}_\mathbf{G}$ is the row span of $B_\mathbf{G}$.)
\end{definition}

\begin{lemma}
\label{lem4}
$\mathcal{Q}_\mathbf{G}$ is QC of length $N\ell$ and index $\ell$.
\end{lemma}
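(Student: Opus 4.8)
The plan is to show directly that the cyclic shift by $\ell$ coordinates on $\mathbb{F}_q^{N\ell}$ -- the linear transformation $\mathcal{F}$ defining QC codes of index $\ell$ and co-index $N$ as in \eqref{cyclic_shift_QC} -- merely permutes the rows of the generator matrix $B_\mathbf{G}$ of $\mathcal{Q}_\mathbf{G}$; $\mathcal{F}$-invariance of the row span $\mathcal{Q}_\mathbf{G}$ is then immediate, and that is exactly the assertion. First I would re-index the rows and columns of $B_\mathbf{G}$ by pairs $(\iota,i)$ with $0\le\iota\le N-1$ and $1\le i\le\ell$, writing $\mathrm{Row}_{(\iota,i)}(B_\mathbf{G})$ for $\mathrm{Row}_{\iota\ell+i}(B_\mathbf{G})$ and similarly for columns; under this convention the $(\iota'\ell+i')$-th coordinate of a vector in $\mathbb{F}_q^{N\ell}$ is precisely its ``$(\iota',i')$-coordinate'' in the block layout of \eqref{cyclic_shift_QC}. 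With this indexing, Definition \ref{def_matrix_S} reads
\begin{equation*}
\mathrm{Ent}_{(\iota',i')}\bigl(\mathrm{Row}_{(\iota,i)}(B_\mathbf{G})\bigr)=\bigl\langle \mathcal{T}_\Lambda^{\iota}(\mathbf{a}_i),\,\mathcal{T}_\Lambda^{\iota'}(\mathbf{a}_{i'})\bigr\rangle_\kappa .
\end{equation*}

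Next I would exploit Lemma \ref{lem_shift}. Since $\Lambda^{-p^{-\kappa}}=\Lambda$ and $\mathcal{T}_\Lambda^N$ is the identity, that lemma gives $\langle\mathcal{T}_\Lambda(\mathbf{a}),\mathcal{T}_\Lambda(\mathbf{b})\rangle_\kappa=\langle\mathbf{a},\mathcal{T}_\Lambda^{N-1}\mathcal{T}_\Lambda(\mathbf{b})\rangle_\kappa=\langle\mathbf{a},\mathbf{b}\rangle_\kappa$; in other words $\mathcal{T}_\Lambda$ is an isometry of $\langle\cdot,\cdot\rangle_\kappa$. Writing $\mathcal{T}_\Lambda^{\iota'}=\mathcal{T}_\Lambda^{\iota}\circ\mathcal{T}_\Lambda^{(\iota'-\iota)\bmod N}$ (legitimate because $\mathcal{T}_\Lambda^N$ is the identity) and then peeling off one copy of $\mathcal{T}_\Lambda$ from each argument $\iota$ times, I obtain
\begin{equation*}
\bigl\langle \mathcal{T}_\Lambda^{\iota}(\mathbf{a}_i),\,\mathcal{T}_\Lambda^{\iota'}(\mathbf{a}_{i'})\bigr\rangle_\kappa=\bigl\langle \mathbf{a}_i,\,\mathcal{T}_\Lambda^{(\iota'-\iota)\bmod N}(\mathbf{a}_{i'})\bigr\rangle_\kappa=:c_{i,i'}\bigl((\iota'-\iota)\bmod N\bigr),
\end{equation*}
so the $(\iota',i')$-entry of $\mathrm{Row}_{(\iota,i)}(B_\mathbf{G})$ depends on $\iota$ and $\iota'$ only through $(\iota'-\iota)\bmod N$. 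Thus $B_\mathbf{G}$, viewed blockwise in the $\iota$-index, is ``block-circulant.''

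Finally, $\mathcal{F}$ is the permutation of $\mathbb{F}_q^{N\ell}$ sending block $\iota'$ to block $(\iota'+1)\bmod N$, hence $\mathrm{Ent}_{(\iota',i')}\bigl(\mathcal{F}(\mathrm{Row}_{(\iota,i)}(B_\mathbf{G}))\bigr)=c_{i,i'}\bigl((\iota'-\iota-1)\bmod N\bigr)=\mathrm{Ent}_{(\iota',i')}\bigl(\mathrm{Row}_{((\iota+1)\bmod N,\,i)}(B_\mathbf{G})\bigr)$. Therefore $\mathcal{F}$ carries each row of $B_\mathbf{G}$ to another row -- in fact it cyclically permutes the $N$ blocks of $\ell$ rows -- so it maps the row span onto itself. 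Consequently $\mathcal{Q}_\mathbf{G}$ is $\mathcal{F}$-invariant, i.e., QC of length $N\ell$ and index $\ell$. (Equivalently, one may route this through Theorem \ref{GetGPM}: by Lemma \ref{lem2} the polynomial representations of $\mathrm{Row}_{(0,1)}(B_\mathbf{G}),\dots,\mathrm{Row}_{(0,\ell)}(B_\mathbf{G})$ are exactly the rows of $\mathfrak{B}_\mathbf{G}$, and the remaining rows of $B_\mathbf{G}$ are their $\mathcal{F}$-shifts, so $\mathcal{Q}_\mathbf{G}$ is the QC code generated by $\mathfrak{B}_\mathbf{G}$.) The only place demanding a little care is the exponent bookkeeping modulo $N$ in the middle step -- one must first express $\mathcal{T}_\Lambda^{\iota'}$ as $\mathcal{T}_\Lambda^{\iota}$ composed with a \emph{nonnegative} power of $\mathcal{T}_\Lambda$ before invoking the isometry property, so that the cancellation is valid regardless of the sign of $\iota'-\iota$; the rest is routine.
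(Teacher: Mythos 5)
Your proof is correct and mirrors the paper's own argument: both use Lemma \ref{lem_shift} to show that $\mathrm{Ent}_{\iota_1\ell+i_1,\,\iota_2\ell+i_2}(B_\mathbf{G})$ depends on $\iota_1,\iota_2$ only through $(\iota_2-\iota_1)\bmod N$, exhibiting $B_\mathbf{G}$ as block-circulant, and then conclude that $\mathcal{F}$ permutes the rows so the row span is $\mathcal{F}$-invariant. Your extra care with the modular bookkeeping (reducing $\iota'-\iota$ modulo $N$ before invoking the isometry) is a minor tidying-up of a point the paper leaves implicit, but the substance is identical.
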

\begin{proof}
Let $\mathcal{F}$ be the linear transformation on $\mathbb{F}_q^{N\ell}$ that corresponds to $\ell$ cyclic shifts (cf. Equation \eqref{cyclic_shift_QC}). In the same convention as in Definition \ref{def_matrix_S}, it suffices to show that the row span of $B_\mathbf{G}$ is $\mathcal{F}$-invariant. First observe that
\begin{equation*}
\begin{split}
\mathrm{Ent}_{\iota_1 \ell +i_1, \iota_2 \ell +i_2}\left(B_\mathbf{G}\right)&=\langle \mathrm{Row}_{\iota_1 \ell +i_1}\left(S_\mathbf{G}\right),\mathrm{Row}_{\iota_2 \ell +i_2}\left(S_\mathbf{G}\right) \rangle_\kappa\\
&=\langle \mathcal{T}_\Lambda^{\iota_1} \left(\mathbf{a}_{i_1}\right), \mathcal{T}_\Lambda^{\iota_2} \left(\mathbf{a}_{i_2}\right) \rangle_\kappa\\
&=\langle \mathbf{a}_{i_1} , \mathcal{T}_\Lambda^{\iota_2-\iota_1} \left(\mathbf{a}_{i_2}\right) \rangle_\kappa \qquad\text{(By Lemma \ref{lem_shift})}\\
&=\mathrm{Ent}_{ i_1, (\iota_2-\iota_1) \ell +i_2}\left(B_\mathbf{G}\right).
\end{split}
\end{equation*}
Thus $\mathrm{Row}_{\iota \ell +i}\left(B_\mathbf{G}\right)=\mathcal{F}^\iota \left( \mathrm{Row}_{i}\left(B_\mathbf{G}\right)\right)$ and we are done. 
\end{proof}
So far a QC code $\mathcal{Q}_\mathbf{G}$ over $\mathbb{F}_q$ of length $N\ell$ and index $\ell$ has been associated to the given GPM $\mathbf{G}$ of a MT code $\mathcal{C}$. It is interesting to determine a GPM for $\mathcal{Q}_\mathbf{G}$ by means of some operations on $\mathbf{G}$. To this end, we begin by showing that the polynomial vector representations of rows of $B_\mathbf{G}$ are rows of $\mathfrak{B}_\mathbf{G}$. Analogous to our discussion of QC codes in Section \ref{Sec2}, we let $\varphi:\mathbb{F}_q^{N\ell}\rightarrow \left(\mathbb{F}_q[x]\right)^\ell$ be defined by
\begin{equation*}
\begin{split}
\varphi:\left(a_{0,1},a_{0,2},\ldots,a_{0,\ell},a_{1,1},a_{1,2},\ldots,a_{1,\ell},\ldots,a_{N-1,1},a_{N-1,2},\ldots,a_{N-1,\ell}\right)\\
\mapsto \left( \sum_{\iota=0}^{N-1}a_{\iota,1}x^\iota , \sum_{\iota=0}^{N-1}a_{\iota,2}x^\iota,\ldots, \sum_{\iota=0}^{N-1}a_{\iota,\ell}x^\iota\right).
\end{split}\end{equation*}

\begin{lemma}
\label{lem5}
$\varphi\left(\mathrm{Row}_i\left(B_\mathbf{G} \right)\right)=\mathrm{Row}_i\left(\mathfrak{B}_\mathbf{G} \right)$ for $1\le i\le \ell$.
\end{lemma}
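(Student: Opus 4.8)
The plan is to simply unwind the three layers of definitions ($S_\mathbf{G}$, $B_\mathbf{G}$, and $\varphi$) and then quote Lemma \ref{lem2}. Fix $i$ with $1\le i\le \ell$. Writing $i=0\cdot\ell+i$, the indexing convention of Definition \ref{def_matrix_S} gives $\mathrm{Row}_i\left(S_\mathbf{G}\right)=\mathcal{T}_\Lambda^0\left(\mathbf{a}_i\right)=\mathbf{a}_i$, while $\mathrm{Row}_{\iota\ell+j}\left(S_\mathbf{G}\right)=\mathcal{T}_\Lambda^\iota\left(\mathbf{a}_j\right)$ for $0\le \iota\le N-1$ and $1\le j\le \ell$. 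Since $B_\mathbf{G}=S_\mathbf{G}\left(\sigma^\kappa S_\mathbf{G}\right)^t$, the entry of $\mathrm{Row}_i\left(B_\mathbf{G}\right)$ in column $\iota\ell+j$ is $\langle \mathrm{Row}_i\left(S_\mathbf{G}\right),\mathrm{Row}_{\iota\ell+j}\left(S_\mathbf{G}\right)\rangle_\kappa=\langle \mathbf{a}_i,\mathcal{T}_\Lambda^\iota\left(\mathbf{a}_j\right)\rangle_\kappa$.

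Next I would match this with the block structure built into $\varphi$. The map $\varphi$ groups the $N\ell$ coordinates into $N$ consecutive blocks of length $\ell$, placing the coordinate in block $\iota$ and position $j$ (that is, coordinate number $\iota\ell+j$) as the coefficient of $x^\iota$ in the $j$-th output polynomial. Applying this to $\mathrm{Row}_i\left(B_\mathbf{G}\right)$ and using the previous paragraph, the $j$-th entry of $\varphi\left(\mathrm{Row}_i\left(B_\mathbf{G}\right)\right)$ equals $\sum_{\iota=0}^{N-1}\langle \mathbf{a}_i,\mathcal{T}_\Lambda^\iota\left(\mathbf{a}_j\right)\rangle_\kappa\, x^\iota$.

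Finally, Lemma \ref{lem2} — applied with the same choice of $\mathbf{a}_1,\ldots,\mathbf{a}_\ell$ — states precisely that $\mathrm{Ent}_{i,j}\left(\mathfrak{B}_\mathbf{G}\right)=\sum_{\iota=0}^{N-1}\langle \mathbf{a}_i,\mathcal{T}_\Lambda^\iota\left(\mathbf{a}_j\right)\rangle_\kappa\, x^\iota$. Comparing the two expressions entry by entry over $1\le j\le \ell$ yields $\varphi\left(\mathrm{Row}_i\left(B_\mathbf{G}\right)\right)=\mathrm{Row}_i\left(\mathfrak{B}_\mathbf{G}\right)$. I expect no genuine obstacle here; the only thing requiring care is the index bookkeeping — keeping the row/column label $\iota\ell+j$ of $B_\mathbf{G}$ synchronized with the coefficient index $\iota$ and block index $j$ used by $\varphi$, and making sure that the vectors $\mathbf{a}_i$ used to define $S_\mathbf{G}$ are the same ones appearing in Lemma \ref{lem2} (which they are, by the shared convention of Definition \ref{def_matrix_S} and Lemma \ref{lem2}). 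One should also note in passing that for $1\le i\le \ell$ the column range $\{\iota\ell+j : 0\le \iota\le N-1,\ 1\le j\le \ell\}$ is exactly $\{1,\ldots,N\ell\}$, so every column of $\mathrm{Row}_i\left(B_\mathbf{G}\right)$ is accounted for.
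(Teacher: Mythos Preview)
Your proposal is correct and follows essentially the same approach as the paper: unwind the definitions of $S_\mathbf{G}$ and $B_\mathbf{G}$ to identify the $(\iota\ell+j)$-th entry of $\mathrm{Row}_i(B_\mathbf{G})$ as $\langle \mathbf{a}_i,\mathcal{T}_\Lambda^\iota(\mathbf{a}_j)\rangle_\kappa$, apply $\varphi$, and invoke Lemma~\ref{lem2}. The paper's proof is just a more compressed version of the same argument, with your extra care about the index bookkeeping made implicit.
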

\begin{proof}
In the language of Definition \ref{def_matrix_S}, we have
\begin{equation*}
\begin{split}
\varphi\left(\mathrm{Row}_{i}\left(B_\mathbf{G}\right)\right)&=\varphi\left(\left[\langle \mathrm{Row}_{i}\left(S_\mathbf{G} \right) , \mathrm{Row}_{\iota \ell +j}\left(S_\mathbf{G} \right) \rangle_\kappa\right]_{1\le j \le \ell \ \mathrm{and}\  0\le \iota \le N-1}\right)\\
&=\varphi\left(\left[\langle \mathbf{a}_i , \mathcal{T}_\Lambda^\iota \left(\mathbf{a}_j\right) \rangle_\kappa\right]_{1\le j \le \ell \ \mathrm{and}\  0\le \iota \le N-1}\right)\\
&=\left( \sum_{\iota=0}^{N-1}\langle \mathbf{a}_i , \mathcal{T}_\Lambda^\iota \left(\mathbf{a}_1\right) \rangle_\kappa x^\iota ,
 \sum_{\iota=0}^{N-1}\langle \mathbf{a}_i , \mathcal{T}_\Lambda^\iota \left(\mathbf{a}_2\right) \rangle_\kappa x^\iota,\ldots,
  \sum_{\iota=0}^{N-1}\langle \mathbf{a}_i , \mathcal{T}_\Lambda^\iota \left(\mathbf{a}_\ell\right) \rangle_\kappa x^\iota\right)\\
&=\mathrm{Row}_i\left(\mathfrak{B}_\mathbf{G} \right). \qquad \text{(By Lemma \ref{lem2})}
\end{split}
\end{equation*}
\end{proof}

The first main result is formalized in the following theorem, in which we use Theorem \ref{hongwei} to determine the dimension of the $\kappa$-Galois hull of a MT code from the dimension of the QC code $\mathcal{Q}_\mathbf{G}$. On the other hand, the dimension of $\mathcal{Q}_\mathbf{G}$ follows as a straightforward application of \eqref{Dim_QC} if a GPM for $\mathcal{Q}_\mathbf{G}$ is known. Thus, the following theorem also provides a mean of constructing such a GPM.
\begin{theorem}
\label{theorem1}
Let $q=p^e$ and let $0\le \kappa <e$. Assume that $\Lambda^{-p^{-\kappa}}=\Lambda$, where $\Lambda=\left(\lambda_1,\lambda_2,\ldots,\lambda_\ell\right)$. Let $\mathcal{C}$ be a $\Lambda$-MT code over $\mathbb{F}_q$ of block lengths $\left( m_1,m_2,\ldots,m_\ell\right)$. With the same notation as in Definitions \ref{Def_of_B} and \ref{def_matrix_S}, $\mathcal{Q}_\mathbf{G}$ is QC over $\mathbb{F}_q$ of length $N\ell$ and index $\ell$, and is generated as an $\mathbb{F}_q[x]$-module by the rows of the polynomial matrix
\begin{equation}
\label{long_GPM}
\begin{pmatrix}
\mathfrak{B}_\mathbf{G}\\
\left(x^N-1\right)\mathbf{I}_\ell
\end{pmatrix}.
\end{equation}
Furthermore, the dimension of $\mathcal{Q}_\mathbf{G}$ as an $\mathbb{F}_q$-vector space is $$\mathrm{dim}\left( \mathcal{Q}_\mathbf{G}\right)=\mathrm{dim}\left( \mathcal{C}\right)-\mathrm{dim}\left( h_\kappa\left(\mathcal{C}\right)\right).$$ 
\end{theorem}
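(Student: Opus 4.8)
The plan is to assemble the lemmas already established, since the two assertions of Theorem \ref{theorem1} are now almost bookkeeping. \textbf{The generating matrix.} By Lemma \ref{lem4} we already know $\mathcal{Q}_\mathbf{G}$ is QC of length $N\ell$ and index $\ell$; what is left is to exhibit a generating polynomial matrix. The crucial observation, read off from the proof of Lemma \ref{lem4}, is that $\mathrm{Row}_{\iota\ell+i}(B_\mathbf{G})=\mathcal{F}^\iota\!\left(\mathrm{Row}_i(B_\mathbf{G})\right)$, so as an $\mathbb{F}_q[x]$-module $\mathcal{Q}_\mathbf{G}$ is already generated by the first $\ell$ rows of $B_\mathbf{G}$; moreover $\mathcal{Q}_\mathbf{G}$ is the \emph{smallest} QC code of index $\ell$ and co-index $N$ containing those rows, because any such code is $\mathcal{F}$-invariant and hence contains all their shifts, i.e.\ all of $B_\mathbf{G}$, hence its $\mathbb{F}_q$-row span. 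Applying Theorem \ref{GetGPM} with $r=\ell$ and $B$ the $\ell\times N\ell$ matrix of the first $\ell$ rows of $B_\mathbf{G}$ yields that $\left(\begin{smallmatrix}\mathfrak{B}\\ (x^N-1)\mathbf{I}_\ell\end{smallmatrix}\right)$ generates $\mathcal{Q}_\mathbf{G}$, where $\mathrm{Row}_i(\mathfrak{B})=\varphi\!\left(\mathrm{Row}_i(B)\right)$. By Lemma \ref{lem5}, $\varphi\!\left(\mathrm{Row}_i(B_\mathbf{G})\right)=\mathrm{Row}_i(\mathfrak{B}_\mathbf{G})$ for $1\le i\le\ell$, so $\mathfrak{B}=\mathfrak{B}_\mathbf{G}$ and the generating matrix is precisely \eqref{long_GPM}.

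\textbf{The dimension formula.} The first point to nail down is that $S_\mathbf{G}$ is a generator matrix of $\mathcal{C}$: each row $\mathcal{T}_\Lambda^\iota(\mathbf{a}_i)$ lies in $\mathcal{C}$ since $\mathcal{C}$ is $\mathcal{T}_\Lambda$-invariant, and conversely the $\mathbf{a}_i$ generate $\mathcal{C}$ as an $\mathbb{F}_q[x]$-module (their polynomial representations agree with the rows of $\mathbf{G}$ modulo $\mathrm{diag}[x^{m_j}-\lambda_j]$), so because $\mathcal{T}_\Lambda^N=\mathrm{id}$ the $\mathbb{F}_q$-span of the rows of $S_\mathbf{G}$ is all of $\mathcal{C}$. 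Since $\mathcal{Q}_\mathbf{G}$ is the row span of $B_\mathbf{G}=S_\mathbf{G}\!\left(\sigma^\kappa S_\mathbf{G}\right)^t$, we have $\dim\mathcal{Q}_\mathbf{G}=\mathrm{Rank}\!\left(S_\mathbf{G}(\sigma^\kappa S_\mathbf{G})^t\right)$, and we want to invoke Theorem \ref{hongwei}. The only discrepancy is that $S_\mathbf{G}$ carries many redundant rows whereas Theorem \ref{hongwei} is phrased for a basis; I would bridge this by fixing an honest generator matrix $S$ of $\mathcal{C}$ and writing $S_\mathbf{G}=PS$. Because the rows of $S_\mathbf{G}$ span $\mathcal{C}$, the matrix $P$ has full column rank $k=\dim\mathcal{C}$; then $B_\mathbf{G}=P\,\big(S(\sigma^\kappa S)^t\big)\,(\sigma^\kappa P)^t$, and since left multiplication by a full-column-rank matrix and right multiplication by a full-row-rank matrix both preserve rank, $\mathrm{Rank}(B_\mathbf{G})=\mathrm{Rank}\!\left(S(\sigma^\kappa S)^t\right)=\dim\mathcal{C}-\dim h_\kappa(\mathcal{C})$ by Theorem \ref{hongwei}. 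Combined with $\dim\mathcal{Q}_\mathbf{G}=\mathrm{Rank}(B_\mathbf{G})$ this is exactly the claim (and, if desired, \eqref{Dim_QC} applied to \eqref{long_GPM} now gives $\dim\mathcal{Q}_\mathbf{G}$ in terms of $\mathfrak{B}_\mathbf{G}$).

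\textbf{Main obstacle.} None of the steps is computationally heavy — everything is deferred to Lemmas \ref{lem4}, \ref{lem5}, \ref{lem2} and Theorems \ref{GetGPM}, \ref{hongwei} — so the work is in getting the bookkeeping exactly right. The one genuinely delicate point is the last transition: justifying that $\mathrm{Rank}\!\left(S_\mathbf{G}(\sigma^\kappa S_\mathbf{G})^t\right)$ is insensitive to the redundancy of $S_\mathbf{G}$. I would isolate this as a one-line rank lemma (rank is unchanged by multiplication on the left by a full-column-rank matrix and on the right by a full-row-rank matrix) and, in the same breath, record that applying $\sigma^\kappa$ entrywise preserves rank since $\sigma^\kappa$ is a field automorphism (it maps each minor to $\sigma^\kappa$ of that minor and is injective), so that $(\sigma^\kappa P)^t$ indeed has full row rank. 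A secondary thing to state carefully is the identification ``smallest QC code containing the rows of $B$ $=\mathcal{Q}_\mathbf{G}$'' used when invoking Theorem \ref{GetGPM}, which rests only on the shift structure of the rows of $B_\mathbf{G}$ noted above.
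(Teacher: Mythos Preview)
Your proposal is correct and follows essentially the same route as the paper's proof: invoke Lemma \ref{lem4} for the QC structure, use the shift relation $\mathrm{Row}_{\iota\ell+i}(B_\mathbf{G})=\mathcal{F}^\iota(\mathrm{Row}_i(B_\mathbf{G}))$ together with Lemma \ref{lem5} and Theorem \ref{GetGPM} to obtain the generating matrix \eqref{long_GPM}, then observe that $S_\mathbf{G}$ generates $\mathcal{C}$ and apply Theorem \ref{hongwei} to $B_\mathbf{G}=S_\mathbf{G}(\sigma^\kappa S_\mathbf{G})^t$. Your added rank lemma handling the redundant rows of $S_\mathbf{G}$ (via $S_\mathbf{G}=PS$ with $P$ of full column rank and $\sigma^\kappa$ preserving rank) is a welcome bit of extra rigor that the paper leaves implicit when it simply calls $S_\mathbf{G}$ a ``generator matrix'' and invokes Theorem \ref{hongwei} directly.
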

\begin{proof}
We know from Lemma \ref{lem4} that $\mathcal{Q}_\mathbf{G}$ is certainly QC of length $N\ell$ and index $\ell$, while Lemma \ref{lem5} shows that
\begin{equation*}
\begin{split}
\varphi\left(\mathrm{Row}_{\iota \ell +i}\left(B_\mathbf{G}\right)\right)=\varphi\left(\mathcal{F}^\iota \left( \mathrm{Row}_{i}\left(B_\mathbf{G}\right)\right)\right)&\equiv x^\iota \varphi\left( \mathrm{Row}_{i}\left(B_\mathbf{G}\right)\right)\\ &= x^\iota \mathrm{Row}_i\left(\mathfrak{B}_\mathbf{G} \right) \pmod{\left(x^N-1\right)}
\end{split}
\end{equation*} 
for any $1\le i\le \ell$ and $0\le \iota\le N-1$. Therefore, $\mathcal{Q}_\mathbf{G}$ is the smallest QC code of index $\ell$ and co-index $N$ that contains the first $\ell$ rows of $B_\mathbf{G}$ (or equivalently, all rows of $\mathfrak{B}_\mathbf{G}$). The result on the polynomial matrix given by \eqref{long_GPM} follows immediately from Theorem \ref{GetGPM}. By Definition \ref{def_matrix_S}, $S_\mathbf{G}$ is a generator matrix of $\mathcal{C}$ as an $\mathbb{F}_q$-vector space. By an application of Theorem \ref{hongwei}, the rank of $B_\mathbf{G}$ (see Definition \ref{def_matrix_S}) is $\mathrm{dim}\left( \mathcal{C}\right)-\mathrm{dim}\left( h_\kappa\left(\mathcal{C}\right)\right)$. Then
\begin{equation*}
\mathrm{dim}\left(\mathcal{Q}_\mathbf{G}\right)=\mathrm{Rank}\left(B_\mathbf{G}\right)=\mathrm{dim}\left( \mathcal{C}\right)-\mathrm{dim}\left( h_\kappa\left(\mathcal{C}\right)\right).
\end{equation*}
\end{proof}

In fact, the class of $\kappa$-Galois self-orthogonal MT codes constitutes an interesting class of MT codes in which Theorem \ref{theorem1} has the following special form. It is noteworthy that a linear code is $\kappa$-Galois self-orthogonal if it is contained in its $\kappa$-Galois dual. On the contrary, a linear code is $\kappa$-Galois LCD if it trivially intersects its $\kappa$-Galois dual.
\begin{corollary}
\label{self-orth}
Given the same hypothesis and notation as Theorem \ref{theorem1}, $\mathcal{C}$ is $\kappa$-Galois self-orthogonal if and only if $\mathrm{dim}\left(\mathcal{Q}_\mathbf{G}\right)=0$ if and only if $\mathfrak{B}_\mathbf{G}= \mathbf{0}$.
\end{corollary}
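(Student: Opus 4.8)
The plan is to prove the two equivalences in Corollary \ref{self-orth} by chaining through Theorem \ref{theorem1} and the definition of $\mathcal{Q}_\mathbf{G}$. For the first equivalence, recall that $\mathcal{C}$ is $\kappa$-Galois self-orthogonal exactly when $h_\kappa(\mathcal{C})=\mathcal{C}$, i.e.\ when $\mathrm{dim}(h_\kappa(\mathcal{C}))=\mathrm{dim}(\mathcal{C})$. By the dimension formula in Theorem \ref{theorem1}, $\mathrm{dim}(\mathcal{Q}_\mathbf{G})=\mathrm{dim}(\mathcal{C})-\mathrm{dim}(h_\kappa(\mathcal{C}))$, so this last condition is equivalent to $\mathrm{dim}(\mathcal{Q}_\mathbf{G})=0$. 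Since $\mathcal{Q}_\mathbf{G}$ is a linear code, $\mathrm{dim}(\mathcal{Q}_\mathbf{G})=0$ iff $\mathcal{Q}_\mathbf{G}=\{\mathbf{0}\}$.

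For the second equivalence, I would use the generating set for $\mathcal{Q}_\mathbf{G}$ supplied by Theorem \ref{theorem1}, namely the rows of the polynomial matrix in \eqref{long_GPM}. As a submodule of $\oplus_{j=1}^\ell \mathbb{F}_q[x]$, $\mathcal{Q}_\mathbf{G}$ equals $\{\mathbf{0}\}$ as a code (equivalently, as a vector space over $\mathbb{F}_q$) precisely when it contains no nonzero codeword of length $N\ell$; in the polynomial-module picture this means $\mathcal{Q}_\mathbf{G}$, viewed as a submodule of $\oplus_{j=1}^\ell\mathbb{F}_q[x]$ containing $\oplus_{j=1}^\ell\langle x^N-1\rangle$, is exactly $\oplus_{j=1}^\ell\langle x^N-1\rangle$ itself. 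Since this submodule is generated by the rows of $\mathfrak{B}_\mathbf{G}$ together with the rows of $(x^N-1)\mathbf{I}_\ell$, it collapses to $\oplus_{j=1}^\ell\langle x^N-1\rangle$ if and only if every row of $\mathfrak{B}_\mathbf{G}$ already lies in $\oplus_{j=1}^\ell\langle x^N-1\rangle$. But by Definition \ref{Def_of_B} the entries of $\mathfrak{B}_\mathbf{G}$ have degree at most $N-1$, so the only multiple of $x^N-1$ among them is $0$; hence this happens iff $\mathfrak{B}_\mathbf{G}=\mathbf{0}$. Alternatively, one can argue directly via Lemma \ref{lem5}: the rows of $B_\mathbf{G}$ span $\mathcal{Q}_\mathbf{G}$ and $\varphi(\mathrm{Row}_i(B_\mathbf{G}))=\mathrm{Row}_i(\mathfrak{B}_\mathbf{G})$ for $1\le i\le \ell$, while all further rows of $B_\mathbf{G}$ are shifts $\mathcal{F}^\iota$ of these; since $\varphi$ is a bijection, $\mathcal{Q}_\mathbf{G}=\{\mathbf{0}\}$ iff $\mathrm{Row}_i(B_\mathbf{G})=\mathbf{0}$ for all $i$ iff $\mathrm{Row}_i(\mathfrak{B}_\mathbf{G})=\mathbf{0}$ for all $i$ iff $\mathfrak{B}_\mathbf{G}=\mathbf{0}$.

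There is essentially no hard obstacle here: the corollary is a direct specialization of Theorem \ref{theorem1}, and the only point requiring a line of care is the observation that the $(x^N-1)\mathbf{I}_\ell$ block of \eqref{long_GPM} contributes nothing beyond $\oplus_{j=1}^\ell\langle x^N-1\rangle$, so that triviality of $\mathcal{Q}_\mathbf{G}$ is detected purely by $\mathfrak{B}_\mathbf{G}$ — and this in turn rests on the normalization in Definition \ref{Def_of_B} that the entries of $\mathfrak{B}_\mathbf{G}$ have degree at most $N-1$. I would state the proof in two short sentences: the first equivalence from the dimension formula, the second from the generating set together with the degree bound (or, equivalently, from Lemma \ref{lem5} and the injectivity of $\varphi$).
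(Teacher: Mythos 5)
The paper states Corollary \ref{self-orth} without proof, treating it as an immediate consequence of Theorem \ref{theorem1}; your write-up correctly supplies the intended argument. The first equivalence is exactly the dimension count via Theorem \ref{theorem1}, and for the second you correctly observe that $\mathcal{Q}_\mathbf{G}=\{\mathbf{0}\}$ as a linear code corresponds, in the module picture, to the submodule collapsing to $\oplus_{j=1}^\ell\langle x^N-1\rangle$, which happens iff each row of $\mathfrak{B}_\mathbf{G}$ lies in that submodule, and the degree bound $<N$ imposed in Definition \ref{Def_of_B} then forces $\mathfrak{B}_\mathbf{G}=\mathbf{0}$. Your alternative route via Lemma \ref{lem5} and the injectivity of $\varphi$ on vectors of degree $<N$ per coordinate is equally valid and slightly more direct, since it bypasses the module picture entirely and works at the level of the generator matrix $B_\mathbf{G}$. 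No gaps.
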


\begin{example}
We continue with the GQC code $\mathcal{C}$ introduced in Example \ref{Ex_SO_GQC}. We use Corollary \ref{self-orth} to convince the reader that $\mathcal{C}$ is $3$-Galois self-orthogonal. Note that $N=12$, and from \eqref{Ex_GQC} we find that
\begin{equation*}\begin{split}
&\mathbf{G}\ \mathrm{diag}\left[ \frac{x^N-1}{x^{m_j}-\lambda_j}\right]  \sigma^\kappa\left(\mathbf{G}\left(\frac{1}{x}\right) \mathrm{diag}\left[ x^{m_j}\right] \right)^t \\
&=\left(x^{12}+1\right)\begin{pmatrix}
 x^2 + x &   x &  \theta^5 x\\
 x^2 &  x^2 + x  & \theta^5 x^2 + \theta^5 x\\
 \theta^{10} x^3 &  \theta^{10} x^3 + \theta^{10} x^2  & x^3 + x
\end{pmatrix}\equiv \mathbf{0}\pmod{\left(x^{12}+1\right)}. 
\end{split}\end{equation*}
Then $\mathfrak{B}_\mathbf{G}=\mathbf{0}$ by Definition \ref{Def_of_B}.
\end{example}

\begin{example}
Consider the binary QC code $\mathcal{C}$ of length $n=36$, index $\ell=6$, co-index $m=6$, and reduced GPM
\begin{equation*}
\mathbf{G}=\begin{pmatrix}
1 & 0 & 1 & x^{4} + x^{2} & f(x) & x^{5} + x^{3} + x + 1 \\
0 & 1 & 0 & x^{3} & 1 & x^{5} + x^{4} + x^{2} + x + 1 \\
0 & 0 & x + 1 & x^{4} + x^{3} + 1 & x^{4} + x^{3} & x^{5} + x^{2} + 1 \\
0 & 0 & 0 & f(x) & 0 & f(x) \\
0 & 0 & 0 & 0 & x^{6} + 1 & 0 \\
0 & 0 & 0 & 0 & 0 & x^{6} + 1
\end{pmatrix}
\end{equation*}
where $f(x)=x^{5} + x^{4} + x^{3} + x^{2} + x + 1$. From \eqref{Dim_QC}, $\mathrm{dim}\left(\mathcal{C}\right)=18$. Since $\mathcal{C}$ is QC, we see that $N=m=m_j=6$ and $\lambda_j=1$ for $1\le j\le \ell$. Thus, for $\kappa=0$, we have
\begin{equation*}
\mathbf{G}\ \mathrm{diag}\left[ \frac{x^N-1}{x^{m_j}-\lambda_j}\right]  \sigma^\kappa\left(\mathbf{G}\left(\frac{1}{x}\right)\mathrm{diag}\left[ x^{m_j}\right] \right)^t=
x^m \mathbf{G}\ \mathbf{G}^t\left(\frac{1}{x}\right). 
\end{equation*}
We observe that all entries of $x^m \mathbf{G}\ \mathbf{G}^t\left(\frac{1}{x}\right)$ are divisible by $x^6+1$. Hence by Definition \ref{Def_of_B}, $\mathfrak{B}_\mathbf{G}=\mathbf{0}$, and $\mathcal{C}$ is Euclidean ($\kappa=0$) self-orthogonal by Corollary \ref{self-orth}. On calculating the minimum Hamming distance $d_\mathrm{min}$ of $\mathcal{C}$, we noticed that $d_\mathrm{min}=8$. According to the database \cite{Grassl}, $\mathcal{C}$ is optimal as a linear code.
\end{example}

From Theorem \ref{theorem1} we know that the dimension of the Galois hull is determined by the dimension of the QC code $\mathcal{Q}_\mathbf{G}$. The latter dimension might require a GPM for $\mathcal{Q}_\mathbf{G}$, see \eqref{Dim_QC}. Such GPM can be obtained by reducing the polynomial matrix given in \eqref{long_GPM} to its Hermit normal form (cf. Corollary \ref{GetGPM_Corr1}). However, Corollary \ref{GetGPM_Corr2} suggests the following alternative method for determining the dimension $\mathrm{dim}\left(\mathcal{Q}_\mathbf{G}\right)$. Hence, the dimension $\mathrm{dim}\left( h_\kappa\left(\mathcal{C}\right)\right)$ can be computed from the determinantal divisors of $\mathfrak{B}_\mathbf{G}$.

\begin{theorem}
\label{det_div}
Let $q=p^e$ and let $0\le \kappa <e$. Assume that $\Lambda^{-p^{-\kappa}}=\Lambda$, where $\Lambda=\left(\lambda_1,\lambda_2,\ldots,\lambda_\ell\right)$. Let $\mathcal{C}$ be a $\Lambda$-MT code over $\mathbb{F}_q$ of block lengths $\left( m_1,m_2,\ldots,m_\ell\right)$. With the same notation as in Definition \ref{Def_of_B},  
\begin{equation*}\mathrm{dim}\left( h_\kappa\left(\mathcal{C}\right)\right)=\mathrm{dim}\left( \mathcal{C}\right)+\deg\left(\gcd_{0\le i\le \ell}\left\{\left(x^N-1\right)^{\ell-i}\mathfrak{d}_i\left(\mathfrak{B}_\mathbf{G}\right)\right\}\right)-N\ell\end{equation*}
where $\mathfrak{d}_i\left(\mathfrak{B}_\mathbf{G}\right)$ is the $i$-th determinantal divisor of $\mathfrak{B}_\mathbf{G}$.
\end{theorem}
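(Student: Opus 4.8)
The plan is to read off $\dim(\mathcal{Q}_\mathbf{G})$ from Corollary \ref{GetGPM_Corr2} and then substitute it into the hull-dimension identity of Theorem \ref{theorem1}; almost all of the work has in fact already been done.

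First I would invoke the structural part of Theorem \ref{theorem1}, which already establishes that $\mathcal{Q}_\mathbf{G}$ is QC of length $N\ell$ and index $\ell$ and is generated as an $\mathbb{F}_q[x]$-module by the rows of the polynomial matrix \eqref{long_GPM}. This places us exactly in the setting of Theorem \ref{GetGPM} and its Corollary \ref{GetGPM_Corr2}, with the co-index $m$ there being $N$ here and the block $\mathfrak{B}$ there being the matrix $\mathfrak{B}_\mathbf{G}$, which is square of size $\ell\times\ell$ (being a product of $\ell\times\ell$ matrices, cf. Definition \ref{Def_of_B}). In particular the number of rows $r$ equals $\ell$, so $\min\{r,\ell\}=\ell$ and the index range of the gcd in Corollary \ref{GetGPM_Corr2} becomes $0\le i\le\ell$, matching the statement.

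Next I would apply Corollary \ref{GetGPM_Corr2} with these substitutions to obtain
\[
\dim(\mathcal{Q}_\mathbf{G})=N\ell-\deg\!\left(\gcd_{0\le i\le \ell}\left\{(x^N-1)^{\ell-i}\,\mathfrak{d}_i(\mathfrak{B}_\mathbf{G})\right\}\right),
\]
with the convention $\mathfrak{d}_0(\mathfrak{B}_\mathbf{G})=1$. Finally, Theorem \ref{theorem1} also gives $\dim(\mathcal{Q}_\mathbf{G})=\dim(\mathcal{C})-\dim(h_\kappa(\mathcal{C}))$; equating the two expressions for $\dim(\mathcal{Q}_\mathbf{G})$ and rearranging for $\dim(h_\kappa(\mathcal{C}))$ yields precisely the claimed formula.

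There is essentially no hard step here: the entire burden has already been discharged in Theorem \ref{theorem1} (QC structure of $\mathcal{Q}_\mathbf{G}$, its generating matrix, and the identity derived from Theorem \ref{hongwei}) and in Corollary \ref{GetGPM_Corr2} (the determinantal-divisor dimension formula for QC codes). The only points that need care are bookkeeping ones — that the relevant modulus is $x^N-1$ rather than any $x^{m_j}-\lambda_j$, and that $\mathfrak{B}_\mathbf{G}$ being $\ell\times\ell$ is what collapses $\min\{r,\ell\}$ to $\ell$ — together with the remark that the convention $\mathfrak{d}_0(\mathfrak{B}_\mathbf{G})=1$ makes the $i=0$ term of the gcd equal to $(x^N-1)^\ell$, so that $\dim(\mathcal{Q}_\mathbf{G})\ge 0$ and the resulting formula is well posed.
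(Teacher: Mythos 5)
Your proposal is correct and matches the paper's own proof: both invoke Theorem \ref{theorem1} to identify $\mathcal{Q}_\mathbf{G}$ as the QC code generated by the matrix in \eqref{long_GPM} and to obtain $\dim(\mathcal{Q}_\mathbf{G})=\dim(\mathcal{C})-\dim(h_\kappa(\mathcal{C}))$, and both then apply Corollary \ref{GetGPM_Corr2} (with $r=\ell$, co-index $N$) to express $\dim(\mathcal{Q}_\mathbf{G})$ through the determinantal divisors of $\mathfrak{B}_\mathbf{G}$. The bookkeeping remarks you add (why $\min\{r,\ell\}=\ell$, the role of $\mathfrak{d}_0=1$) are just explicit versions of what the paper leaves implicit.
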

\begin{proof}
Let $\mathfrak{G}_\mathbf{G}$ be a GPM for $\mathcal{Q}_\mathbf{G}$. For the same reason mentioned in the proof of Corollary \ref{GetGPM_Corr1}, there is an invertible $\mathbf{U}$ such that
\begin{equation*}
\mathbf{U}\begin{pmatrix}
\mathfrak{B}_\mathbf{G}\\
\left(x^N-1\right)\mathbf{I}_\ell
\end{pmatrix}=\begin{pmatrix}
\mathfrak{G}_\mathbf{G}\\
\mathbf{0}_{\ell\times\ell}
\end{pmatrix}.
\end{equation*}
Now we apply Corollary \ref{GetGPM_Corr2} and Theorem \ref{theorem1} to obtain
\begin{equation*}
\begin{split}
\mathrm{dim}\left( \mathcal{C}\right)-\mathrm{dim}\left( h_\kappa\left(\mathcal{C}\right)\right)=\mathrm{dim}\left( \mathcal{Q}_\mathbf{G}\right)=N \ell - \deg\left(\gcd_{0\le i\le \ell}\left\{\left(x^N-1\right)^{\ell-i}\mathfrak{d}_i\left(\mathfrak{B}_\mathbf{G}\right)\right\}\right).
\end{split}
\end{equation*}
\end{proof}
It readily follows from Theorems \ref{theorem1} and \ref{det_div} the following identification of LCD codes.
\begin{corollary}
\label{LCD}
Given the same notation as in Definitions \ref{Def_of_B} and \ref{def_matrix_S}, $\mathcal{C}$ is $\kappa$-Galois LCD if and only if $\mathrm{dim}\left(\mathcal{Q}_\mathbf{G}\right)=\mathrm{dim}\left(\mathcal{C}\right)$ if and only if 
$$\deg\left(\gcd_{0\le i\le \ell}\left\{\left(x^N-1\right)^{\ell-i}\mathfrak{d}_i\left(\mathfrak{B}_\mathbf{G}\right)\right\}\right)=N\ell-\mathrm{dim}\left( \mathcal{C}\right).$$
\end{corollary}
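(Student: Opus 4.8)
The plan is to chain together the two theorems already established, so the argument is short. First I would recall that, by Definition \ref{def_LCD_SO} together with the observation immediately following it, $\mathcal{C}$ is $\kappa$-Galois LCD precisely when $h_\kappa\left(\mathcal{C}\right)=\{\mathbf{0}\}$, which in turn holds if and only if $\mathrm{dim}\left(h_\kappa\left(\mathcal{C}\right)\right)=0$ (a subspace is trivial exactly when its dimension vanishes). Theorem \ref{theorem1} supplies the identity $\mathrm{dim}\left(\mathcal{Q}_\mathbf{G}\right)=\mathrm{dim}\left(\mathcal{C}\right)-\mathrm{dim}\left(h_\kappa\left(\mathcal{C}\right)\right)$; substituting $\mathrm{dim}\left(h_\kappa\left(\mathcal{C}\right)\right)=0$ shows that $\mathcal{C}$ is $\kappa$-Galois LCD if and only if $\mathrm{dim}\left(\mathcal{Q}_\mathbf{G}\right)=\mathrm{dim}\left(\mathcal{C}\right)$. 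This gives the first biconditional.

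For the second biconditional I would use the determinantal-divisor computation carried out inside the proof of Theorem \ref{det_div}: applying Corollary \ref{GetGPM_Corr2} to the polynomial matrix in \eqref{long_GPM} yields
\begin{equation*}
\mathrm{dim}\left(\mathcal{Q}_\mathbf{G}\right)=N\ell-\deg\left(\gcd_{0\le i\le \ell}\left\{\left(x^N-1\right)^{\ell-i}\mathfrak{d}_i\left(\mathfrak{B}_\mathbf{G}\right)\right\}\right).
\end{equation*}
Equating this with $\mathrm{dim}\left(\mathcal{C}\right)$ and rearranging gives exactly $\deg\left(\gcd_{0\le i\le \ell}\left\{\left(x^N-1\right)^{\ell-i}\mathfrak{d}_i\left(\mathfrak{B}_\mathbf{G}\right)\right\}\right)=N\ell-\mathrm{dim}\left(\mathcal{C}\right)$, as claimed. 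Equivalently, one could read this off directly from the formula in Theorem \ref{det_div} by setting $\mathrm{dim}\left(h_\kappa\left(\mathcal{C}\right)\right)=0$.

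I do not expect any genuine obstacle: the statement is a bookkeeping corollary of Theorems \ref{theorem1} and \ref{det_div}. The only step deserving a moment's care is the translation of the LCD condition into ``Galois hull of dimension zero'', but this is precisely the content of the remark following Definition \ref{def_LCD_SO}, combined with the fact that a subspace equals $\{\mathbf{0}\}$ if and only if its $\mathbb{F}_q$-dimension is $0$.
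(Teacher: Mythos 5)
Your argument is correct and follows the same route the paper intends: the paper states that the corollary "readily follows from Theorems \ref{theorem1} and \ref{det_div}" and gives no explicit proof, and your chaining of Theorem \ref{theorem1} (for the first biconditional, via $\dim h_\kappa(\mathcal{C})=0$) and the determinantal-divisor formula inside Theorem \ref{det_div} (for the second) is precisely that deduction.
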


\begin{example}
Let $\ell=2$ and $m_1=m_2=2$. Consider the $\left(2,5\right)$-QT code $\mathcal{C}$ over $\mathbb{F}_7$ generated by $\{(x+1,0)\}$. In \cite[Example 5.3]{Sharma2018}, it has been demonstrated that $\mathcal{C}$ is Euclidean LCD. We aim to show this by applying Corollary \ref{LCD}. From the discussion that follows Definition \ref{def_MT}, a GPM for $\mathcal{C}$ can be constructed from the generating set $\{(x+1,0),(x^2-2,0),(0,x^2-5)\}$. As a consequence, the reduced GPM of $\mathcal{C}$ is
\begin{equation*}
\mathbf{G}=\begin{pmatrix}
1&0\\
0& x^2 + 2 
\end{pmatrix}.
\end{equation*}
Since $\kappa=0$ and $N=\mathrm{lcm}\left\{2\times3,2\times6\right\}=12$, it follows from Definition \ref{Def_of_B} that
\begin{equation*}\begin{split}
&\mathbf{G}\ \mathrm{diag}\left[ \frac{x^N-1}{x^{m_j}-\lambda_j}\right]  \sigma^\kappa\left(\mathbf{G}\left(\frac{1}{x}\right) \mathrm{diag}\left[ x^{m_j}\right] \right)^t =\begin{pmatrix}
x^{12} + 2 x^{10} + 4 x^8 + x^6 + 2 x^4 + 4 x^2 &  0\\
0 &  2 x^{14} + x^{12} + 5 x^2 + 6
\end{pmatrix}.
\end{split}\end{equation*}
Reduction modulo $(x^{12}-1)$ yields
\begin{equation*}
\mathfrak{B}_\mathbf{G}= \begin{pmatrix}
 2 x^{10} + 4 x^8 + x^6 + 2 x^4 + 4 x^2 +1 &  0\\
0 &   0
\end{pmatrix}. 
\end{equation*}
In this case, the Hermite normal form of the polynomial matrix given in \eqref{long_GPM} provides the reduced GPM of $\mathcal{Q}_\mathbf{G}$, which is given by
\begin{equation*}
\mathfrak{G}_\mathbf{G}=
\begin{pmatrix}
 2 x^{10} + 4 x^8 + x^6 + 2 x^4 + 4 x^2 +1 &  0\\
0 &   x^{12}+6
\end{pmatrix}.
\end{equation*}
Equation \eqref{Dim_QC} shows that $\mathrm{dim}\left(\mathcal{Q}_\mathbf{G}\right)=N\ell -\deg\left( \mathrm{det}\mathfrak{G}_\mathbf{G}\right)=24-22=2=\mathrm{dim}\left(\mathcal{C}\right)$. Thus $\mathcal{C}$ is LCD by Corollary \ref{LCD}. We could have shown this in an alternative way by using the determinantal divisors of $\mathfrak{B}_\mathbf{G}$. Specifically, $\mathfrak{d}_0\left(\mathfrak{B}_\mathbf{G}\right)=1$, $\mathfrak{d}_1\left(\mathfrak{B}_\mathbf{G}\right)=2 x^{10} + 4 x^8 + x^6 + 2 x^4 + 4 x^2 +1$ divides $x^{12}-1$, and $\mathfrak{d}_2\left(\mathfrak{B}_\mathbf{G}\right)=0$. Thus, $\mathcal{C}$ is Euclidean LCD because
\begin{equation*}
\deg\left(\gcd_{0\le i\le 2}\left\{\left(x^{12}-1\right)^{2-i}\mathfrak{d}_i\left(\mathfrak{B}_\mathbf{G}\right)\right\}\right)=22=N\ell-\mathrm{dim}\left( \mathcal{C}\right).
\end{equation*}
\end{example}

The following theorem constitutes the main result of this paper. It is significantly deeper than the result of Theorem \ref{det_div}, where we offer a GPM for the Galois hull of a MT code. Specifically, we prove that the product of a GPM for the $\kappa$-Galois dual of $\mathcal{Q}_\mathbf{G}$ and $\mathbf{G}$ yields a GPM for the $\kappa$-Galois hull of $\mathcal{C}$. 
\begin{theorem}
\label{main_theorem}
Assume that $\Lambda^{-p^{-\kappa}}=\Lambda$, where $\Lambda=\left(\lambda_1,\lambda_2,\ldots,\lambda_\ell\right)$, $0\ne \lambda_j\in\mathbb{F}_q$ for $1\le j\le \ell$, $q=p^e$, and $0\le \kappa <e$. For a $\Lambda$-MT code $\mathcal{C}$ over $\mathbb{F}_q$ of block lengths $\left( m_1,m_2,\ldots,m_\ell\right)$ and a GPM $\mathbf{G}$, we set 
\begin{equation*}
\mathfrak{B}_\mathbf{G}\equiv\mathbf{G}\ \mathrm{diag}\left[ \frac{x^N-1}{x^{m_j}-\lambda_j}\right] \sigma^\kappa\left(\mathbf{G}\left(\frac{1}{x}\right)\mathrm{diag}\left[ x^{m_j}\right] \right)^t \pmod{\left(x^N-1\right)}
\end{equation*}
such that  $\deg\left(\mathrm{Ent}_{i,j}\left(\mathfrak{B}_\mathbf{G}\right)\right)<N$ for $1\le i,j\le \ell$, where $N=\mathop{\mathrm{lcm}}_{1\le j\le \ell}\left(t_j m_j\right)$ and $t_j$ is the multiplicative order of $\lambda_j$. We denote the QC over $\mathbb{F}_q$ of length $N\ell$ and index $\ell$ generated by the rows of 
\begin{equation*}
\begin{pmatrix}
\mathfrak{B}_\mathbf{G}\\
\left(x^N-1\right)\mathbf{I}_\ell
\end{pmatrix}
\end{equation*}
by $\mathcal{Q}_\mathbf{G}$. If $\mathfrak{H}_\mathbf{G}$ is a GPM of the $\kappa$-Galois dual $\mathcal{Q}_\mathbf{G}^{\perp_\kappa}$ of $\mathcal{Q}_\mathbf{G}$, then $\mathfrak{H}_\mathbf{G}\mathbf{G}$ is a GPM for $h_\kappa\left(\mathcal{C}\right)$.
\end{theorem}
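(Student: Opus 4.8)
The plan is to reduce everything to the set-theoretic identity
\begin{equation*}
h_\kappa\!\left(\mathcal{C}\right)=\left\{\,\mathbf{r}(x)\,\mathbf{G}\ :\ \mathbf{r}(x)\in\mathcal{Q}_\mathbf{G}^{\perp_\kappa}\,\right\},
\end{equation*}
where $\mathcal{Q}_\mathbf{G}^{\perp_\kappa}$ is viewed as an $\mathbb{F}_q[x]$-submodule of $\oplus_{j=1}^{\ell}\mathbb{F}_q[x]$ containing $\oplus_{j=1}^{\ell}\langle x^N-1\rangle$, and $\mathbf{r}(x)\,\mathbf{G}$ is read inside $\oplus_{j=1}^{\ell}\mathbb{F}_q[x]$. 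Once this identity is established the theorem follows quickly: since $\mathfrak{H}_\mathbf{G}$ is a GPM of $\mathcal{Q}_\mathbf{G}^{\perp_\kappa}$, its rows form an $\mathbb{F}_q[x]$-basis of that submodule, so $\mathcal{Q}_\mathbf{G}^{\perp_\kappa}=\{\mathbf{s}(x)\,\mathfrak{H}_\mathbf{G}:\mathbf{s}(x)\in\left(\mathbb{F}_q[x]\right)^{\ell}\}$; substituting into the displayed identity gives $h_\kappa(\mathcal{C})=\{\mathbf{s}(x)\,\mathfrak{H}_\mathbf{G}\mathbf{G}\}$, i.e.\ $h_\kappa(\mathcal{C})$ is exactly the $\mathbb{F}_q[x]$-submodule generated by the rows of $\mathfrak{H}_\mathbf{G}\mathbf{G}$. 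Finally these $\ell$ rows are $\mathbb{F}_q[x]$-linearly independent because $\det\!\left(\mathfrak{H}_\mathbf{G}\mathbf{G}\right)=\det\mathfrak{H}_\mathbf{G}\cdot\det\mathbf{G}\ne 0$, both determinants being nonzero by the identical equation \eqref{identicalMT} applied to $\mathcal{Q}_\mathbf{G}^{\perp_\kappa}$ and to $\mathcal{C}$ respectively; since moreover $h_\kappa(\mathcal{C})$ is $\Lambda$-MT by Corollary \ref{coro_MT_hull}, the square matrix $\mathfrak{H}_\mathbf{G}\mathbf{G}$ is indeed a GPM for $h_\kappa(\mathcal{C})$.

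To prove the displayed identity, first observe that the rows of $\mathbf{G}$ form an $\mathbb{F}_q[x]$-basis of $\mathcal{C}$ (they generate $\mathcal{C}$, and are independent because $\det\mathbf{G}\ne 0$), so each $\mathbf{c}\in\mathcal{C}$ equals $\mathbf{r}(x)\mathbf{G}$ for a unique $\mathbf{r}(x)\in(\mathbb{F}_q[x])^{\ell}$; hence $h_\kappa(\mathcal{C})=\mathcal{C}\cap\mathcal{C}^{\perp_\kappa}$ consists precisely of those $\mathbf{r}(x)\mathbf{G}$ whose underlying codeword lies in $\mathcal{C}^{\perp_\kappa}$. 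The heart of the matter is the biconditional: for $\mathbf{r}(x)\in(\mathbb{F}_q[x])^{\ell}$,
\begin{equation*}
\mathbf{r}(x)\mathbf{G}\in\mathcal{C}^{\perp_\kappa}\iff\mathfrak{B}_\mathbf{G}\,\sigma^\kappa\!\left(\mathbf{r}(1/x)\right)^{t}\equiv\mathbf{0}\pmod{x^N-1}\iff\mathbf{r}(x)\in\mathcal{Q}_\mathbf{G}^{\perp_\kappa}.
\end{equation*}
For the first equivalence, expand $(\mathbf{r}\mathbf{G})(1/x)\,\mathrm{diag}[x^{m_j}]=\mathbf{r}(1/x)\big(\mathbf{G}(1/x)\,\mathrm{diag}[x^{m_j}]\big)$ and push $\sigma^\kappa$ and the transpose through, so that by Definition \ref{Def_of_B}
\begin{equation*}
\mathbf{G}\,\mathrm{diag}\!\left[\frac{x^N-1}{x^{m_j}-\lambda_j}\right]\sigma^\kappa\!\big((\mathbf{r}\mathbf{G})(1/x)\,\mathrm{diag}[x^{m_j}]\big)^{t}\equiv\mathfrak{B}_\mathbf{G}\,\sigma^\kappa\!\left(\mathbf{r}(1/x)\right)^{t}\pmod{x^N-1}.
\end{equation*}
Applying Lemma \ref{lem1} to each row of $\mathbf{G}$ identifies the $i$-th entry of the left side with $\sum_{\iota=0}^{N-1}\langle\mathbf{a}_i,\mathcal{T}^\iota_\Lambda(\mathbf{b})\rangle_\kappa x^\iota$, where $\mathbf{a}_i$ is the codeword of $\mathrm{Row}_i(\mathbf{G})$ and $\mathbf{b}$ that of $\mathbf{r}\mathbf{G}$. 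Since $\Lambda^{-p^{-\kappa}}=\Lambda$, Lemma \ref{lem_shift} gives $\langle\mathcal{T}_\Lambda(\mathbf{u}),\mathcal{T}_\Lambda(\mathbf{v})\rangle_\kappa=\langle\mathbf{u},\mathbf{v}\rangle_\kappa$, hence $\langle\mathbf{a}_i,\mathcal{T}^\iota_\Lambda(\mathbf{b})\rangle_\kappa=\langle\mathcal{T}^{N-\iota}_\Lambda(\mathbf{a}_i),\mathbf{b}\rangle_\kappa$; and as $(i,\iota)$ ranges over all values the vectors $\mathcal{T}^{N-\iota}_\Lambda(\mathbf{a}_i)$ span $\mathcal{C}$ over $\mathbb{F}_q$, so the left side vanishes $\pmod{x^N-1}$ exactly when $\mathbf{b}\in\mathcal{C}^{\perp_\kappa}$.

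For the second equivalence I would rerun this argument inside the quasi-cyclic code $\mathcal{Q}_\mathbf{G}$ (playing the role of ``$\mathcal{C}$'' with $m_j=N$, $\lambda_j=1$, and the same $N$): its codewords are the row span of $B_\mathbf{G}$, whose first $\ell$ rows are represented by the rows of $\mathfrak{B}_\mathbf{G}$ (Lemma \ref{lem5}) and all of whose other rows are $\mathcal{F}$-shifts of these (Lemma \ref{lem4}). Exactly as above, Lemma \ref{lem1} then yields $\mathbf{r}(x)\in\mathcal{Q}_\mathbf{G}^{\perp_\kappa}$ iff $\mathfrak{B}_\mathbf{G}\,\sigma^\kappa(\mathbf{r}(1/x))^{t}\equiv\mathbf{0}\pmod{x^N-1}$. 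A point threaded through both equivalences is that every condition appearing depends on $\mathbf{r}(x)$ only modulo $x^N-1$, and on the codeword of $\mathbf{r}\mathbf{G}$ only modulo the $x^{m_j}-\lambda_j$; this is checked by the same manipulation already used in the proof of Corollary \ref{corollary1} (adding a multiple of $x^{m_j}-\lambda_j$, resp.\ of $x^N-1$, alters the relevant polynomial expression by a multiple of $x^N-1$). I expect precisely this bookkeeping to be the only real obstacle: making the implication ``$\mathbf{r}\mathbf{G}\in\mathcal{C}^{\perp_\kappa}\Rightarrow\mathbf{r}\in\mathcal{Q}_\mathbf{G}^{\perp_\kappa}$'' reversible — i.e.\ keeping careful track of which objects live in $\mathbb{F}_q^n$, in $\oplus_{j}\mathbb{F}_q[x]$, and in the quotients, and verifying that the substitution $x\mapsto 1/x$ is compatible with reduction modulo $x^N-1$. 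The matrix manipulations and the determinant/independence step are then routine.
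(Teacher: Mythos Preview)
Your approach is correct and genuinely different from the paper's. The paper argues in two separate steps: first it shows by matrix algebra (expanding $\mathfrak{B}_\mathbf{G}$, right-multiplying by $\sigma^{e-\kappa}\!\big(\mathbf{A}^t(1/x)\,\mathrm{diag}[x^{m_j}]\big)$, and invoking the explicit formula for $\mathbf{H}_\kappa$ from Theorem~\ref{GPMforDual}) that the rows of $\mathfrak{H}_\mathbf{G}\mathbf{G}$ lie in $\mathcal{C}^{\perp_\kappa}$, hence in $h_\kappa(\mathcal{C})$; then it upgrades this containment to equality by the dimension formula of Theorem~\ref{theorem1}, which itself rests on the rank identity of Theorem~\ref{hongwei}. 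In contrast, you prove the set-theoretic identity $h_\kappa(\mathcal{C})=\{\mathbf{r}\mathbf{G}:\mathbf{r}\in\mathcal{Q}_\mathbf{G}^{\perp_\kappa}\}$ directly, by reading Lemma~\ref{lem1} as a biconditional twice (once for $\mathcal{C}$ and once for $\mathcal{Q}_\mathbf{G}$), and then deduce that $\mathfrak{H}_\mathbf{G}\mathbf{G}$ is a GPM from $\det\mathfrak{H}_\mathbf{G}\cdot\det\mathbf{G}\ne 0$. This bypasses both Theorem~\ref{GPMforDual} and Theorem~\ref{hongwei}; indeed, comparing degrees of determinants in your identity recovers $\dim h_\kappa(\mathcal{C})=\dim\mathcal{C}-\dim\mathcal{Q}_\mathbf{G}$ as a corollary rather than an input. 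The price you pay is the bookkeeping you flag (well-definedness modulo $x^N-1$ and modulo $x^{m_j}-\lambda_j$, and handling Laurent polynomials under $x\mapsto 1/x$), but these are exactly the manipulations already carried out in the proofs of Corollary~\ref{corollary1} and Lemma~\ref{lem2}, so no new idea is needed there.
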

\begin{proof}
In the special case of the QC code $\mathcal{Q}_\mathbf{G}$ of index $\ell$ and co-index $N$, Corollary \ref{corollary1} asserts that
\begin{equation*}
\begin{split}
\mathbf{0} &\equiv  \mathfrak{H}_\mathbf{G}\ \sigma^{e-\kappa}\left( x^N \mathfrak{B}_\mathbf{G}^t\left(\frac{1}{x}\right)\right) \\
&\equiv  \mathfrak{H}_\mathbf{G} \mathbf{G} \mathrm{diag}\left[ x^{-m_j}\right] \sigma^{e-\kappa}\left( x^N \mathrm{diag}\left[ \frac{x^{-N}-1}{x^{-m_j}- \lambda_j}\right] \mathbf{G}^t\left(\frac{1}{x}\right)\right) \pmod{\left(x^N-1\right)}.
\end{split}\end{equation*}
Then there exists a polynomial matrix $\mathbf{N}$ such that
\begin{equation*}
\begin{split}
\mathfrak{H}_\mathbf{G} \mathbf{G} \mathrm{diag}\left[ x^{-m_j}\right] \sigma^{e-\kappa}\left( x^N \mathrm{diag}\left[ \frac{x^{-N}-1}{x^{-m_j}- \lambda_j}\right] \mathbf{G}^t\left(\frac{1}{x}\right)\right) =\left( x^N-1\right)\mathbf{N}.
\end{split}\end{equation*}
Therefore,
\begin{equation*}
\begin{split}
 \mathfrak{H}_\mathbf{G} \mathbf{G} \mathrm{diag}\left[ x^{-m_j}\right] \sigma^{e-\kappa}\left( x^N \mathrm{diag}\left[ \frac{x^{-N}-1}{x^{-m_j}- \lambda_j}\right] \mathbf{G}^t\left(\frac{1}{x}\right)\mathbf{A}^t\left(\frac{1}{x}\right)\mathrm{diag}\left[ x^{m_j}\right]\right)\\ =\left( x^N-1\right) \mathbf{N}\sigma^{e-\kappa}\left( \mathbf{A}^t\left(\frac{1}{x}\right)\mathrm{diag}\left[ x^{m_j}\right]\right)
\end{split}\end{equation*}
and
\begin{equation*}
\mathfrak{H}_\mathbf{G} \mathbf{G}  =- \mathbf{N}\sigma^{e-\kappa}\left( \mathbf{A}^t\left(\frac{1}{x}\right) \mathrm{diag}\left[ x^{m_j}\right] \right)= - \mathbf{N}\mathrm{diag}\left[ x^{d_j}\right] \mathbf{H}_\kappa.
\end{equation*}
This shows that rows of $\mathfrak{H}_\mathbf{G}\mathbf{G}$ are codewords of $\mathcal{C}^{\perp_\kappa}$, but since rows of $\mathfrak{H}_\mathbf{G}\mathbf{G}$ are codewords of $\mathcal{C}$, they are codewords of $h_\kappa\left(\mathcal{C}\right)$. Then for any GPM $\mathbf{L}_\kappa$ of $h_\kappa\left(\mathcal{C}\right)$, there is a polynomial matrix $\mathbf{U}$ such that $\mathfrak{H}_\mathbf{G}\mathbf{G}=\mathbf{U}\mathbf{L}_\kappa$.  We claim that $\mathfrak{H}_\mathbf{G}\mathbf{G}$ is a GPM for $h_\kappa\left(\mathcal{C}\right)$. This claim is proven once we show that $\mathbf{U}$ is invertible. But
\begin{equation*}
\begin{split}
\deg\left(\mathrm{det}\mathbf{U}\right)&=\deg\left(\mathrm{det}\mathfrak{H}_\mathbf{G}\right)+\deg\left(\mathrm{det}\mathbf{G}\right)-\deg\left(\mathrm{det}\mathbf{L}_\kappa\right)\\
&=\left(N\ell-\mathrm{dim}\mathcal{Q}_\mathbf{G}^{\perp_\kappa}\right)+\left(n-\mathrm{dim}\mathcal{C}\right)
-\left(n-\mathrm{dim}h_\kappa\left(\mathcal{C}\right)\right)\\
&=\mathrm{dim}\mathcal{Q}_\mathbf{G}-\mathrm{dim}\mathcal{C}+\mathrm{dim}h_\kappa\left(\mathcal{C}\right)\\
&=0. \qquad \text{ (By Theorem \ref{theorem1})}
\end{split}\end{equation*}
Thus $\mathbf{U}$ is invertible and the proof is complete.
\end{proof}

The following is a direct consequence of Theorem \ref{main_theorem} and is a supplementary result of Corollaries \ref{self-orth} and \ref{LCD}.
\begin{corollary}
\label{LCDand_self_orth}
Given the same hypothesis and notation as Theorem \ref{main_theorem}, $\mathcal{C}$ is $\kappa$-Galois self-orthogonal if and only if $\mathfrak{H}_\mathbf{G}\mathbf{G}$ is a GPM for $\mathcal{C}$ if and only if $\mathfrak{H}_\mathbf{G}$ is invertible. In contrast, $\mathcal{C}$ is $\kappa$-Galois LCD if and only if $\mathbf{A}$ is a GPM for $\mathcal{Q}_\mathbf{G}^{\perp_\kappa}$.
\end{corollary}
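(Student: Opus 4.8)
The plan is to derive both equivalences directly from Theorem \ref{main_theorem} --- which already exhibits $\mathfrak{H}_\mathbf{G}\mathbf{G}$ as a GPM of $h_\kappa\left(\mathcal{C}\right)$ --- together with the elementary fact that two $\ell\times\ell$ polynomial matrices of nonzero determinant are GPMs of the same free rank-$\ell$ submodule of $\oplus_{j=1}^{\ell}\mathbb{F}_q[x]$ if and only if they differ by left multiplication by an invertible polynomial matrix. First I would record this change-of-basis fact and note that $\det\mathbf{G}\ne0$ and, via the identical equation \eqref{identicalMT}, $\det\mathbf{A}\,\det\mathbf{G}=\prod_{j=1}^{\ell}\left(x^{m_j}-\lambda_j\right)\ne0$, so that $\mathbf{A}$ is $\ell\times\ell$ with nonzero determinant and $\mathbf{G}$ may always be cancelled on the right.

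For the self-orthogonal case, recall that $\mathcal{C}$ is $\kappa$-Galois self-orthogonal exactly when $h_\kappa\left(\mathcal{C}\right)=\mathcal{C}$ (the remark after Definition \ref{def_LCD_SO}). Since a GPM determines its MT code uniquely, $\mathfrak{H}_\mathbf{G}\mathbf{G}$, being a GPM of $h_\kappa\left(\mathcal{C}\right)$, is a GPM of $\mathcal{C}$ precisely in this case; and $\mathfrak{H}_\mathbf{G}\mathbf{G}$ together with $\mathbf{G}$ are GPMs of the single code $\mathcal{C}$ if and only if $\mathfrak{H}_\mathbf{G}$ is invertible, by the change-of-basis fact. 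Equivalently, one may compare determinant degrees: $\dim h_\kappa\left(\mathcal{C}\right)=n-\deg\det\mathfrak{H}_\mathbf{G}-\deg\det\mathbf{G}$ whereas $\dim\mathcal{C}=n-\deg\det\mathbf{G}$, so $h_\kappa\left(\mathcal{C}\right)=\mathcal{C}$ if and only if $\deg\det\mathfrak{H}_\mathbf{G}=0$ if and only if $\mathfrak{H}_\mathbf{G}$ is invertible.

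For the LCD case, $\mathcal{C}$ is $\kappa$-Galois LCD if and only if $h_\kappa\left(\mathcal{C}\right)=\left\{\mathbf{0}\right\}$ as a subspace of $\mathbb{F}_q^n$, which in the module picture means $h_\kappa\left(\mathcal{C}\right)=\oplus_{j=1}^{\ell}\langle x^{m_j}-\lambda_j\rangle$, i.e., $\mathrm{diag}\left[x^{m_j}-\lambda_j\right]$ is a GPM of $h_\kappa\left(\mathcal{C}\right)$. Writing $\mathrm{diag}\left[x^{m_j}-\lambda_j\right]=\mathbf{A}\mathbf{G}$ and using that $\mathfrak{H}_\mathbf{G}\mathbf{G}$ is also a GPM of $h_\kappa\left(\mathcal{C}\right)$, this happens if and only if $\mathbf{A}\mathbf{G}=\mathbf{V}\,\mathfrak{H}_\mathbf{G}\mathbf{G}$ for some invertible $\mathbf{V}$, if and only if $\mathbf{A}=\mathbf{V}\,\mathfrak{H}_\mathbf{G}$ after cancelling $\mathbf{G}$, if and only if $\mathbf{A}$ is a GPM of $\mathcal{Q}_\mathbf{G}^{\perp_\kappa}$ (applying the change-of-basis fact to $\mathfrak{H}_\mathbf{G}$, a GPM of $\mathcal{Q}_\mathbf{G}^{\perp_\kappa}$).

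The argument is essentially bookkeeping on top of Theorem \ref{main_theorem}, and I expect no serious obstacle. The only points needing a little care are the translation between ``$h_\kappa\left(\mathcal{C}\right)$ is the zero subspace of $\mathbb{F}_q^n$'' and ``$h_\kappa\left(\mathcal{C}\right)$, as a submodule, equals $\oplus_{j=1}^{\ell}\langle x^{m_j}-\lambda_j\rangle$'', and checking that the matrix being tested --- $\mathfrak{H}_\mathbf{G}$ in the first part, $\mathbf{A}$ in the second --- is $\ell\times\ell$ with nonzero determinant, so that the change-of-basis fact genuinely applies.
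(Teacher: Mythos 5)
Your proof is correct and follows essentially the route the paper itself indicates (the paper states the corollary as a direct consequence of Theorem \ref{main_theorem} and gives no written proof). You correctly use the two key facts: $\mathfrak{H}_\mathbf{G}\mathbf{G}$ is a GPM of $h_\kappa\left(\mathcal{C}\right)$, and two $\ell\times\ell$ polynomial matrices with nonzero determinant generate the same free rank-$\ell$ submodule precisely when they differ by left multiplication by an invertible polynomial matrix. The identification $h_\kappa\left(\mathcal{C}\right)=\{\mathbf{0}\}$ with the submodule $\oplus_{j=1}^{\ell}\langle x^{m_j}-\lambda_j\rangle$, the use of $\mathbf{A}\mathbf{G}=\mathrm{diag}\left[x^{m_j}-\lambda_j\right]$, and the right-cancellation of $\mathbf{G}$ via $\det\mathbf{G}\ne0$ are exactly the bookkeeping steps needed, and the determinant-degree alternative for the self-orthogonal equivalence is a clean shortcut. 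No gap.
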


It is now worthwhile to present a detailed example of applying Theorem \ref{main_theorem}.
\begin{example}
\label{MDS}
Let $\Lambda=\left(1,\theta\right)$, where $\theta\in\mathbb{F}_4$ such that $\theta^2+\theta+1=0$. Consider the $\Lambda$-MT code $\mathcal{C}$ over $\mathbb{F}_4$ of index $2$, block lengths $\left(3,5\right)$, and reduced GPM
\begin{equation*}
\mathbf{G}=\begin{pmatrix}
1 & \theta^2 \\
0 & x + \theta^2
\end{pmatrix}.
\end{equation*}
The matrix that satisfies the identical equation \eqref{identicalMT} of $\mathbf{G}$ is
\begin{equation*}
\mathbf{A}=\begin{pmatrix}
x^{3} + 1 & \theta^2 x^{2} + \theta x + 1 \\
0 & x^{4} + \theta^2 x^{3} + \theta x^{2} + x + \theta^2
\end{pmatrix}.
\end{equation*}
We see from \eqref{dimensionMT} that $\mathrm{dim}\left(\mathcal{C}\right)=7$, we also found that the minimum Hamming distance of $\mathcal{C}$ is $d_\mathrm{min}=2$. Thus, $\mathcal{C}$ achieves the Singleton bound, and hence, $\mathcal{C}$ is MDS. Set $\kappa=1$. Our aim is to find a GPM for $h_\kappa\left(\mathcal{C}\right)$. Since $N=\mathrm{lcm}\left\{3, 15\right\}=15$, we find that
\begin{equation*}\begin{split}
&\mathbf{G}\ \mathrm{diag}\left[ \frac{x^N-1}{x^{m_j}-\lambda_j}\right]  \sigma^\kappa\left(\mathbf{G}\left(\frac{1}{x}\right)\mathrm{diag}\left[ x^{m_j}\right] \right)^t=\\
&\begin{pmatrix}
x^{12} + \theta x^{10} + x^{9} + x^{6} + \theta^2 x^{5} + x^{3} & x^{15} + \theta^2 x^{14} + \theta x^{10} + x^{9} + \theta^2 x^{5} + \theta x^{4} \\
\theta x^{16} + x^{15} + \theta^2 x^{11} + \theta x^{10} + x^{6} + \theta^2 x^{5} & \theta x^{16} + \theta^2 x^{14} + \theta^2 x^{11} + x^{9} + x^{6} + \theta x^{4}
\end{pmatrix}.
\end{split}
\end{equation*}
First, by Definition \ref{Def_of_B}, 
\begin{equation*}
\mathfrak{B}_\mathbf{G}=\begin{pmatrix}
x^{12} + \theta x^{10} + x^{9} + x^{6} + \theta^2 x^{5} + x^{3} & \theta^2 x^{14} + \theta x^{10} + x^{9} + \theta^2 x^{5} + \theta x^{4} + 1 \\
\theta^2 x^{11} + \theta x^{10} + x^{6} + \theta^2 x^{5} + \theta x + 1 & \theta^2 x^{14} + \theta^2 x^{11} + x^{9} + x^{6} + \theta x^{4} + \theta x
\end{pmatrix}.
\end{equation*}
Second, we find the reduced GPM $\mathfrak{G}_\mathbf{G}$ of $\mathcal{Q}_\mathbf{G}$ by reducing the polynomial matrix given by \eqref{long_GPM} to its Hermite normal form, namely, 
\begin{equation*}
\mathfrak{G}_\mathbf{G}=\begin{pmatrix}
x^{9} + \theta x^{7} + x^{6} + x^{3} + \theta^2 x^{2} + 1 & x^{12} + \theta^2 x^{11} + \theta x^{7} + x^{6} + \theta^2 x^{2} + \theta x \\
0 & x^{15} + 1 
\end{pmatrix}. 
\end{equation*}
The polynomial matrix that satisfies the identical equation $\mathfrak{A}_\mathbf{G} \mathfrak{G}_\mathbf{G}=(x^{15}+1)\mathbf{I}_2$ of $\mathfrak{G}_\mathbf{G}$ is (see \eqref{identicalQC})
\begin{equation*}
\mathfrak{A}_\mathbf{G}=\begin{pmatrix}
x^{6} + \theta x^{4} + x^{3} + \theta^2 x^{2} + 1 & x^{3} + \theta^2 x^{2} + \theta x \\
0 & 1
\end{pmatrix}.
\end{equation*}
Then $\mathrm{dim}\left(h_\kappa\left(\mathcal{C}\right) \right)=1$ by Theorem \ref{theorem1} because $\mathrm{dim}\left(\mathcal{Q}_\mathbf{G} \right)=6$ by \eqref{Dim_QC}. Third, we use $\mathfrak{A}_\mathbf{G}$ to get a GPM $\mathfrak{H}_\mathbf{G}$ for $\mathcal{Q}_\mathbf{G}^{\perp_\kappa}$ via the construction of Theorem \ref{GPMforDual}. In fact,
\begin{equation*}\begin{split}
\mathfrak{H}_\mathbf{G}&= \sigma^{e-\kappa}\left(x^{15}\ \mathfrak{A}_\mathbf{G} \left(\frac{1}{x}\right) \mathrm{diag}\left[x^{-9}, x^{-15}\right]\right)^t \pmod{x^{15}-1}\\
&=\begin{pmatrix}
x^{6} + \theta x^{4} + x^{3} + \theta^2 x^{2} + 1 & 0 \\
\theta^2 x^{14} + \theta x^{13} + x^{12} & 1
\end{pmatrix}.
\end{split}\end{equation*}
By Theorem \ref{main_theorem}, $\mathfrak{H}_\mathbf{G} \mathbf{G}$ is a GPM for $h_\kappa\left(\mathcal{C}\right)$. The reduced GPM of $h_\kappa\left(\mathcal{C}\right)$ is the Hermite normal form of $\mathfrak{H}_\mathbf{G} \mathbf{G}$ given as
\begin{equation*}
\begin{pmatrix}
x^{2} + \theta^2 x + \theta & \theta x^{4} + x^{3} + \theta^2 x^{2} + \theta x + 1 \\
0 & x^{5} + \theta
\end{pmatrix}.
\end{equation*}
\end{example}

We conclude this section with another example of different parameters.
\begin{example}
Consider the $\left(2\theta^2,1,2\right)$-MT code $\mathcal{C}$ over $\mathbb{F}_9$ of block lengths $\left(3,4,6\right)$ with the reduced GPM
\begin{equation*}
\mathbf{G}=\begin{pmatrix}
1 & 0 & \theta^6 x^{3} + \theta^2 x^{2} + \theta^3 x \\
0 & x + 2 & \theta^5 x^{3} + 2 x^{2} + \theta^6 x + \theta \\
0 & 0 & x^{4} + \theta^2 x^{3} + \theta^2 x + 2
\end{pmatrix}
\end{equation*}
where $\theta^2+2\theta+2=0$. According to \cite{Grassl}, $\mathcal{C}$ is near-optimal because $\mathrm{dim}\left(\mathcal{C}\right)=8$ and $d_{\mathrm{min}}=4$. We determine the $1$-Galois hull of $\mathcal{C}$ as follows. By calculations similar to those made in Example \ref{MDS}, the associated QC code $\mathcal{Q}_\mathbf{G}$ is of length $N\ell=36$, index $3$, and reduced GPM $\mathfrak{G}_\mathbf{G}=\left[\mathfrak{g}_{i,j}\right]$ where
\begin{center}
\begin{tabular}{||c|c||} 
 \hline
 $\left(i,j\right)$ & $\mathfrak{g}_{i,j}$ \\
 \hline\hline 
 $\left(1,1\right)$ & $x^{8} + x^{6} + 2 x^{2} + 2$ \\ 
 \hline
 $\left(1,2\right)$ & $\theta^6 x^{7} + \theta^7 x^{6} + 2 x^{5} + x^{4} + 2 x^{3} + x^{2} + \theta x + \theta^6$ \\
 \hline
 $\left(1,3\right)$ &  $x^{11} + x^{10} + \theta^7 x^{9} + \theta^2 x^{8} + \theta^2 x^{7} + \theta x^{6} + 2 x^{5} + 2 x^{4} + \theta^3 x^{3} + \theta^6 x^{2} + \theta^6 x + \theta^5$ \\
 \hline
 $\left(2,1\right)$ & $0$ \\
 \hline
 $\left(2,2\right)$ & $x^{9} + 2 x^{8} + x^{5} + 2 x^{4} + x + 2$ \\
 \hline
 $\left(2,3\right)$ & $0$ \\
 \hline
 $\left(3,1\right)$ & $0$ \\
 \hline
 $\left(3,2\right)$ & $0$ \\
 \hline
 $\left(3,3\right)$ & $x^{12} + 2$ \\  
 \hline
\end{tabular}
\end{center}
The polynomial matrix $\mathfrak{A}_\mathbf{G}$ such that $\mathfrak{A}_\mathbf{G} \mathfrak{G}_\mathbf{G}=(x^{12}-1)\mathbf{I}_3$ is given by
\begin{equation*}
\mathfrak{A}_\mathbf{G}=\begin{pmatrix}
x^{4} + 2 x^{2} + 1 & \theta^2 x^{2} + 2 x + \theta^6 & 2 x^{3} + 2 x^{2} + \theta^6 x + \theta^5 \\
0 & x^{3} + x^{2} + x + 1 & 0 \\
0 & 0 & 1
\end{pmatrix}.
\end{equation*}
Theorem \ref{GPMforDual} constructs the following GPM for the $1$-Galois dual of $\mathcal{Q}_\mathbf{G}$:
\begin{equation*}\begin{split}
\mathfrak{H}_\mathbf{G}&= \sigma^{e-\kappa}\left( x^{12}\ \mathfrak{A}_\mathbf{G} \left(\frac{1}{x}\right) \mathrm{diag}\left[x^{-8},x^{-9},x^{-12}\right]\right)^t \pmod{x^{12}-1}\\
&=\begin{pmatrix}
x^{4} + 2 x^{2} + 1 & 0 & 0 \\
\theta^2 x^{3} + 2 x^{2} + \theta^6 x & x^{3} + x^{2} + x + 1 & 0 \\
\theta^2 x^{11} + 2 x^{10} + 2 x^{9} + \theta^7 & 0 & 1
\end{pmatrix}.
\end{split}\end{equation*}
By Theorem \ref{main_theorem}, $\mathfrak{H}_\mathbf{G} \mathbf{G}$ is a GPM for $h_\kappa\left(\mathcal{C}\right)$ whose Hermite normal form is
\begin{equation*}
\begin{pmatrix}
x^{2} + \theta^2 x + 2 & 0 & x^{5} + \theta^2 x^{4} + 2 x^{3} + \theta^6 x^{2} + x + \theta^2 \\
0 & x^{4} + 2 & 0 \\
0 & 0 & x^{6} + 1
\end{pmatrix}.
\end{equation*}
\end{example}

\section{Conclusion}
\label{conclusion}
For a MT code $\mathcal{C}$ with a given GPM $\mathbf{G}$, some formulas have been proven to determine the dimension of the Galois hull of $\mathcal{C}$. Specifically, we associate a QC code $\mathcal{Q}_\mathbf{G}$ with $\mathbf{G}$, where it has been proven that the dimension of $\mathcal{Q}_\mathbf{G}$ is equal to the difference between the dimensions of $\mathcal{C}$ and its Galois hull. In addition, we proved that multiplying a GPM of the Galois dual of $\mathcal{Q}_\mathbf{G}$ and $\mathbf{G}$ produces a GPM for the Galois hull of $\mathcal{C}$. Two special cases were also investigated: self-orthogonal and complementary dual MT codes. 

As a motivation for future work, it might be useful to know whether these results are still valid for MT codes over chain rings. To this end, one can begin by examining the complexity of these results in the simplest subclass of MT codes, namely, the class of cyclic codes over chain rings.




\end{document}